\newcommand{\GF}[1]{{\mathbb F}_{#1}}
\newcommand{\Tr}[2][1]{\mathbf{Tr}_{#1}^{#2}}
\begin{document}
\title{On the Boomerang Spectrum of Power Permutation
$X^{2^{3n}+2^{2n}+2^{n}-1}$ over $\GF{2^{4n}}$ and Extraction of  Optimal Uniformity Boomerang Functions}
\author{Kwang Ho Kim\inst{1,2} \and Sihem Mesnager\inst{3} \and Ye Bong Kim\inst{4}}

\institute{Institute of Mathematics, State Academy of Sciences,
Pyongyang, Democratic People's Republic of Korea\\
\and PGItech Corp., Pyongyang, Democratic People's Republic of Korea\\ \and Department of Mathematics, University of Paris VIII, F-93526 Saint-Denis, University Sorbonne Paris Cit\'e, LAGA, UMR 7539, CNRS, 93430 Villetaneuse and Telecom Paris, Polytechnic Institute of Paris, 91120 Palaiseau, France.\\
\email{smesnager@univ-paris8.fr}\\
\and O Jung Hup Chongjin University of Education, Chongjin,
Democratic People's Republic of Korea} \maketitle

\begin{abstract} 
A substitution box (S-box) in a symmetric primitive is a mapping $F$ that takes $k$ binary inputs and whose image is a binary $m$-tuple for some positive integers $k$ and $m$, which is usually the only nonlinear element of the most modern block ciphers. Therefore, employing S-boxes with good cryptographic properties to resist various attacks is significant. For power permutation $F$ over finite field $\GF{2^k}$, the multiset of
 values $\beta_F(1,b)=\#\{x\in \GF{2^k}\mid
F^{-1}(F(x)+b)+F^{-1}(F(x+1)+b)=1\}$  for $b\in \GF{2^k}$ is called
the boomerang spectrum of $F$. The maximum value in the boomerang
spectrum is called boomerang uniformity. This paper determines the
boomerang spectrum of the power permutation
$X^{2^{3n}+2^{2n}+2^{n}-1}$ over $\GF{2^{4n}}$. The boomerang
uniformity of that power permutation is $3(2^{2n}-2^n)$. However, on
a large subset  $\{b\in \GF{2^{4n}}\mid \mathbf{Tr}_n^{4n}(b)\neq 0\}$ of $\GF{2^{4n}}$
of cardinality $2^{4n}-2^{3n}$ (where $ \mathbf{Tr}_n^{4n}$ is the (relative) trace function from $\GF{2^{4n}}$ to $\GF{2^{n}}$), we prove that  the studied function $F$  achieves the optimal
boomerang uniformity $2$.
 It is known that obtaining such functions is a challenging problem.
 More importantly, the set
of $b$'s giving this value is explicitly determined for any value in the boomerang spectrum.

\end{abstract}

\noindent\textbf{Keywords:} Finite field  $\cdot$ Equation $\cdot$ Power function  $\cdot$ Polynomial $\cdot$ APN function $\cdot$ Differential Uniformity $\cdot$ Boomerang spectrum.\\
\noindent\textbf{Mathematics Subject Classification:} 11D04, 12E05, 12E12.

\section{Introduction}
Let $n$ be a positive integer and $\mathbb F_{2^n}$ be the finite field of order $2^n$. We denote by $\mathbb F_{2^n}^*$ the multiplicative cyclic group of non-zero elements of $\mathbb F_{2^n}$.  Vectorial (or multi-output) Boolean functions (that is $(n,m)$-functions from $\mathbb{F}_{2^n}$ to $\mathbb{F}_{2^m}$, where $n$ and $m$ are two positive integers) are widely applied to block ciphers' design in cryptography. When they are used in symmetric cryptography, vectorial Boolean functions correspond to substitution boxes (for short, \emph{S-boxes}). They are fundamental parts of block ciphers and an essential component of symmetric key algorithms that perform the substitution. Block ciphers are cornerstones of our cryptographic landscape today and are used to ensure security for a significant fraction of our daily communication. Their design and analysis are well advanced, and with today's knowledge designing a secure block cipher is a problem that is largely considered solved. The reader can consult the recent book  \cite{Carlet-book-2021}. Particularly, $(n,n)$-vectorial functions, which are bijective, have an extreme interest, especially in symmetric cryptography. Typically, block ciphers use a permutation as an S-box during the encryption process, while the compositional inverse of the S-box is used during the decryption process. Many block ciphers use permutations $F$ defined over the finite field $\mathbb F_{2^n}$ to itself. When permutations $F$ are used as S-boxes inside a block cipher, the differential uniformity $\delta_F$ of a permutation $F$ (used as an S-box inside a cryptosystem) measures the resistance of the block cipher against the differential cryptanalysis. The differential uniformity (see \cite{Nyberg1994}) of a vectorial Boolean function $F: \mathbb{F}_{2^n}\rightarrow \mathbb{F}_{2^n}$ is defined by
$
\delta_F=\max_{a,b\in \mathbb{F}_{2^n}, a\neq 0}\verb"DDT"_F(a,b),
$
where $\verb"DDT"_F(a,b)$ is the entry at $(a,b)\in \left(\mathbb{F}_{2^n}\right)^2$ of the difference distribution table
$
\verb"DDT"_F(a,b)=\#\{x\in \mathbb{F}_{2^n},\ F(x+a)+F(x)=b\}.
$

When $F$ is used as an S-box inside a cryptosystem, the smaller the value $\delta_F$ is, the better $F$ to the resistance against differential attack. Typically, the optimal functions satisfy $\delta_F=2$ and are called almost perfect nonlinear (APN).  Another important cryptanalytical technique on block ciphers is the boomerang attack, introduced by Wagner~\cite{W99} in 1999, a differential cryptanalysis variant. Sometimes,  the boomerang attack is suitable to attack a cipher when no significant differential probability exists for the whole cipher. The boomerang uniformity $\beta_{F}$, (a notion introduced by Boura and Canteaut in \cite{BC18}),
measures its contribution to
the resistance to the boomerang attack, which is a powerful
cryptanalysis technique introduced by Wagner in \cite{W99} against
block ciphers involving S-boxes and considered as an extension of
the differential attack \cite{BS91}. The smaller $\beta_{F}$ is the
better is the resistance of $F$ to this attack.  The optimal value is $2$. It seems much harder to construct permutation functions with $2$-uniform boomerang uniformity. Over binary field $\GF{2^n}$, the best resistance belongs to functions with
$\beta_{F}=2$, these APN functions.

For odd values of $n$, there are known families of APN permutations.
While, for $n$ even, no APN permutation exists for $n = 4$ and, up
to some equivalence, there exists only one example of APN
permutation over $\GF{2^6}$ (\cite{Dillon10}), and with respect to
the affine equivalence (for which the boomerang uniformity is
invariant), these known APN permutations can be divided into four affine
equivalence classes (\cite{Carlet-book-2021}). The existence of more APN permutations
remains a big open problem. One of the modern and important applications of designing permutation polynomials over binary finite fields in the domain of vectorial Boolean functions in the context of block ciphers in symmetric cryptography is the constructions of permutations with boomerang uniformity $4$. The high importance of such families of permutations for real applications comes from the difficulty of finding APN permutations in even dimension (known as the Big APN Problem), making researchers contented with those permutations having boomerang uniformity $4$. APN permutations offer maximal resistance to differential and boomerang attacks, but there are extremely difficult to construct despite many efforts and recent advances.

 As a particular class of functions over finite fields of
characteristic $2$, power functions, namely, monomial functions, have
been extensively studied in the last decades due to their simple
algebraic form and lower implementation cost in hardware
environment.  Their particular algebraic structure makes determining their differential properties easier to handle.

Consequently, it  is  pretty interesting to study the boomerang spectrums
of none-APN power permutations. However, as of now, there exist only very few families of none-APN
power permutations over $\GF{2^n}$ with known boomerang spectrum,
which are summarized in Table~\ref{table1}.
\begin{table}
\begin{center}
\caption{None-APN power permutations $F(X)=X^d$ over $\GF{2^n}$ with
known boomerang spectrum and their boomerang
uniformities}\label{table1}
\begin{tabular}{|c|c|c|c|c|}\hline
No. &$d$                   &Condition                 &$\beta_{F}$             &Refs\\
\hline
1   &$2^{n}-2$             &any $n$                   &$2\leq \beta_{F}\leq6$  &\cite{BC18,JLLQ22,EM22}\\
\hline
2   &$2^{k}+1$             &$s=\gcd(k,n)$             &$2^s$ or $2^s(2^s-1)$   &\cite{BC18,EM22,HPS22}\\
\hline
3   &$2^{k+1}-1$           &$n=2k, k>1$               &$2^k+2$                 &\cite{YZL22}\\
\hline
4   &$2^{3k}+2^{2k}+2^k-1$ &$n=4k$                    &$3(2^{2k}-2^k)$         &This paper\\
\hline
\end{tabular}
\end{center}
\end{table}

Very recently, Budaghyan, Calderini, Carlet, Davidova and,  Kaleyski (\cite{BCCDK22}) have investigated on
the differential spectrum of power functions $F(X)=X^d$ with
$d=\sum_{i=1}^{k-1}2^{in}-1$ over $\GF{2^{kn}}$. This class of power
functions includes some famous functions as special cases. When
$n=1$ and $d=\sum_{i=1}^{k-1}2^{i}-1=(2^k-1)-2$, $F(X)=X^{d}$ is
equivalent to the well-known inverse function over $\GF{2^k}$, which
has been widely used in practical cryptosystems in symmetric cryptography. If $k=3$, then
$F(X)=X^{2^{2n}+2^n-1}$ is equivalent to $X^{2^{2n}-2^n+1}$ over
$\GF{2^{3n}}$, which is a Kasami function \cite{Kasami71}. If $k=5$,
then $F(X)=X^{2^{4n}+2^{3n}+2^{2n}+2^n-1}$ over $\GF{2^{5n}}$ is
precisely the Dobbertin power permutation \cite{Dobbertin01}. For
$k=2$, the differential property of $F(X)=X^d$ has been studied in
\cite{BCC11,Budaghyan05}. For $k=4$, the differential spectrum of
$F(X)=X^d$ has been determined in
\cite{KM22,KM22-1,LWZT20,LWZT22,TLWZTJ22}.

This paper aims to describe explicitly the boomerang spectrum of
the power permutation $X^{2^{3n}+2^{2n}+2^{n}-1}$ over
$\GF{2^{4n}}$. It is organized as follows. In Section \ref{main-result}, we
present the boomerang spectrum of the power permutation
$X^{2^{3n}+2^{2n}+2^n-1}$ over $\GF{2^n}$ in
Theorem~\ref{maintheorem}. In Subsection \ref{sec:decomp-boom-unif},
we state a decomposition formula on the boomerang uniformity of a
function $F$ on, which strongly relies upon our proof. In
Subsection~\ref{sec:toolkit}, we give basic results that we shall need
in this paper. Finally, in Section~\ref{sec:proof-main-result}, we
give the proof of Theorem~\ref{maintheorem}. Computer experiments have also checked our main results for small values of $n$.

\section{Main Theorem}\label{main-result}
This ar
Let $n$ be a positive integer.  We begin by giving some notation that we
shall use in this paper.
\begin{itemize}
\item $\#\Omega$ is the cardinality of a given finite set $\Omega$.
\item Let $Q$ be a power of $2$,
  $\mathbb F_Q$ denotes a finite field
  of cardinality $Q$.
\item $\GF{Q}^*:=\GF{Q}\setminus \{0\}$.
\item $q=2^n$.  
\item Let $k$ and $l$ be two integers such that $l$ is a divisor of
  $k$. Define two functions $\mathbf{Nr}_l^k$ and $\mathbf{Tr}_l^k$
  from $\GF{2^k}$ to $\GF{2^l}$ by
  $$\mathbf{Nr}_l^k(x):=\prod_{i=0}^{\frac kl-1}x^{2^{li}},\quad \mathbf{Tr}_l^k(x):=\sum_{i=0}^{\frac kl-1}x^{2^{li}}.$$
  Each of these functions maps $\GF{2^k}$ onto
$\GF{2^l}$, then called norm and trace, respectively.

\item $\mathfrak{F}_{2^m,i}:=\{\mathbf{Tr}_1^m(x)=i, x\in \GF{2^m}\}$.
\item $\mu_{m}:=\{x\in \GF{q^4}\mid x^{m}=1\}$
\item $\mu_m^\star:=\mu_m\setminus \{1\}$.
\item $\mathfrak{S}_2:=\left\{b\in \GF{q^4}\mid
    \mathbf{Tr}_n^{4n}(b)\neq 0,  \mathbf{Tr}_1^{n}\left(
\frac{\mathbf{N}_n^{4n}(b^{q+1}+1)}{\left[(b+b^{q^2})\mathbf{Tr}_n^{4n}(b)\right]^2}\right)
=\frac{(b^{q+1}+1)^{q^2+1}}{(b+b^{q^2})^{q+1}}\right\}$.
\item $\mathfrak{S}_2':=\left\{b\in \GF{q^4}\mid
     \mathbf{Tr}_n^{4n}(b)\neq 0, \mathbf{Tr}_1^{n}\left(
\frac{\mathbf{N}_n^{4n}(b^{q+1}+1)}{\left[(b+b^{q^2})\mathbf{Tr}_n^{4n}(b)\right]^2}\right)
=\frac{(b^{q+1}+1)^{q^2+1}}{(b+b^{q^2})^{q+1}}+1\right\}$.
\end{itemize}
In this paper, we completely determines the boomerang spectrum  (or Boomerang Connectivity Table) $\beta_F(1,b)$ 
of the
power permutation $F(X)=X^{q^3+q^2+q-1}$ over $\GF{q^4}$.

\begin{theorem}\label{maintheorem} For the power permutation $F(X)=X^{q^3+q^2+q-1}$ over
$\GF{q^4}$, it holds
\[\beta_F(1,b)=
\begin{cases} 3q(q-1), \text{ if $b\in \mu_{q+1}\setminus\{1\}$,}\\
2q^2, \text{ if $b\in \GF{q}\setminus\GF{2}$,}\\
2q^2-3q, \text{ if $b\in \GF{q^2}\setminus\{\GF{q}\cup\mu_{q+1}$\},}\\
q^2, \text{ if $b=1$,}\\
q^2-2q, \text{ if $b\in \GF{q^4}\setminus \{\GF{q^2}\cup \mathfrak{S}_2 \cup \mathfrak{S}_2'\}$,}\\
2, \text{ if $b\in \mathfrak{S}_2$,}\\
0, \text{ if $b\in \mathfrak{S}_2'$.}\\
\end{cases}\]
\end{theorem}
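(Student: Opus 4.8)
The plan is to unwind the boomerang-spectrum condition $F^{-1}(F(x)+b)+F^{-1}(F(x+1)+b)=1$ into a system of polynomial equations over $\GF{q^4}$ and count its solutions. First I would replace $F^{-1}$ by an explicit power map: since $F(X)=X^d$ with $d=q^3+q^2+q-1$, one computes $d^{-1}\bmod (q^4-1)$, and the well-known identity $(q^3+q^2+q-1)(q-1)=q^4-1-(2q^2-2)$ or an equivalent manipulation shows that $F^{-1}(X)=X^{e}$ for an explicit exponent $e$; alternatively one works with the defining relation $y=F(x)\iff y^{q-1}=x^{q^3+q^2+q-1}\cdot\!\ldots$, clearing denominators to keep everything polynomial. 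Writing $y_1=F(x)+b$ and $y_2=F(x+1)+b$, the condition becomes $F^{-1}(y_1)+F^{-1}(y_2)=1$, which after raising to a suitable power and using $F^{-1}(y)^{\,?}=y$ turns into a polynomial relation $P(x,b)=0$. The substitution $x\mapsto x+?$ and homogenization should reduce the count of $x$ for fixed $b$ to counting roots of a low-degree polynomial whose coefficients are rational functions of $b$.

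Next I would invoke the decomposition formula for $\beta_F(1,b)$ stated in Subsection~\ref{sec:decomp-boom-unif}: this expresses $\beta_F(1,b)$ as a sum indexed by the intermediate subfields $\GF{q},\GF{q^2}\subset\GF{q^4}$, separating the contribution of solutions $x$ lying in (or mapping into) each subfield. Because $d=q^3+q^2+q-1$ is built from the Frobenius tower, $F$ restricted to $\GF{q^2}$ and to $\GF{q}$ again acts as a power permutation (of inverse type on $\GF{q^2}$, essentially trivial on $\GF{q}$), so the subfield contributions are governed by the already-known boomerang spectrum of the inverse function (Row~1 of Table~\ref{table1}) and by elementary counts. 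This is exactly why the values $3q(q-1)$, $2q^2$, $2q^2-3q$, $q^2$ organize themselves along the chain $\mu_{q+1}^\star$, $\GF{q}\setminus\GF2$, $\GF{q^2}\setminus(\GF{q}\cup\mu_{q+1})$, $\{1\}$: each stratum of $b$ sees a different bundle of subfield solutions. I would treat these four cases first, as they are the more structured and less computational ones.

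The generic case $b\in\GF{q^4}\setminus\GF{q^2}$ is the heart of the matter. Here no solution $x$ lies in a proper subfield, and the count is $2$, $0$, or $q^2-2q$ according to whether $b\in\mathfrak S_2$, $\mathfrak S_2'$, or neither. I expect the polynomial $P(x,b)=0$ to factor, after the right change of variable, as a product of a "large" factor of degree about $q^2-2q$ that always contributes (its roots avoiding subfields for generic $b$) and a quadratic (or a pair of linear) factor whose solvability is controlled by a single trace condition. That trace condition is precisely the absolute-trace equality appearing in the definitions of $\mathfrak S_2$ and $\mathfrak S_2'$: $\mathbf{Tr}_1^n\!\big(\mathbf N_n^{4n}(b^{q+1}+1)/[(b+b^{q^2})\mathbf{Tr}_n^{4n}(b)]^2\big)$ compared with $(b^{q+1}+1)^{q^2+1}/(b+b^{q^2})^{q+1}$. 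So the quadratic contributes $2$ extra roots exactly on $\mathfrak S_2$, contributes $0$ (it is the *whole* solution set but has no root) on $\mathfrak S_2'$ — wait, rather: on $\mathfrak S_2'$ the "large" factor degenerates or its roots collapse into $\GF{q^2}$, leaving genuinely $0$ solutions, and I would need the Hilbert-90/trace criterion for solvability of the relevant Artin–Schreier-type equation $z^2+z=c$ to pin down which side of the dichotomy each $b$ falls on. The main obstacle will be exactly this: carrying out the change of variables cleanly enough that $P(x,b)$ visibly splits, controlling the degree of the large factor to be $q^2-2q$ (accounting for the $2q$ "lost" roots that sink into subfields or coincide), and matching the resulting solvability condition to the explicitly written normalized trace expressions in $\mathfrak S_2,\mathfrak S_2'$. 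The basic results collected in Subsection~\ref{sec:toolkit} — presumably counts of roots of low-degree polynomials and trace/norm solvability criteria over $\GF{q^4}/\GF{q^2}/\GF q/\GF2$ — should supply every ingredient, but threading them together without losing a factor of $q$ somewhere is the delicate part.
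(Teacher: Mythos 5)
Your proposal misidentifies the two structural pillars of the argument, so as written it would not go through. First, the decomposition of Subsection~\ref{sec:decomp-boom-unif} is not indexed by the intermediate subfields $\GF{q}\subset\GF{q^2}\subset\GF{q^4}$: it is the splitting $\beta_F(1,b)=\sum_{c}\tilde{\beta}_F(1,b,c)$ over the possible values $c$ of the derivative $F(X)+F(X+1)$, applied to the system $F(X)+F(Y)=b$, $F(X)+F(X+1)=F(Y)+F(Y+1)$ --- so $F^{-1}$ never needs to be computed, and your plan to make $F^{-1}$ explicit and extract a polynomial $P(x,b)$ is both unnecessary and, at degree of order $q^3$, unworkable. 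The indispensable external input you are missing is Theorem~\ref{BaseThm} (from \cite{KM22-1}): the equation $F(X)+F(X+1)=c$ has solutions only for $c=1$ (the solution set being exactly $\GF{q^2}$), for $c\in\mu_{q+1}^\star$ ($q^2-q$ solutions each) and for $c\in\mathfrak S_2$ ($2$ solutions each). This is what collapses the sum over $c$ to three terms, and it is the sole source of the $2$ versus $0$ dichotomy on $\mathfrak S_2$ versus $\mathfrak S_2'$: for $b\notin\GF{q^2}$ with $\mathbf{Tr}_n^{4n}(b)\neq0$ the $\mu_{q+1}^\star$-contribution vanishes (Proposition~\ref{tr0}) and all that survives is $\tilde{\beta}_F(1,b,b)=2$ when $b\in\mathfrak S_2$; no factorization of a big polynomial into a ``large factor plus a quadratic'' is involved.

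Second, your subfield heuristics are incorrect: on $\GF{q^2}$ the map $X^{q^3+q^2+q-1}$ reduces to $X^{2q}$ (and to $X^2$ on $\GF{q}$), i.e.\ to linearized permutations, not to anything of inverse type, and Row~1 of Table~\ref{table1} plays no role. The values $3q(q-1)$, $2q^2$, $2q^2-3q$ and $q^2-2q$ come from counting, for each stratum of $b$, the solutions of the four-variable system obtained by parametrizing the roots of $F(X)+F(X+1)=c$, $c\in\mu_{q+1}^\star$, as $x=\frac{1}{1+zt}$ with $(z,t)\in\mu_{q-1}\times\mu_{q^2+1}^\star$, passing to $M=z+z^{-1}$, $T=t+t^{-1}$, and applying the trace-counting Lemmas~\ref{Scard} and~\ref{Scard_ext}; the final case $b\in\GF{q^4}\setminus\{\GF{q^2}\cup\mathfrak S_2\cup\mathfrak S_2'\}$ moreover needs the global double-counting identity of Proposition~\ref{f0} to upgrade the lower bound $q(q-2)$ to an equality. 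None of these mechanisms appears in your outline, so the proposal has genuine gaps in every case except, arguably, $b=1$.
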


\section{Preliminaries}\label{sec-prel}

\subsection{Decomposition of the boomerang uniformity}\label{sec:decomp-boom-unif}

 Cid,  Huang, Peyrin, Sasaki, and  Song (\cite{Cid18}) introduced the concept of Boomerang
Connectivity Table (BCT) for a permutation $F$ over $\GF{2^n}$. Afterwards,   in \cite{BC18}, Boura and Canteaut introduced the notion of boomerang uniformity.

\begin{definition} Let $F$ be a permutation over  $\GF{Q}$ and
$a,b$ in $\GF{Q}$.  Define
$$\beta_F(a,b):=\#\{x\in \GF{Q}\mid F^{-1}(F(x)+b)+F^{-1}(F(x+a)+b)=a\}.$$
 The multiset
$$\mathfrak{B}=\{\beta_F(a,b)\}_{a,b\in \GF{Q}^*}$$ is called the Boomerang Connectivity Table or the boomerang spectrum of $F$.
 The maximum value in the  boomerang spectrum
$$\max_{a,b\in \GF{Q}^*}\beta_F(a,b)$$ is called the boomerang
uniformity of $F$ over $\GF{Q}$.\qed
\end{definition}

As noticed by Li, Qu, Sun,  and Li. presented in \cite{LQSL19}, $\beta_F(a,b)$ is equal to
the number of solutions $(x,y)\in \GF{Q}^2$ to the system
\begin{equation*}
\begin{cases}
F(X)+F(Y)=b\\ F(X+a)+F(Y+a)=b.
\end{cases}
\end{equation*}
Consequently, using this equivalent definition for the boomerang spectrum, it is possible to consider functions that are not
permutations. Further, note that this system can also be rewritten as
\begin{equation}\label{iniBU_eq}
\begin{cases}
F(X)+F(Y)=b\\ F(X)+F(X+a)=F(Y)+F(Y+a),
\end{cases}
\end{equation}
which we will consider as the basic system of equations for computing
$\beta_F(a,b)$ in this paper.

In the particular case where $F$ is a power function, it is well known
that $ \beta_F(a,b) = {\beta}_F\big(1,b(F(a))^{-1}\big) $ for any $a$
and $b$ in $\GF{Q}^*$. Therefore,
$$\mathfrak{B}=\{\beta_F(1,b)\}_{b\in \GF{Q^*}},$$
The computation of  $\beta_F(1,b)$ can be divided into sub-computations, introducing
$\tilde{\beta}_F(1,b,c)$ as the number of the solutions $(x,y)$ in
$\GF{q^4}^2$ of the equation system
\begin{equation}\label{BU_eq}
\begin{cases}
F(X)+F(Y)=b\\ F(X)+F(X+1)=F(Y)+F(Y+1)=c.
\end{cases}
\end{equation}
Then, it holds 
\begin{proposition}\label{prop-decomposition}
  For any $b$ in $\GF{Q}^*$, it holds
\[
\beta_F(1,b)=\sum_{c\in \GF{q^4}} \tilde{\beta}_F(1,b,c).
\]

\end{proposition}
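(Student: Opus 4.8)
The plan is to show that for each fixed $b\in\GF{Q}^*$ the solution set of the basic boomerang system \eqref{iniBU_eq} is partitioned according to the common value $c$ of the two (equal) difference-expressions appearing in its second equation. Concretely, I would start from the equivalent system \eqref{iniBU_eq}, namely
\[
\begin{cases}
F(X)+F(Y)=b\\ F(X)+F(X+1)=F(Y)+F(Y+1),
\end{cases}
\]
and observe that the second equation merely asserts that the two quantities $F(X)+F(X+1)$ and $F(Y)+F(Y+1)$ coincide, without constraining their common value. Hence $\GF{q^4}^2$ is the disjoint union, over $c\in\GF{q^4}$, of the sets $S_c$ of pairs $(x,y)$ for which $F(x)+F(x+1)=F(y)+F(y+1)=c$ \emph{and} $F(x)+F(y)=b$; by definition $\#S_c=\tilde\beta_F(1,b,c)$, and $\beta_F(1,b)=\#\bigcup_{c} S_c$.

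The key steps, in order, are: (i) recall from the discussion preceding the statement that $\beta_F(1,b)$ equals the number of $(x,y)\in\GF{q^4}^2$ satisfying \eqref{iniBU_eq}; (ii) for a solution $(x,y)$ of \eqref{iniBU_eq}, set $c:=F(x)+F(x+1)$, which by the second equation also equals $F(y)+F(y+1)$, so $(x,y)$ solves \eqref{BU_eq} for this particular $c$; (iii) conversely, any $(x,y)$ solving \eqref{BU_eq} for some $c$ automatically solves \eqref{iniBU_eq}, since the second line of \eqref{BU_eq} forces $F(x)+F(x+1)=F(y)+F(y+1)$; (iv) note that the value $c$ attached to a given solution is uniquely determined, so the sets $S_c$ are pairwise disjoint and their union is exactly the solution set of \eqref{iniBU_eq}; (v) take cardinalities and use $\#S_c=\tilde\beta_F(1,b,c)$ together with the fact that all but finitely many $S_c$ are empty (the sum is effectively finite, or one works in the finite field where the index set $\GF{q^4}$ is itself finite), to conclude $\beta_F(1,b)=\sum_{c\in\GF{q^4}}\tilde\beta_F(1,b,c)$.

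There is essentially no obstacle here: the statement is a bookkeeping identity, and the only thing to be careful about is that the map $(x,y)\mapsto c=F(x)+F(x+1)$ is well-defined on the solution set (which is exactly what the second equation of \eqref{iniBU_eq} guarantees) and that summing disjoint cardinalities is legitimate because the index set $\GF{q^4}$ is finite. If anything deserves a sentence of justification, it is the equivalence between the original system defining $\beta_F(1,b)$ and the rewritten system \eqref{iniBU_eq}, but this has already been recorded in the excerpt, so I would simply cite it. The whole proof should fit in a few lines.
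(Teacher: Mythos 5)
Your proof is correct and is exactly the partition-by-common-difference-value argument the paper has in mind; the paper itself states Proposition~\ref{prop-decomposition} without proof, treating it as immediate from the definitions of $\beta_F(1,b)$ via system~\eqref{iniBU_eq} and of $\tilde{\beta}_F(1,b,c)$ via system~\eqref{BU_eq}. Your write-up simply makes that bookkeeping explicit, and the remark about the sum being effectively finite is moot since the index set $\GF{q^4}$ is itself finite.
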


\subsection{A toolkit}\label{sec:toolkit}
First of all, since
$\gcd(q^4-1,q^3+q^2+q-1)=\gcd((q-1)(q+1)(q^2+1),q^3+q^2+q-1)=1$, the
power function $X^{q^3+q^2+q-1}$ is a permutation over $\GF{q^4}$.

The following theorem is the main result of \cite{KM22,KM22-1}, that
we shall use as the starting point of this paper.
\begin{theorem}[{\cite[Theorem 5]{KM22-1}}]\label{BaseThm}
The number of solutions to the equation
$$X^{q^3+q^2+q-1}+(X+1)^{q^3+q^2+q-1}=b$$ in $\GF{q^4}$
  equals to
  \begin{enumerate}
  \item\label{item:1} $q^2$ if $b=1$,
  \item\label{item:2} $q^2-q$ if $b\in\mu_{q+1}^\star$,
  \item\label{item:3} $2$ if $b\in\mathfrak{S}_2$,
  \item\label{item:3} $0$ if $b\in\GF{q^4}\setminus\{\mathfrak{S}_2\cup \mu_{q+1}^\star\cup \{1\}\}$.
  \end{enumerate}\qed
 \end{theorem}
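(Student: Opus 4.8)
The starting point I would use is that the exponent satisfies $d+2 = q^3+q^2+q+1 = \frac{q^4-1}{q-1}$, so that $X^{d+2}=\mathbf{Nr}_n^{4n}(X)$ for every nonzero $X$ and hence
\[
F(X)=X^{q^3+q^2+q-1}=\frac{\mathbf{Nr}_n^{4n}(X)}{X^2}.
\]
Writing $a:=\mathbf{Nr}_n^{4n}(X)$ and $c:=\mathbf{Nr}_n^{4n}(X+1)$, both of which lie in $\GF{q}$, I would clear denominators in $F(X)+F(X+1)=b$ and use $X^2(X+1)^2=X^4+X^2$ in characteristic two to reduce the equation, for $X\notin\{0,1\}$, to the single relation
\[
bX^4+(b+a+c)X^2+a=0,
\]
a quadratic in $X^2$ whose coefficients are tied to $X$ only through the two subfield quantities $a,c$. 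The values $X\in\{0,1\}$ are handled directly and contribute only to $b=1$.

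The core difficulty is that $a$ and $c$ are high-degree functions of $X$; the plan is to break this coupling by descending through the tower $\GF{q}\subset\GF{q^2}\subset\GF{q^4}$. Setting $s:=\mathbf{Tr}_{2n}^{4n}(X)=X+X^{q^2}$ and $p:=\mathbf{Nr}_{2n}^{4n}(X)=X^{1+q^2}$, both in $\GF{q^2}$, one has $a=p^{1+q}$ and $c=(p+s+1)^{1+q}$. Applying the $q^2$-power Frobenius to $F(X)+F(X+1)=b$ produces the companion equation at $X^{q^2}$ with right-hand side $b^{q^2}$, and forming the sum and the product of the two yields a symmetric system in $(s,p)\in\GF{q^2}^2$:
\[
\mathbf{Tr}_{2n}^{4n}(b)=s^2\bigl(p^{q-1}+(p+s+1)^{q-1}\bigr),\qquad \mathbf{Nr}_{2n}^{4n}(b)=s^{2q}+1+s^2\,p^{q-1}(p+s+1)^{q-1}.
\]
Since $p^{q-1}$ and $(p+s+1)^{q-1}$ lie in $\mu_{q+1}$, this is where the set $\mu_{q+1}$ of the statement enters. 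A solution $X$ corresponds to a solution $(s,p)$ of this system together with a root of $T^2+sT+p$; when $b\notin\GF{q^2}$ exactly one root solves the original equation and the other solves its $q^2$-conjugate, so the correspondence is a bijection, whereas for $b\in\GF{q^2}$ conjugate roots are counted together.

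The listed values are then read off from degeneracies. A direct computation shows that $F$ restricts to $X\mapsto X^{2q}$ on $\GF{q^2}$, whence $F(X)+F(X+1)=X^{2q}+(X^q+1)^2=1$ for \emph{every} $X\in\GF{q^2}$; thus the whole subfield $\GF{q^2}$ (including $0$ and $1$) solves the equation for $b=1$, and I would show there are no further solutions, giving the count $q^2$. For $b\in\mu_{q+1}^\star\subset\GF{q^2}$ there are no solutions inside $\GF{q^2}$ (there $F(X)+F(X+1)\equiv1\neq b$), so all solutions have $s\neq0$ and come in conjugate pairs; solving the system explicitly should yield $q(q-1)/2$ admissible pairs and hence $q^2-q$ solutions. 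For the remaining $b\notin\GF{q^2}$ I would eliminate $p$ between the two relations and reduce the existence of $(s,p)$ to the solvability in $\GF{q}$ of an Artin--Schreier equation $z^2+z=\Lambda(b)$, which is solvable exactly when $\mathbf{Tr}_1^n(\Lambda(b))=0$; this condition, after substituting $a=p^{1+q}$, $c=(p+s+1)^{1+q}$ and rewriting in terms of $b,b^q,b^{q^2},b^{q^3}$, is what I expect to collapse to the defining condition of $\mathfrak{S}_2$, producing $2$ solutions on $\mathfrak{S}_2$ and $0$ elsewhere.

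The step I anticipate as the main obstacle is precisely this last elimination: showing that the obstruction $\mathbf{Tr}_1^n(\Lambda(b))$ simplifies exactly to the trace identity defining $\mathfrak{S}_2$, i.e. that it equals $\mathbf{Tr}_1^n\!\left(\frac{\mathbf{Nr}_n^{4n}(b^{q+1}+1)}{[(b+b^{q^2})\,\mathbf{Tr}_n^{4n}(b)]^2}\right)+\frac{(b^{q+1}+1)^{q^2+1}}{(b+b^{q^2})^{q+1}}$. Controlling the denominators, which is what forces the side condition $\mathbf{Tr}_n^{4n}(b)\neq0$, and verifying that the two resulting solutions are distinct and genuinely lie in $\GF{q^4}$, is the delicate bookkeeping on which the whole count ultimately rests.
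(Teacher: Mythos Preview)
This theorem is not proved in the present paper; it is quoted from \cite{KM22,KM22-1} and used as a black box. From what this paper reveals about those references (the multiplicative decomposition $\GF{q^4}^*=\mu_{q-1}\cdot\mu_{q+1}\cdot\mu_{q^2+1}$, Lemma~\ref{Fcase}, and especially the Remark opening Section~\ref{sec:proof-main-result}), the original argument parametrizes the solutions with $F(X)+F(X+1)=c\in\mu_{q+1}^\star$ as $X=\frac{1}{1+zt}$ for $(z,t)\in\mu_{q-1}\times\mu_{q^2+1}^\star$, using that then $X^{(q-1)(q^2+1)}=c$.

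Your route is genuinely different: an additive descent through the intermediate field $\GF{q^2}$ via $s=\mathbf{Tr}_{2n}^{4n}(X)$ and $p=\mathbf{Nr}_{2n}^{4n}(X)$, starting from the pleasant identity $F(X)=\mathbf{Nr}_n^{4n}(X)/X^2$. Your quartic $bX^4+(b+a+c)X^2+a=0$ is correct, your identification of $\GF{q^2}$ as the full solution set for $b=1$ recovers Lemma~\ref{Fcase} exactly, and your first symmetric relation $\mathbf{Tr}_{2n}^{4n}(b)=s^2\bigl(p^{q-1}+(p+s+1)^{q-1}\bigr)$ checks out. Compared with the multiplicative parametrization, your setup stays closer to the trace/norm language in which $\mathfrak{S}_2$ is defined, which is an advantage in principle.

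The proposal, however, is explicitly incomplete at the decisive step, and the gap is more than bookkeeping. For $b\in\mu_{q+1}^\star$ you assert that the $(s,p)$ system ``should yield $q(q-1)/2$ admissible pairs'' without any mechanism; this is precisely where the multiplicative parametrization in \cite{KM22} earns its keep, and your additive variables do not obviously linearize that count. For generic $b$ your two relations involve the $(q-1)$st powers $p^{q-1}$ and $(p+s+1)^{q-1}$, which live in $\mu_{q+1}$ rather than on a line; eliminating them together with one of the two $\GF{q^2}$-variables to reach a single Artin--Schreier obstruction $z^2+z=\Lambda(b)$ over $\GF{q}$ is not a routine substitution, and nothing in the outline indicates how to perform it. Until that elimination is actually carried out there is no visible path from your system to the specific quantity $\mathbf{Nr}_n^{4n}(b^{q+1}+1)\big/\bigl[(b+b^{q^2})\,\mathbf{Tr}_n^{4n}(b)\bigr]^2$ and the correction $(b^{q+1}+1)^{q^2+1}/(b+b^{q^2})^{q+1}$ that define $\mathfrak{S}_2$. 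The framework is sound and the verified pieces are right, but the heart of the theorem---the dichotomy between $\mathfrak{S}_2$ and its complement---remains entirely open in your proposal.
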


We shall very frequently use the following facts throughout this
paper.

\begin{proposition}[{\cite[Proposition 1]{KM19}}]\label{Aprop}
  \label{prop:decomposition}
  Let $m$ be a positive integer. Then, every element $z$ of
  $\GF{2^m}^*$ is written exactly twice as
  $z=c+\frac 1c$ where $c\in\GF{2^m}^\star$
  if $\mathbf{Tr}_1^m(\frac{1}{z})=0$ and
  $c\in \mu_{2^m+1}^{\star}$ if
  $\mathbf{Tr}_1^m(\frac{1}{z})=1$.
\end{proposition}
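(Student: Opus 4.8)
The plan is to recast the representation $z=c+\frac1c$ as a quadratic equation and count its roots via the classical characteristic-two solvability criterion. Since $z\neq0$ forces $c\neq0$, multiplying through by $c$ shows that $z=c+\frac1c$ is equivalent to
\[
c^2+zc+1=0.
\]
First I would substitute $c=zt$ to bring this monic quadratic into Artin--Schreier form $t^2+t+\frac{1}{z^2}=0$, which is solvable in $\GF{2^m}$ if and only if $\mathbf{Tr}_1^m\!\left(\frac{1}{z^2}\right)=0$. Because the absolute trace is invariant under the Frobenius map $x\mapsto x^2$, we have $\mathbf{Tr}_1^m\!\left(\frac{1}{z^2}\right)=\mathbf{Tr}_1^m\!\left(\frac1z\right)$, so the trace dichotomy in the statement is precisely the dichotomy between the quadratic splitting over $\GF{2^m}$ and being irreducible over it. I would also record two elementary facts used throughout: the two roots of $c^2+zc+1$ have product equal to the constant term $1$ and sum equal to $z$, and they are distinct since $z\neq0$ rules out a repeated root.

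Next I would dispose of the two cases. If $\mathbf{Tr}_1^m(\frac1z)=0$, the quadratic has two distinct roots $c_1,c_2\in\GF{2^m}$, neither equal to $0$ (else the constant term would vanish), so both lie in $\GF{2^m}^\star$; this yields exactly two representations of $z$ with $c\in\GF{2^m}^\star$, and none with $c$ outside $\GF{2^m}$. If instead $\mathbf{Tr}_1^m(\frac1z)=1$, the quadratic is irreducible over $\GF{2^m}$, so its roots $c$ and $c^{2^m}$ are Frobenius-conjugate elements of $\GF{2^{2m}}\setminus\GF{2^m}$, hence distinct. The product-of-roots relation gives $c\cdot c^{2^m}=c^{2^m+1}=1$, so $c\in\mu_{2^m+1}$, and $c\neq1$ because $c=1$ would force $z=0$; thus both roots lie in $\mu_{2^m+1}^\star$ and $z$ has exactly two representations with $c\in\mu_{2^m+1}^\star$. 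For the converse direction I would note that any $c\in\mu_{2^m+1}^\star$ satisfies $\frac1c=c^{2^m}$, whence $c+\frac1c=c+c^{2^m}=\mathbf{Tr}_m^{2m}(c)\in\GF{2^m}$, confirming that the representations produced in this case genuinely land in $\GF{2^m}^\star$.

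Finally I would record the consistency point that the two cases cannot interfere: $\mu_{2^m+1}\cap\GF{2^m}^\star=\{1\}$ since $\gcd(2^m+1,2^m-1)=\gcd(2^m+1,2)=1$, and the excluded value $c=1$ corresponds exactly to $z=0$ in both settings. The only delicate step is the passage through the quadratic solvability criterion---reducing correctly to Artin--Schreier form and invoking the Frobenius-invariance of the trace---together with the identification, in the non-split case, of the two conjugate roots as $(2^m+1)$-th roots of unity through the product-of-roots relation; everything else is routine bookkeeping.
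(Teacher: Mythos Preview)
Your argument is correct. The paper does not actually prove this proposition; it is quoted verbatim from \cite{KM19} and used as a black box, so there is no ``paper's own proof'' to compare against. Your reduction to the quadratic $c^2+zc+1=0$, the Artin--Schreier normalisation via $c=zt$, and the identification of the conjugate roots in the irreducible case with elements of $\mu_{2^m+1}^\star$ through the product-of-roots relation is exactly the standard argument and is complete as written.
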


\begin{lemma}[{\cite[Lemma 4]{KM22}}]
 $\GF{q^4}^*$ can be composed  as $\GF{q^4}^*=\mu_{q-1}\cdot\mu_{q+1}\cdot \mu_{q^2+1}$.
\end{lemma}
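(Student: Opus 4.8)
The plan is to recognize this as a standard Chinese Remainder decomposition of the cyclic group $\GF{q^4}^*$. Since $\GF{q^4}^*$ is cyclic of order $q^4-1=(q-1)(q+1)(q^2+1)$, and each of $q-1$, $q+1$, $q^2+1$ divides $q^4-1$, each set $\mu_m$ with $m\in\{q-1,q+1,q^2+1\}$ is exactly the unique subgroup of $\GF{q^4}^*$ of order $m$ (in a cyclic group the equation $x^m=1$ has precisely $\gcd(m,q^4-1)=m$ solutions, and they form a subgroup). Thus the claim asserts that the setwise product of these three subgroups exhausts the whole group, and I would establish it by exhibiting the multiplication map as a bijection.

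The key arithmetic input is that the three orders are \emph{pairwise coprime}. Because $q=2^n$ is even, each of $q-1$, $q+1$, $q^2+1$ is odd. Moreover $\gcd(q-1,q+1)$ divides $(q+1)-(q-1)=2$, and reducing $q^2+1$ modulo $q-1$ and modulo $q+1$ gives $q^2+1\equiv 2$ in both cases, so $\gcd(q-1,q^2+1)$ and $\gcd(q+1,q^2+1)$ each divide $2$. Since all three integers are odd, every one of these gcd's equals $1$.

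With pairwise coprimality in hand, I would consider the multiplication map
\[
\varphi:\mu_{q-1}\times\mu_{q+1}\times\mu_{q^2+1}\longrightarrow \GF{q^4}^*,\qquad \varphi(x,y,z)=xyz,
\]
which is a group homomorphism since $\GF{q^4}^*$ is abelian. It is injective: if $xyz=1$ then $x=(yz)^{-1}$, so the order of $x$ divides both $q-1$ and $\mathrm{lcm}(q+1,q^2+1)=(q+1)(q^2+1)$; by coprimality this order is $1$, forcing $x=1$, and then $yz=1$ forces $y=z=1$ by the same argument. Finally, the domain of $\varphi$ has cardinality $(q-1)(q+1)(q^2+1)=q^4-1=\#\GF{q^4}^*$, so the injection $\varphi$ is a bijection; in particular its image $\mu_{q-1}\cdot\mu_{q+1}\cdot\mu_{q^2+1}$ equals all of $\GF{q^4}^*$, which is the assertion (and, as a bonus, the decomposition of each element into such a product is unique).

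There is essentially no hard step here: the statement is a routine instance of the CRT isomorphism for cyclic groups, and the only content specific to this setting is the pairwise-coprimality check, which is immediate in characteristic $2$ and is what makes the three cyclotomic-type factors fit together exactly. I expect the only place requiring any care is keeping track that each $\mu_m$ really has order $m$ (rather than a proper divisor), which is guaranteed precisely because $m\mid q^4-1$.
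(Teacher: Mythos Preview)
Your argument is correct and is the standard CRT decomposition of a cyclic group into its pairwise-coprime-order subgroups; the pairwise coprimality check in characteristic $2$ is exactly the point. The paper itself gives no proof for this lemma (it is simply quoted from \cite{KM22}), so there is nothing to compare against, but your approach is precisely what one would expect the original source to do.
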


\begin{lemma}[{\cite[Lemma 5]{KM22}}]\label{Fcase}
It holds true
\[
 \{x\in \GF{q^4}\mid
x^{q^3+q^2+q-1}+(x+1)^{q^3+q^2+q-1}=1\}=\GF{q^2}.
\]
\end{lemma}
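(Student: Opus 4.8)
The plan is to prove the two inclusions separately; write $S$ for the solution set on the left-hand side and set $d:=q^3+q^2+q-1$. For the inclusion $\GF{q^2}\subseteq S$, I would first dispose of the two exceptional points by inspection ($0^d+1^d=0+1=1$ and $1^d+0^d=1$, so $0,1\in S$), and then, for $x\in\GF{q^2}\setminus\{0,1\}$, use that $x^{q^2-1}=(x+1)^{q^2-1}=1$ together with $d\equiv 2q\pmod{q^2-1}$ to obtain
\[
x^d+(x+1)^d=x^{2q}+(x+1)^{2q}=\bigl(x^q+(x+1)^q\bigr)^2=1
\]
in characteristic $2$; hence $\GF{q^2}\subseteq S$.

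For the reverse inclusion $S\subseteq\GF{q^2}$, let $x\in S$ with $x\notin\{0,1\}$ (the two exceptional points already lie in $\GF{q^2}$). The key observation I would use is that for $x\neq0$ one has $x^d=x^{q^3+q^2+q+1}x^{-2}=\mathbf{Nr}_n^{4n}(x)\,x^{-2}$, with $\mathbf{Nr}_n^{4n}(x)\in\GF{q}$. Setting $A:=\mathbf{Nr}_n^{4n}(x)$ and $B:=\mathbf{Nr}_n^{4n}(x+1)$ (both in $\GF{q}$), substituting into the defining equation of $S$, multiplying through by $x^2(x+1)^2$, and using $(x+1)^2=x^2+1$, I would arrive at the polynomial relation $P(x)=0$ with
\[
P(T):=T^4+(1+A+B)\,T^2+A\in\GF{q}[T].
\]
This quartic is a quadratic in $T^2$, so over an algebraic closure it factors as $(T^2+\alpha)(T^2+\beta)=(T+\sqrt\alpha)^2(T+\sqrt\beta)^2$ and thus has at most two distinct roots.

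To finish, I would apply the Frobenius $t\mapsto t^q$: since the coefficients of $P$ lie in $\GF{q}$, they are fixed, so $x,x^q,x^{q^2},x^{q^3}$ are all roots of $P$; therefore $\#\{x,x^q,x^{q^2},x^{q^3}\}\le 2$, i.e. $[\GF{q}(x):\GF{q}]\le 2$, which forces $x\in\GF{q^2}$. Combining the two inclusions gives $S=\GF{q^2}$. (As an alternative to this second half, once $\GF{q^2}\subseteq S$ is known one could simply invoke Theorem~\ref{BaseThm}(\ref{item:1}) to get $\#S=q^2=\#\GF{q^2}$ and conclude by cardinality; but the self-contained argument above is cleaner.) I do not anticipate a genuine obstacle: the only slightly delicate points are the routine clearing of denominators and noticing that the resulting quartic --- being quadratic in $T^2$ over $\GF{q}$ --- has at most two roots, after which the Galois-descent step is immediate.
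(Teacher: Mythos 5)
Your proof is correct. Note that the paper gives no proof of this lemma at all---it is imported verbatim as \cite[Lemma~5]{KM22}---so your argument is a genuinely self-contained replacement rather than a variant of something in the text. Both halves check out: for the forward inclusion, $d-2q=(q+1)(q^2-1)$ confirms $d\equiv 2q\pmod{q^2-1}$, and $x^{2q}+(x+1)^{2q}=\bigl(x^q+(x+1)^q\bigr)^2=1$; for the converse, the identity $x^d=\mathbf{Nr}_n^{4n}(x)\,x^{-2}$ reduces the defining equation to the biquadratic $x^4+(1+A+B)x^2+A=0$ with $A,B\in\GF{q}$, which in characteristic $2$ is the square of a quadratic and hence has at most two roots, while its Frobenius-fixed coefficients force all four conjugates $x,x^q,x^{q^2},x^{q^3}$ to be among those roots, giving $[\GF{q}(x):\GF{q}]\le 2$. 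This is more elementary than the machinery the paper later deploys for the remaining fibers (the decomposition $\GF{q^4}^*=\mu_{q-1}\cdot\mu_{q+1}\cdot\mu_{q^2+1}$ and the parametrization $x=1/(1+zt)$ from \cite{KM22}), and it isolates exactly why the fiber over $b=1$ is a subfield rather than merely a set of size $q^2$. Your fallback via Theorem~\ref{BaseThm} (the $b=1$ count $q^2$ plus the inclusion $\GF{q^2}\subseteq S$) is also logically sound within this paper, though that theorem is itself cited from the same source, so the direct Galois-descent argument is preferable.
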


\begin{lemma}\label{Scard} Let $m\geq 2$, $Q=2^m$, $\gamma\in \GF{Q}\setminus\GF{2}$ and $S_{Q,\gamma,i,j}:=\{Y\in \GF{Q}\mid
\mathbf{Tr}_1^{m}\left(Y\right)=i \text{
  and } \mathbf{Tr}_1^{m}\left(\gamma Y\right)=j\}$ for
$i,j\in \{0,1\}$. Then \[\#S_{Q,\gamma,i,j}=\frac{Q}{4}\] for any
$i,j\in \{0,1\}$ and $\gamma\in \GF{Q}\setminus\GF{2}$.
\end{lemma}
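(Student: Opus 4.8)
The plan is to count the set $S_{Q,\gamma,i,j}$ by viewing the pair of conditions $\mathbf{Tr}_1^m(Y)=i$ and $\mathbf{Tr}_1^m(\gamma Y)=j$ as a single affine-linear condition on $Y$ over $\GF{2}$. Concretely, consider the $\GF{2}$-linear map $L\colon \GF{Q}\to\GF{2}^2$ given by $L(Y)=(\mathbf{Tr}_1^m(Y),\mathbf{Tr}_1^m(\gamma Y))$. Then $\#S_{Q,\gamma,i,j}=\#L^{-1}(i,j)$, and since all non-empty fibers of a $\GF{2}$-linear map have the same size $Q/\#\mathrm{Im}(L)$, it suffices to show that $L$ is surjective, i.e. that $\#\mathrm{Im}(L)=4$; this immediately gives $\#S_{Q,\gamma,i,j}=Q/4$ for every $(i,j)$.

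The first step is to recall that the trace form $\langle u,v\rangle:=\mathbf{Tr}_1^m(uv)$ is a non-degenerate $\GF{2}$-bilinear form on $\GF{Q}$, so the two linear functionals $Y\mapsto\mathbf{Tr}_1^m(Y)$ and $Y\mapsto\mathbf{Tr}_1^m(\gamma Y)$ are precisely $\langle 1,\cdot\rangle$ and $\langle\gamma,\cdot\rangle$. Hence $L$ is surjective if and only if these two functionals are $\GF{2}$-linearly independent, which by non-degeneracy of the trace form is equivalent to $1$ and $\gamma$ being $\GF{2}$-linearly independent in $\GF{Q}$. The second step is to verify this independence: a linear dependence would mean $\gamma\in\{0,1\}$, i.e. $\gamma\in\GF{2}$, which is excluded by hypothesis $\gamma\in\GF{Q}\setminus\GF{2}$. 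Therefore $L$ is surjective, $\#\mathrm{Im}(L)=4$, and each fiber has cardinality $Q/4$.

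There is essentially no real obstacle here; the only point that requires a word of care is the claim that the fibers of a surjective $\GF{2}$-linear map $\GF{Q}\to\GF{2}^2$ all have the same size $Q/4$, which follows because any two fibers are cosets of the kernel $\ker L$, and $\#\ker L=Q/\#\mathrm{Im}(L)=Q/4$ by rank-nullity over $\GF{2}$. One could alternatively run a direct character-sum computation, writing $\#S_{Q,\gamma,i,j}=\frac14\sum_{\varepsilon,\eta\in\GF{2}}(-1)^{\varepsilon i+\eta j}\sum_{Y\in\GF{Q}}(-1)^{\mathbf{Tr}_1^m((\varepsilon+\eta\gamma)Y)}$ and noting that the inner sum vanishes unless $\varepsilon+\eta\gamma=0$, which under $\gamma\notin\GF{2}$ forces $\varepsilon=\eta=0$ and leaves only the term $Q$; but the linear-algebra argument above is cleaner and I would present that one.
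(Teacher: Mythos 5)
Your proof is correct, and it takes a genuinely different route from the paper's. The paper argues concretely on the level of the sets themselves: it notes that $S_{Q,\gamma,0,0}\cup S_{Q,\gamma,0,1}=\mathfrak{F}_{Q,0}$, shows $S_{Q,\gamma,0,0}\subsetneq\mathfrak{F}_{Q,0}$ by observing that every $Y\in S_{Q,\gamma,0,0}$ is a root of the polynomial $\gamma^{2^{m-1}}\mathbf{Tr}_1^m(Y)+\mathbf{Tr}_1^m(\gamma Y)$, whose leading terms cancel so that its degree drops to $2^{m-2}<\#\mathfrak{F}_{Q,0}$ precisely because $\gamma\notin\GF{2}$, then uses the coset relation $S_{Q,\gamma,0,1}=Y_0+S_{Q,\gamma,0,0}$ to equalize the two cardinalities and finishes the remaining cases by symmetry and subtraction. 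You instead package both trace conditions into the single $\GF{2}$-linear map $L(Y)=(\mathbf{Tr}_1^m(Y),\mathbf{Tr}_1^m(\gamma Y))$ and reduce everything to its surjectivity, which follows from non-degeneracy of the trace form together with the $\GF{2}$-linear independence of $1$ and $\gamma$; rank--nullity then gives all four fiber sizes at once. Your argument is more uniform and conceptually cleaner (it also generalizes immediately to the three-functional version the paper states as Lemma~\ref{Scard_ext} without proof), while the paper's degree argument is more elementary in that it avoids invoking duality, at the cost of a case-by-case bookkeeping. Both are complete; the hypothesis $\gamma\notin\GF{2}$ enters your proof exactly where it enters theirs, namely in guaranteeing that the second condition is not a consequence of the first.
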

\begin{proof}
 Observe that $S_{Q,\gamma,0,0}\cup S_{Q,\gamma,0,1}=\mathfrak{F}_{Q,0}$
and $S_{Q,\gamma,0,0}\neq \mathfrak{F}_{Q,0}$ because every $Y\in
S_{Q,\gamma,0,0}$ satisfies equation
$\gamma^{2^{m-1}}\mathbf{Tr}_1^{m}(Y)+\mathbf{Tr}_1^{m}(\gamma Y)=0$
which has degree $2^{m-2}=\frac{\#\mathfrak{F}_{Q,0}}{2}$ in terms
of $Y$ as $\gamma\in \GF{Q}\setminus\GF{2}$. So, $S_{Q,0,1}\neq \emptyset$
and $S_{Q,\gamma,0,1}=Y_0+S_{Q,\gamma,0,0}$ for any element $Y_0\in
S_{Q,\gamma,0,1}$. Therefore
$\#S_{Q,\gamma,0,0}=\#S_{Q,\gamma,0,1}$. On the other hand,
$\#S_{Q,\gamma,0,0}+\#S_{Q,\gamma,0,1}=\#\mathfrak{F}_{Q,0}=\frac{Q}{2}$.
Thus, $\#S_{Q,\gamma,0,0}=\#S_{Q,\gamma,0,1}=\frac{Q}{4}$ and by
symmetry $\#S_{Q,\gamma,1,0}=\#S_{Q,\gamma,0,1}=\frac{Q}{4}$. Hence,
$\#S_{Q,\gamma,1,1}=Q-\#S_{Q,\gamma,0,0}-\#S_{Q,\gamma,0,1}-\#S_{Q,\gamma,1,0}=Q-\frac{3Q}{4}=\frac{Q}{4}.$
\qed
\end{proof}

Through similar discussions, one can also derive the following statement.

\begin{lemma}\label{Scard_ext} Let $m\geq 2$, $Q=2^m$, $\gamma_1, \gamma_2, \gamma_1+\gamma_2\in \GF{Q}\setminus\GF{2}$ and $S_{Q,\gamma_1,\gamma_2, i,j, k}:=\{Y\in \GF{Q}\mid
\mathbf{Tr}_1^{m}\left(Y\right)=i \text{
  and } \mathbf{Tr}_1^{m}\left(\gamma_1 Y\right)=j \text{ and } \mathbf{Tr}_1^{m}\left(\gamma_2 Y\right)=k\}$ for
$i,j, k\in \{0,1\}$. Then \[\#S_{Q,\gamma_1,\gamma_2, i,j,
k}=\frac{Q}{8}\] for any $i,j,k\in \{0,1\}$.\qed
\end{lemma}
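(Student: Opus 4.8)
The plan is to package the three trace conditions into a single $\mathbb{F}_2$-linear map and reduce the whole lemma to a rank computation. Concretely, I would consider the $\mathbb{F}_2$-linear map $\phi:\GF{Q}\to\mathbb{F}_2^3$ defined by $\phi(Y)=\bigl(\mathbf{Tr}_1^{m}(Y),\ \mathbf{Tr}_1^{m}(\gamma_1 Y),\ \mathbf{Tr}_1^{m}(\gamma_2 Y)\bigr)$. Each set $S_{Q,\gamma_1,\gamma_2,i,j,k}$ is exactly the fiber $\phi^{-1}\bigl((i,j,k)\bigr)$; so if $\phi$ is surjective, every fiber is a coset of $\ker\phi$, all fibers have the same cardinality, and that common value is $\#\GF{Q}/8=Q/8$, which is the claim. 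Thus the lemma collapses to: $\phi$ has rank $3$.

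To prove surjectivity I would invoke the non-degeneracy of the trace bilinear form $(x,y)\mapsto\mathbf{Tr}_1^{m}(xy)$ on $\GF{Q}$ (viewed as an $m$-dimensional $\mathbb{F}_2$-space): the assignment $c\mapsto\bigl(Y\mapsto\mathbf{Tr}_1^{m}(cY)\bigr)$ is an $\mathbb{F}_2$-linear isomorphism of $\GF{Q}$ onto its dual, so the three linear functionals making up $\phi$ are $\mathbb{F}_2$-linearly independent if and only if $1,\gamma_1,\gamma_2$ are $\mathbb{F}_2$-linearly independent in $\GF{Q}$. It then remains to check that no nontrivial $\mathbb{F}_2$-combination of $1,\gamma_1,\gamma_2$ vanishes. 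The seven nontrivial combinations are $1,\ \gamma_1,\ \gamma_2,\ 1+\gamma_1,\ 1+\gamma_2,\ \gamma_1+\gamma_2,\ 1+\gamma_1+\gamma_2$; the hypotheses $\gamma_1,\gamma_2\in\GF{Q}\setminus\GF{2}$ (i.e.\ $\gamma_1,\gamma_2\notin\{0,1\}$) rule out the first five, while $\gamma_1+\gamma_2\in\GF{Q}\setminus\GF{2}$ (i.e.\ $\gamma_1+\gamma_2\notin\{0,1\}$) rules out the last two. Hence $\phi$ is surjective and the proof is complete.

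Alternatively — and this is presumably what ``through similar discussions'' refers to — one can bootstrap off Lemma~\ref{Scard}: applying it with $\gamma=\gamma_1\in\GF{Q}\setminus\GF{2}$ gives that $T_{i,j}:=\{Y\in\GF{Q}\mid \mathbf{Tr}_1^{m}(Y)=i,\ \mathbf{Tr}_1^{m}(\gamma_1 Y)=j\}$ has cardinality $Q/4$, with $T_{0,0}$ an $\mathbb{F}_2$-subspace of codimension $2$ and each $T_{i,j}$ a coset of it. One then shows that $Y\mapsto\mathbf{Tr}_1^{m}(\gamma_2 Y)$ is not identically zero on $T_{0,0}$: otherwise this functional, vanishing on a codimension-$2$ subspace, would be an $\mathbb{F}_2$-combination $\alpha\,\mathbf{Tr}_1^{m}(Y)+\beta\,\mathbf{Tr}_1^{m}(\gamma_1 Y)=\mathbf{Tr}_1^{m}\bigl((\alpha+\beta\gamma_1)Y\bigr)$ with $\alpha,\beta\in\mathbb{F}_2$, forcing $\gamma_2=\alpha+\beta\gamma_1$, i.e.\ $\gamma_2\in\{0,1,\gamma_1,\gamma_1+1\}$, each case contradicting one of the three hypotheses. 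Consequently $\mathbf{Tr}_1^{m}(\gamma_2\,\cdot\,)$ splits $T_{0,0}$ into two halves of size $Q/8$, and by the coset structure it splits every $T_{i,j}$ identically, yielding $\#S_{Q,\gamma_1,\gamma_2,i,j,k}=Q/8$ for all $i,j,k$.

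I do not anticipate a genuine obstacle here; the only points requiring care are the bookkeeping of which hypothesis excludes which degenerate $\mathbb{F}_2$-relation among $1,\gamma_1,\gamma_2$, and making the ``not identically zero on $T_{0,0}$'' step rigorous through the duality of the trace form rather than the slightly informal degree count used in the proof of Lemma~\ref{Scard}.
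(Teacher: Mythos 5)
Your proposal is correct, and since the paper states this lemma with no proof at all (only the remark that it follows ``through similar discussions'' to Lemma~\ref{Scard}), your second argument — splitting each set $T_{i,j}$ of size $Q/4$ from Lemma~\ref{Scard} by the third functional $Y\mapsto\mathbf{Tr}_1^{m}(\gamma_2 Y)$ and checking via the hypotheses that this functional does not lie in the span of $\mathbf{Tr}_1^{m}(\cdot)$ and $\mathbf{Tr}_1^{m}(\gamma_1\cdot)$ — is precisely the intended bootstrap. Your first argument (surjectivity of $\phi:Y\mapsto(\mathbf{Tr}_1^{m}(Y),\mathbf{Tr}_1^{m}(\gamma_1Y),\mathbf{Tr}_1^{m}(\gamma_2Y))$ via non-degeneracy of the trace form and the $\mathbb{F}_2$-linear independence of $1,\gamma_1,\gamma_2$, which the three hypotheses exactly guarantee) is the same idea packaged more cleanly, and is in fact more rigorous than the degree-counting used in the paper's proof of Lemma~\ref{Scard}; your bookkeeping of which hypothesis kills which of the seven nontrivial $\mathbb{F}_2$-combinations is accurate. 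The only marginal observation worth adding is that for $m=2$ the hypotheses are unsatisfiable (so the statement is vacuous there), which is consistent with your independence argument requiring $m\geq 3$.
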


In order to state our main result, we first need  the following description of the union set $\mathfrak{S}_2 \cup \mathfrak{S}_2'$.
\begin{proposition}
It holds
$$\mathfrak{S}_2 \cup \mathfrak{S}_2'=\{b\in \GF{q^4} \mid \mathbf{Tr}_{n}^{4n}(b)\neq0\}.$$
\end{proposition}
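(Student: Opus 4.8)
The plan is to prove the two inclusions separately. The inclusion $\mathfrak{S}_2\cup\mathfrak{S}_2'\subseteq\{b\in\GF{q^4}\mid\mathbf{Tr}_n^{4n}(b)\neq0\}$ is immediate from the definitions, in both of which the requirement $\mathbf{Tr}_n^{4n}(b)\neq0$ appears explicitly; so the content lies in the reverse inclusion, which I would handle as follows.

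First I would note that the hypothesis $\mathbf{Tr}_n^{4n}(b)\neq0$ makes the rational expressions in the definitions of $\mathfrak{S}_2$ and $\mathfrak{S}_2'$ meaningful: if $b\in\GF{q^2}$ then $b^{q^2}=b$, so $b+b^{q^2}=0$ and $\mathbf{Tr}_n^{4n}(b)=\mathbf{Tr}_n^{2n}(b+b^{q^2})=0$; hence $\mathbf{Tr}_n^{4n}(b)\neq0$ forces $b\notin\GF{q^2}$, in particular $b+b^{q^2}\neq0$, and together with $\mathbf{Tr}_n^{4n}(b)\neq0$ none of the denominators involved vanishes.

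Next, for a fixed $b$ with $\mathbf{Tr}_n^{4n}(b)\neq0$ I would introduce the abbreviations
\[
L:=\mathbf{Tr}_1^{n}\!\left(\frac{\mathbf{N}_n^{4n}(b^{q+1}+1)}{\big[(b+b^{q^2})\mathbf{Tr}_n^{4n}(b)\big]^2}\right),\qquad
R:=\frac{(b^{q+1}+1)^{q^2+1}}{(b+b^{q^2})^{q+1}};
\]
by definition $b\in\mathfrak{S}_2\iff L=R$ and $b\in\mathfrak{S}_2'\iff L=R+1$. Being a value of a trace map, $L$ lies in $\GF{2}=\{0,1\}$. The heart of the argument is to show that $R\in\GF{2}$ as well; granting this, the conditions $L=R$ and $L=R+1$ are mutually exclusive and jointly exhaustive (because $\{R,R+1\}=\GF{2}\ni L$), so each such $b$ lies in exactly one of $\mathfrak{S}_2,\mathfrak{S}_2'$. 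Summing over all $b$ with $\mathbf{Tr}_n^{4n}(b)\neq0$ then yields the asserted set equality, and as a by-product the disjointness $\mathfrak{S}_2\cap\mathfrak{S}_2'=\emptyset$.

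To prove $R\in\GF{2}$ I would set $a_i:=b^{q^i}$ for $i=0,1,2,3$ (cyclically permuted by the Frobenius $x\mapsto x^q$), rewrite the numerator as $(b^{q+1}+1)^{q^2+1}=(a_0a_1+1)(a_2a_3+1)$ and the denominator as $(b+b^{q^2})^{q+1}=(a_0+a_2)(a_1+a_3)$, and then exploit the structural facts $b+b^{q^2}\in\GF{q^2}$ and $\mathbf{Tr}_n^{4n}(b)\in\GF{q}$---together with the non-vanishing established above, which is exactly what licenses the cancellations---to reduce $R$ to a manifestly $\GF{2}$-valued quantity. An equivalent check is available via Theorem~\ref{BaseThm}: since $\mathbf{Tr}_n^{4n}(b)\neq0$ forces $b\notin\GF{q^2}\supseteq\mu_{q+1}^\star\cup\{1\}$, the equation $X^{q^3+q^2+q-1}+(X+1)^{q^3+q^2+q-1}=b$ has either exactly $2$ solutions (the case $b\in\mathfrak{S}_2$) or none, and it then suffices to verify that the no-solution case is described precisely by $L=R+1$. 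Either way, this last bookkeeping with Frobenius conjugates is the only step where care is needed, and hence the main obstacle; the rest of the argument is purely formal.
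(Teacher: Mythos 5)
Your forward inclusion and the observation that $\mathbf{Tr}_n^{4n}(b)\neq 0$ forces $b\notin\GF{q^2}$ (so the denominators are nonzero) are fine, but the core of your plan rests on two claims that are both false, and this is a genuine gap. Write $s=b+b^{q^2}$ and $\tau=\mathbf{Tr}_n^{4n}(b)=s+s^q$. First, $L=\mathbf{Tr}_1^{n}(A)$ with $A=\mathbf{N}_n^{4n}(b^{q+1}+1)/(s\tau)^2$ is \emph{not} an element of $\GF{2}$: the numerator and $\tau$ lie in $\GF{q}$, but $s\in\GF{q^2}\setminus\GF{q}$ precisely because $s+s^q=\tau\neq0$, so $A\in\GF{q^2}\setminus\GF{q}$, and the polynomial expression $\mathbf{Tr}_1^{n}(A)=\sum_{i=0}^{n-1}A^{2^i}$ then satisfies $L+L^2=A+A^{2^n}=A+A^q\neq0$; it is a genuine element of $\GF{q^2}\setminus\GF{2}$. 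Second, $R$ is not in $\GF{2}$ either: with $P=(b^{q+1}+1)^{q^2+1}$ one checks $P+P^q=(b+b^{q^2})(b^q+b^{q^3})=s^{q+1}$, which is exactly the denominator of $R$, so $R+R^q=1$ and $R\in\GF{q^2}\setminus\GF{q}$. Thus ``show $R\in\GF{2}$'' is an attempt to prove a false statement, and the exhaustiveness of the two conditions cannot be obtained by placing $L$ and $R$ separately in $\GF{2}$; your claim that $\{R,R+1\}=\GF{2}$ fails.

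What is true, and what the paper's proof establishes, is that the \emph{difference} $L+R$ lies in $\GF{2}$. Indeed $\mathbf{N}_n^{4n}(b^{q+1}+1)=P\cdot P^q$, so
$A+A^q=PP^q\bigl(s^{-2}+s^{-2q}\bigr)\tau^{-2}=PP^q/s^{2(q+1)}=RR^q=R(R+1)=R+R^2$;
combined with $L+L^2=A+A^q$ this gives $(L+R)+(L+R)^2=0$, i.e.\ $L+R\in\{0,1\}$, hence $L=R$ or $L=R+1$, and every $b$ with $\mathbf{Tr}_n^{4n}(b)\neq0$ lies in exactly one of $\mathfrak{S}_2,\mathfrak{S}_2'$. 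Your fallback via Theorem~\ref{BaseThm} does not close the gap either: for trace-nonzero $b$ it only says the solution count is $2$ or $0$ according to whether $b\in\mathfrak{S}_2$, and ``verify that the no-solution case is described precisely by $L=R+1$'' is exactly the dichotomy to be proved, not a reduction of it.
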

\begin{proof}
Set
$A=\frac{\mathbf{N}_n^{4n}(b^{q+1}+1)}{\left[(b+b^{q^2})\mathbf{Tr}_n^{4n}(b)\right]^2}$
and $U=\frac{(b^{q+1}+1)^{q^2+1}}{(b+b^{q^2})^{q+1}}+1.$ One can check straightforwardly that  $A+A^q=U+U^2,$ i.e. $\mathbf{Tr}_1^{n}(A+A^2)=U+U^2.$
Hence, whenever $\mathbf{Tr}_{n}^{4n}(b)\neq0$, either
$\mathbf{Tr}_1^{n}(A)=U$ or $\mathbf{Tr}_1^{n}(A)=U+1$ holds
true.\qed
\end{proof}

\begin{corollary}
$$\#\mathfrak{S}_2=\#\mathfrak{S}_2'=\frac{q^4-q^3}{2}.$$
\end{corollary}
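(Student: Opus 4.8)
The plan is not to evaluate $\#\mathfrak{S}_2$ and $\#\mathfrak{S}_2'$ individually from their rather involved definitions, but to pin down $\#\mathfrak{S}_2$ from a single global counting identity and then read off $\#\mathfrak{S}_2'$ from the preceding Proposition together with the evident disjointness $\mathfrak{S}_2\cap\mathfrak{S}_2'=\emptyset$.

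First I would observe that $x\mapsto x^{q^3+q^2+q-1}+(x+1)^{q^3+q^2+q-1}$ is a genuine function from $\GF{q^4}$ to $\GF{q^4}$, so summing over all $b\in\GF{q^4}$ the number of solutions of $X^{q^3+q^2+q-1}+(X+1)^{q^3+q^2+q-1}=b$ yields exactly $q^4$. Now I substitute the four cases of Theorem~\ref{BaseThm}: the strata $\{1\}$, $\mu_{q+1}^\star$ and $\mathfrak{S}_2$ are pairwise disjoint (this is implicit in the statement of Theorem~\ref{BaseThm}, whose last case is their complement), and $\#\mu_{q+1}^\star=q$ because $q+1\mid q^2-1\mid q^4-1$. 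Hence the identity reads
$$q^4 = 1\cdot q^2 + q\cdot(q^2-q) + \#\mathfrak{S}_2\cdot 2 + 0 = q^3 + 2\,\#\mathfrak{S}_2,$$
so $\#\mathfrak{S}_2=\frac{q^4-q^3}{2}$.

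To pass to $\mathfrak{S}_2'$, I use the preceding Proposition: $\mathfrak{S}_2\cup\mathfrak{S}_2'=\{b\in\GF{q^4}\mid\mathbf{Tr}_n^{4n}(b)\neq 0\}$, a set of cardinality $q^4-q^3$ since $\mathbf{Tr}_n^{4n}$ maps $\GF{q^4}$ onto $\GF{q}$ with every fibre of size $q^3$. Moreover $\mathfrak{S}_2$ and $\mathfrak{S}_2'$ are disjoint, because for a fixed $b$ the left-hand sides in the defining conditions of $\mathfrak{S}_2$ and $\mathfrak{S}_2'$ coincide while the right-hand sides differ by $1$, so no $b$ can satisfy both. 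Therefore $\#\mathfrak{S}_2'=(q^4-q^3)-\#\mathfrak{S}_2=\frac{q^4-q^3}{2}$, which completes the argument.

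I do not expect a genuine obstacle here; the only place needing care is the bookkeeping in the global count — attaching the correct multiplicities $q^2$ and $q^2-q$ to the contributions of $b=1$ and $b\in\mu_{q+1}^\star$, and making sure the strata of Theorem~\ref{BaseThm} really do partition $\GF{q^4}$. An alternative would be to exhibit an explicit involution of $\{b\mid\mathbf{Tr}_n^{4n}(b)\neq 0\}$ interchanging $\mathfrak{S}_2$ and $\mathfrak{S}_2'$, but that looks both more delicate and unnecessary.
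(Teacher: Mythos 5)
Your argument is correct, but the first half takes a different route from the paper. For $\#\mathfrak{S}_2$ the paper simply cites Corollary 6 of \cite{KM22-1}, whereas you re-derive it inside the paper from Theorem~\ref{BaseThm} by the double count $\sum_{b}\#\{x: F(x)+F(x+1)=b\}=q^4$, which gives $q^4=q^2+q\,(q^2-q)+2\,\#\mathfrak{S}_2$ and hence $\#\mathfrak{S}_2=\frac{q^4-q^3}{2}$. This is a clean, self-contained alternative; the only point you wave at rather than prove is the pairwise disjointness of the strata $\{1\}$, $\mu_{q+1}^\star$, $\mathfrak{S}_2$, and "well-posedness of Theorem~\ref{BaseThm}" is not quite enough (e.g.\ for $q=2$ one has $q^2-q=2$, so the solution counts alone do not separate the second and third strata). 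The disjointness is nonetheless immediate from the definitions: every $b\in\mu_{q+1}\subset\GF{q^2}$ satisfies $\mathbf{Tr}_n^{4n}(b)=b+b^q+b^{q^2}+b^{q^3}=0$, so $\mu_{q+1}\cap\mathfrak{S}_2=\emptyset$ and $1\notin\mathfrak{S}_2$; it would be worth saying this explicitly. The second half of your argument --- using the preceding Proposition to get $\#(\mathfrak{S}_2\cup\mathfrak{S}_2')=q^4-q^3$ and the evident disjointness of $\mathfrak{S}_2$ and $\mathfrak{S}_2'$ (same left-hand side, right-hand sides differing by $1$) --- is exactly what the paper does, with the disjointness made explicit where the paper leaves it tacit.
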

\begin{proof}
Corollary 6 of \cite{KM22-1} shows
$\#\mathfrak{S}_2=\frac{q^4-q^3}{2}.$ Since $\#\{b\in \GF{q^4} \mid
\mathbf{Tr}_{n}^{4n}(b)\neq0\}=q^4-q^3$, it follows
$\#\mathfrak{S}_2'=\frac{q^4-q^3}{2}.$\qed
\end{proof}

\section{Proof of the main result}\label{sec:proof-main-result}

According to Proposition~\ref{prop-decomposition} and
Theorem~\ref{BaseThm}, we have
$$\beta_F(1,b)=\tilde{\beta}_F(1,b,1)+\sum_{c\in
\mathfrak{S}_2}\tilde{\beta}_F(1,b,c)+\sum_{c\in
\mu_{q+1}^\star}\tilde{\beta}_F(1,b,c).$$

The two first terms are easy to compute (we omit the calculation of
the second term and give an outline of the proof of the calculation of the first term).

\begin{proposition}\label{sum2}
  Let $b\in \GF{q^4}$. Then,
  $$\sum_{c\in \mathfrak{S}_2}\tilde{\beta}_F(1,b,c)=\begin{cases}
    2\cdot\#\mathfrak{S}_2=q^4-q^3,
    \text{ if } b=0,\\
    \tilde{\beta}_F(1,b,b)=2,
    \text{ if } b\in \mathfrak{S}_2,\\
    0, \text{ otherwise.}
\end{cases}$$
\end{proposition}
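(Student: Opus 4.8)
The plan is to evaluate the sum $\sum_{c\in\mathfrak{S}_2}\tilde\beta_F(1,b,c)$ by analyzing, for a fixed $c\in\mathfrak{S}_2$, the structure of the set of solutions of the second line of system~\eqref{BU_eq}, namely the equation $F(X)+F(X+1)=c$. By Theorem~\ref{BaseThm}, part~\eqref{item:3}, for each $c\in\mathfrak{S}_2$ this equation has exactly two solutions in $\GF{q^4}$; call them $x_0$ and $x_1$. The key elementary observation is that the substitution $X\mapsto X+1$ leaves the left-hand side $F(X)+F(X+1)$ invariant, so $\{x_0,x_1\}$ is stable under adding $1$; since $c\in\mathfrak{S}_2$ forces $c\neq 1$ (as $b=1$ lies neither in $\mathfrak{S}_2$ by the trace condition, nor is $c=1$ compatible — more precisely $\mathfrak{S}_2\subseteq\GF{q^4}\setminus\{\GF{q^2}\cup\mu_{q+1}^\star\cup\{1\}\}$ is disjoint from $\{1\}$), we must have $x_1=x_0+1$, i.e. the two solutions are exactly $x$ and $x+1$ for a single $x$.

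With this in hand, counting pairs $(X,Y)$ satisfying~\eqref{BU_eq} for fixed $c\in\mathfrak{S}_2$ reduces to: $X\in\{x_0,x_0+1\}$, $Y\in\{x_0,x_0+1\}$, subject to the first line $F(X)+F(Y)=b$. There are four candidate pairs. The pairs $(x_0,x_0)$ and $(x_0+1,x_0+1)$ give $F(X)+F(Y)=0$, so they contribute precisely when $b=0$; the pairs $(x_0,x_0+1)$ and $(x_0+1,x_0)$ give $F(X)+F(Y)=F(x_0)+F(x_0+1)=c$, so they contribute precisely when $b=c$. Hence, for a fixed $c\in\mathfrak{S}_2$, $\tilde\beta_F(1,b,c)$ equals $2$ if $b=0$, equals $2$ if $b=c$, and equals $0$ otherwise (these cases being mutually exclusive since $0\notin\mathfrak{S}_2$).

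Summing over $c\in\mathfrak{S}_2$ then gives the three cases in the statement directly: if $b=0$ every term equals $2$, producing $2\cdot\#\mathfrak{S}_2=q^4-q^3$ by the Corollary; if $b\in\mathfrak{S}_2$ only the single term $c=b$ survives, with value $\tilde\beta_F(1,b,b)=2$; and if $b\notin\{0\}\cup\mathfrak{S}_2$ every term vanishes. I would also note in passing that $b=0$ and $b\in\mathfrak{S}_2$ cannot occur simultaneously, so the case split is well-defined. The only point requiring care — and the main (mild) obstacle — is justifying cleanly that the two solutions of $F(X)+F(X+1)=c$ for $c\in\mathfrak{S}_2$ form a coset $\{x_0,x_0+1\}$ rather than an arbitrary two-element set; this rests on the invariance under $X\mapsto X+1$ together with the fact that $c\neq 1$ for $c\in\mathfrak{S}_2$, which excludes the degenerate possibility that a solution $x$ satisfies $x=x+1$. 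Everything else is bookkeeping over the four pairs and an appeal to the already-established cardinality $\#\mathfrak{S}_2=\frac{q^4-q^3}{2}$.
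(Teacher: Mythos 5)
Your proof is correct; the paper explicitly omits this computation as easy, and your argument (the solution set of $F(X)+F(X+1)=c$ is stable under $X\mapsto X+1$, hence for $c\in\mathfrak{S}_2$ consists of a single pair $\{x_0,x_0+1\}$, after which one checks the four candidate pairs $(X,Y)$ against $F(X)+F(Y)=b$) is exactly the intended one. One minor remark: the appeal to $c\neq 1$ is superfluous, since in characteristic $2$ the involution $X\mapsto X+1$ can never have a fixed point, so the two solutions automatically form a coset of $\{0,1\}$ regardless of $c$; the rest of your bookkeeping, including the disjointness of the cases $b=0$ and $b\in\mathfrak{S}_2$ via $\mathbf{Tr}_n^{4n}(0)=0$, is sound.
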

\begin{proposition}\label{12b}    $\tilde{\beta}_F(1,b,1)=\begin{cases}q^2,
\text{ if } b\in \GF{q^2},\\
0, \text{ otherwise.} \end{cases}$
\end{proposition}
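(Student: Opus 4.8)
The plan is to compute $\tilde\beta_F(1,b,1)$, the number of $(x,y)\in\GF{q^4}^2$ satisfying
\[
F(X)+F(Y)=b,\qquad F(X)+F(X+1)=F(Y)+F(Y+1)=1,
\]
by first pinning down the solution set of the second pair of constraints. By Lemma~\ref{Fcase}, the equation $F(X)+F(X+1)=1$ has solution set exactly $\GF{q^2}$; the same is true for $Y$. Hence $\tilde\beta_F(1,b,1)$ equals the number of pairs $(x,y)\in\GF{q^2}\times\GF{q^2}$ with $F(x)+F(y)=b$. So the first step reduces everything to understanding $F$ restricted to the subfield $\GF{q^2}$.

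The key observation for the second step is that for $x\in\GF{q^2}$ we have $x^{q^2}=x$, so the exponent $q^3+q^2+q-1$ acts on $\GF{q^2}$ as $x^{q^3+q^2+q-1}=x^{q\cdot q^2}\cdot x^{q^2}\cdot x^{q}\cdot x^{-1}=x^{q}\cdot x\cdot x^{q}\cdot x^{-1}=x^{2q-1+\,?}$; more carefully, reducing the exponent modulo $q^2-1$ gives $q^3+q^2+q-1\equiv q+1+q-1 = 2q \pmod{q^2-1}$, so $F(x)=x^{2q}=(x^q\cdot x)^{?}$ — I would verify that in fact $F$ restricted to $\GF{q^2}$ is the monomial $x\mapsto x^{2q}$, equivalently $x\mapsto (x^{q+1})\cdot x^{q-1}$ or simply a power map of the form $x^{e}$ with $\gcd(e,q^2-1)$ easy to read off. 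Since squaring is a bijection on $\GF{q^2}$, $x\mapsto x^{2q}$ is a bijection iff $x\mapsto x^q$ is, which it is (Frobenius). Therefore $F|_{\GF{q^2}}$ is a permutation of $\GF{q^2}$, and consequently the map $(x,y)\mapsto F(x)+F(y)$ on $\GF{q^2}\times\GF{q^2}$ takes every value in $\GF{q^2}$ — indeed, since $F|_{\GF{q^2}}$ is a linearized-up-to-bijection power map, for each target $b$ one counts the preimages.

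The third step is the counting itself: for $b\in\GF{q^2}$, the number of $(x,y)\in\GF{q^2}^2$ with $F(x)+F(y)=b$ equals $\sum_{u\in\GF{q^2}}\#\{x: F(x)=u\}\cdot\#\{y:F(y)=u+b\}$. Because $F|_{\GF{q^2}}$ is a bijection of $\GF{q^2}$, each inner cardinality is $1$ whenever the argument lies in $\GF{q^2}$, which it always does here, and the sum has exactly $q^2$ terms each contributing $1$, giving $q^2$. If $b\notin\GF{q^2}$, then $F(x)+F(y)\in\GF{q^2}$ for all $x,y\in\GF{q^2}$ forces $F(x)+F(y)=b$ to be unsolvable, so $\tilde\beta_F(1,b,1)=0$. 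This yields exactly the claimed case distinction.

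The main obstacle I anticipate is purely bookkeeping rather than conceptual: correctly reducing the exponent $q^3+q^2+q-1$ modulo $q^2-1$ and confirming that the resulting power map on $\GF{q^2}$ is genuinely a permutation (i.e.\ that the reduced exponent is coprime to $q^2-1$, or decomposes through Frobenius and squaring as sketched). Everything downstream — that a bijection of $\GF{q^2}$ makes $F(x)+F(y)=b$ have exactly $q^2$ solutions for $b\in\GF{q^2}$ and none otherwise — is then immediate. One could also cite Lemma~\ref{Fcase} together with the subfield structure to shortcut the verification that $F$ maps $\GF{q^2}$ into (hence onto, by cardinality) $\GF{q^2}$.
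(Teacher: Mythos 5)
Your proposal is correct and follows essentially the same route as the paper: reduce to pairs in $\GF{q^2}^2$ via Lemma~\ref{Fcase}, observe that $q^3+q^2+q-1\equiv 2q\pmod{q^2-1}$ so that $F$ restricted to $\GF{q^2}$ is the permutation $x\mapsto x^{2q}$, and count. The paper finishes the count slightly more directly by using $x^{2q}+y^{2q}=(x+y)^{2q}$ to rewrite the condition as $x+y=b^{\frac{1}{2q}}$, but this is equivalent to your bijectivity argument.
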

\begin{proof}
  By Proposition~\ref{Fcase}, the set of all $\GF{q^4}$-solutions to
  $F(X)+F(X+1)=1$ is $\GF{q^2}$, and $F(x)+F(y)=b$ for
  $(x,y)\in \GF{q^2}^2$ can be rewritten as
  $x+y=b^{\frac{1}{2q}}$. Therefore,  the first item follows. \qed
\end{proof}

The main difficulty is, therefore, to determine the third term
$\sum_{c\in \mu_{q+1}^\star}\tilde{\beta}_F(1,b,c)$.


To this end, we make a critical remark that we shall use in our computations.

\begin{remark}
Set
$l:=(q-1)(q^2+1)=q^3-q^2+q-1.$ For $c\in \mu_{q+1}^\star$, from the
proof of Lemma 7 in \cite{KM22}, one can see that every
$\GF{q^4}-$solution to $F(X)+F(X+1)=c$ is expressed as
$x=\frac{1}{1+zt}$ with $(z,t)\in \mu_{q-1}\times\mu_{q^2+1}^\star$
and it holds $x^{l}=c$. Therefore, for solutions of \eqref{BU_eq}, we
set
\begin{equation}\label{x0}
x:=\frac{1}{1+z_1t_1}, y=\frac{1}{1+z_2t_2},
\end{equation}
where $(z_i,t_i)\in \mu_{q-1}\times\mu_{q^2+1}^\star$ for $i=1,2$
and $x^l=y^l=c$. Then, since $F(x)+F(y)=x^l(x+y)^{2q^2}$, we can
compute $\sum_{c\in \mu_{q+1}^\star}\tilde{\beta}_F(1,b,c)$ as the solution
number of
\begin{equation}\label{aux_eq0}
\begin{cases}b=\frac{1}{(1+z_1t_1)^l}\left(\frac{1}{1+z_1t_1}+\frac{1}{1+z_2t_2}\right)^{2q^2}\\
\frac{1}{(1+z_1t_1)^l}=\frac{1}{(1+z_2t_2)^l},
\end{cases}
\end{equation}
which is a system of equations in four variables $z_1,z_2, t_1, t_2$
where $(z_i,t_i)\in \mu_{q-1}\times\mu_{q^2+1}^\star$ for $i=1,2$.
\end{remark}

Based on this remark, we now state all the possible values of
$\sum_{c\in \mu_{q+1}^\star}\tilde{\beta}_F(1,b,c)$ regarding to the possible
values of $b$. Since the computation is long and depends on the values
of $b$, we present our results in separate subsections.

\subsection{ Case $b=0$}

This is the case when $z_1t_1=z_2t_2$, that is, $z_1=z_2$ and
$t_1=t_2$. Therefore,
\begin{lemma}
If $b=0$, then $\sum_{c\in
\mu_{q+1}^\star}\tilde{\beta}_F(1,b,c)=(q-1)q^2$.\qed
\end{lemma}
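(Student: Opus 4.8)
The plan is to unwind the system \eqref{aux_eq0} under the hypothesis $b=0$ and count solutions directly. Setting $b=0$ in the first equation of \eqref{aux_eq0} forces $\left(\frac{1}{1+z_1t_1}+\frac{1}{1+z_2t_2}\right)^{2q^2}=0$, hence $\frac{1}{1+z_1t_1}=\frac{1}{1+z_2t_2}$, i.e.\ $z_1t_1=z_2t_2$. (Note the second equation of \eqref{aux_eq0} is then automatically satisfied.) So the count we want is the number of pairs $\big((z_1,t_1),(z_2,t_2)\big)\in(\mu_{q-1}\times\mu_{q^2+1}^\star)^2$ with $z_1t_1=z_2t_2$.

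The key observation is that the product map $\mu_{q-1}\times\mu_{q^2+1}\to\mu_{(q-1)(q^2+1)}$, $(z,t)\mapsto zt$, is a bijection, since $\gcd(q-1,q^2+1)=\gcd(q-1,2)=1$; this is the kind of decomposition underlying \cite[Lemma 4]{KM22}. Consequently $z_1t_1=z_2t_2$ with $z_i\in\mu_{q-1}$, $t_i\in\mu_{q^2+1}$ forces $z_1=z_2$ and $t_1=t_2$ outright. Thus a solution is determined by a single pair $(z_1,t_1)\in\mu_{q-1}\times\mu_{q^2+1}^\star$, and there are exactly $|\mu_{q-1}|\cdot|\mu_{q^2+1}^\star|=(q-1)\cdot q^2$ of them.

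There is no real obstacle here: the only thing to be careful about is the bookkeeping of which variables range over the punctured group $\mu_{q^2+1}^\star$ versus the full group $\mu_{q^2+1}$, and confirming that the constraint $t_1\neq 1$ (needed so that $x$ is genuinely a solution to $F(X)+F(X+1)=c$ for some $c\in\mu_{q+1}^\star$, per the Remark) is preserved — it is, since $t_1=t_2$. Putting the pieces together gives $\sum_{c\in\mu_{q+1}^\star}\tilde{\beta}_F(1,0,c)=(q-1)q^2$, as claimed.
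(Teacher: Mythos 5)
Your argument is correct and takes essentially the same route as the paper's one-line proof: $b=0$ forces $z_1t_1=z_2t_2$, hence $z_1=z_2$ and $t_1=t_2$, and counting pairs $(z_1,t_1)\in\mu_{q-1}\times\mu_{q^2+1}^\star$ gives $(q-1)q^2$. You merely make explicit the coprimality $\gcd(q-1,q^2+1)=1$ that the paper leaves implicit.
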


\subsection{Transformation of the problem when $b\in \GF{q^2}^*$}
Now, set $M_i:=z_i+\frac{1}{z_i}$ and $T_i=t_i+\frac{1}{t_i}$. Then
$M_i\in \GF{q}$, $T_i\in \GF{q^2}$. Further, $M_i=0$ (resp. $T_i=0$)
if and only if $z_i=1$ (resp. $t_i=1$). By Proposition~\ref{Aprop},
when $M_i\neq 0$ and $T_i\neq 0$, it is always true
$$\Tr{n}\left(\frac{1}{M_i}\right)=0 \text{   and   } \Tr{2n}\left(\frac{1}{T_i}\right)=1.$$

Note that $b\in \GF{q^2}$ if and only if
$\frac{1}{1+z_1t_1}+\frac{1}{1+z_2t_2}\in \GF{q^2}$, because $x^l\in
\mu_{q+1}\subset \GF{q^2}$ for any $x\in \GF{q^4}$. The following
proposition restates the condition $b\in \GF{q^2}$ in terms of $M_i$
and $T_i$.
\begin{proposition}\label{q^2cond}
$\frac{1}{1+z_1t_1}+\frac{1}{1+z_2t_2}\in \GF{q^2}$ if and only
if $M_1T_2=M_2T_1$.
\end{proposition}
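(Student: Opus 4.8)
The plan is to express the condition $\frac{1}{1+z_1t_1}+\frac{1}{1+z_2t_2}\in\GF{q^2}$ by a direct computation, introducing the Frobenius $\sigma(u):=u^{q^2}$ and using that $\GF{q^2}$ is exactly the fixed field of $\sigma$. Writing $x=\frac{1}{1+z_1t_1}$, $y=\frac{1}{1+z_2t_2}$, the membership $x+y\in\GF{q^2}$ is equivalent to $x+y=x^{q^2}+y^{q^2}$. First I would compute $x^{q^2}$ and $y^{q^2}$: since $z_i\in\mu_{q-1}\subset\GF{q}$ we have $z_i^{q^2}=z_i$, and since $t_i\in\mu_{q^2+1}$ we have $t_i^{q^2}=t_i^{-1}$. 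Hence $x^{q^2}=\frac{1}{1+z_1/t_1}=\frac{t_1}{t_1+z_1}$ and similarly $y^{q^2}=\frac{t_2}{t_2+z_2}$. So the condition becomes
\[
\frac{1}{1+z_1t_1}+\frac{1}{1+z_2t_2}=\frac{t_1}{z_1+t_1}+\frac{t_2}{z_2+t_2}.
\]

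Next I would clear denominators and simplify. It is convenient to combine the two sides pairwise: $\frac{1}{1+z_1t_1}+\frac{t_1}{z_1+t_1}=\frac{(z_1+t_1)+t_1(1+z_1t_1)}{(1+z_1t_1)(z_1+t_1)}=\frac{z_1+t_1^2 z_1}{(1+z_1t_1)(z_1+t_1)}=\frac{z_1(1+t_1^2)}{(1+z_1t_1)(z_1+t_1)}$. Doing the same on the other side, the condition $x+y\in\GF{q^2}$ becomes
\[
\frac{z_1(1+t_1^2)}{(1+z_1t_1)(z_1+t_1)}=\frac{z_2(1+t_2^2)}{(1+z_2t_2)(z_2+t_2)}.
\]
Now I would bring in $M_i=z_i+z_i^{-1}$ and $T_i=t_i+t_i^{-1}$. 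Note $(1+z_it_i)(z_i+t_i)=z_i+t_i+z_it_i^2+z_i^2t_i=z_it_i\big((z_i+z_i^{-1})+(t_i+t_i^{-1})\big)=z_it_i(M_i+T_i)$, and $z_i(1+t_i^2)=z_it_i(t_i+t_i^{-1})=z_it_iT_i$. Therefore $\frac{z_i(1+t_i^2)}{(1+z_it_i)(z_i+t_i)}=\frac{T_i}{M_i+T_i}$ (the $z_it_i$ cancels, and this is legitimate since $z_i,t_i\neq0$; also $M_i+T_i\neq0$ because $M_i\in\GF{q}$ while $T_i\in\GF{q^2}\setminus\GF{q}$ when $T_i\neq0$, and the cases $M_i=0$ or $T_i=0$ need a quick separate check). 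Hence the condition is $\frac{T_1}{M_1+T_1}=\frac{T_2}{M_2+T_2}$, which after cross-multiplying and cancelling $T_1T_2$ gives exactly $M_1T_2=M_2T_1$.

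The only delicate points — and the part I would write out carefully — are the degenerate cases hidden in the cancellations: one must make sure that dividing by $z_it_i$, by $M_i+T_i$, and by $T_1T_2$ is valid, or else handle those sub-cases by hand. Since $(z_i,t_i)\in\mu_{q-1}\times\mu_{q^2+1}^\star$ we always have $z_i\neq0$ and $t_i\neq0$, so $z_it_i\neq0$. If $T_i=0$ then $t_i=1$, and then $x=\frac{1}{1+z_1}\in\GF{q}\subset\GF{q^2}$, so that branch of $x+y\in\GF{q^2}$ reduces to $y\in\GF{q^2}$, which one checks matches $M_1T_2=M_2T_1$ (it forces $M_1T_2=0$, i.e. $T_2=0$ or $M_1=0$, consistent with $y\in\GF{q^2}\iff t_2=1$ or $z_2=1$). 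The case $M_i+T_i=0$ cannot occur unless both are zero, again by the degree/subfield argument above. I expect these bookkeeping checks on the boundary cases to be the main (though minor) obstacle; the algebraic core is the identity $(1+z_it_i)(z_i+t_i)=z_it_i(M_i+T_i)$ together with $z_i(1+t_i^2)=z_it_iT_i$, after which the equivalence is immediate.
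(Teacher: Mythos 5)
Your proof is correct and follows essentially the same route as the paper's: apply the $q^2$-Frobenius (using $z_i^{q^2}=z_i$, $t_i^{q^2}=t_i^{-1}$), regroup the terms by index, reduce each pair to $\frac{T_i}{M_i+T_i}$, and cross-multiply. The extra bookkeeping you flag is even lighter than you suggest, since $t_i\in\mu_{q^2+1}^\star$ forces $T_i\neq0$ and $T_i\notin\GF{q}$, so $M_i+T_i\neq0$ automatically.
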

\begin{proof} $\frac{1}{1+z_1t_1}+\frac{1}{1+z_2t_2}\in \GF{q^2}$ if and only
if
$\frac{1}{1+z_1t_1}+\frac{1}{1+z_2t_2}=\left(\frac{1}{1+z_1t_1}+\frac{1}{1+z_2t_2}\right)^{q^2}$,
i.e.,
$\frac{1}{1+z_1t_1}+\frac{1}{1+z_2t_2}=\frac{1}{1+z_1t_1^{-1}}+\frac{1}{1+z_2t_2^{-1}}$,
or equivalently,
$\frac{1}{1+z_1t_1}+\frac{1}{1+z_1t_1^{-1}}=\frac{1}{1+z_2t_2}+\frac{1}{1+z_2t_2^{-1}}$,
which can be rewritten
$\frac{z_1t_1+z_1t_1^{-1}}{1+z_1^2+z_1t_1+z_1t_1^{-1}}=\frac{z_2t_2+z_2t_2^{-1}}{1+z_2^2+z_2t_2+z_2t_2^{-1}}$,
i.e. $\frac{T_1}{M_1+T_1}=\frac{T_2}{M_2+T_2}$ which is equivalent
to $M_1T_2=M_2T_1$.\qed
\end{proof}
\begin{proposition} Let $(z_i,t_i)\in \mu_{q-1}\times\mu_{q^2+1}^\star$ for $i=1,2$.
\begin{enumerate}
\item If $M_1T_2=M_2T_1$ and $M_1M_2\neq 0$, then
$\frac{1}{(1+z_1t_1)^l}=\frac{1}{(1+z_2t_2)^l}$.
\item
When $M_1M_2= 0$, it holds
$\frac{1}{(1+z_1t_1)^l}=\frac{1}{(1+z_2t_2)^l}$ if and only if
$T_1^{q-1}=T_2^{q-1},$ i.e., $T_2=eT_1$ for some $e\in \mu_{q-1}.$
\end{enumerate}
\end{proposition}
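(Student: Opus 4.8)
The statement to be established is the two-part proposition describing when
$\frac{1}{(1+z_1t_1)^l}=\frac{1}{(1+z_2t_2)^l}$ holds, in terms of the auxiliary quantities $M_i=z_i+z_i^{-1}\in\GF{q}$ and $T_i=t_i+t_i^{-1}\in\GF{q^2}$, under the standing hypothesis $M_1T_2=M_2T_1$ in part~(1) and $M_1M_2=0$ in part~(2). Recall from the preceding remark that $l=(q-1)(q^2+1)$, and that $(1+z_it_i)^l\in\mu_{q+1}$ since it equals $x_i^{-l}\cdot(\text{something in }\mu_{q+1})$; more precisely $(1+zt)^{q^2+1}\in\GF{q^2}$ always, so $(1+z_it_i)^l=\bigl((1+z_it_i)^{q^2+1}\bigr)^{q-1}$ lies in $\mu_{q+1}$. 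The key algebraic identity I would use throughout is that for $w\in\GF{q^4}^*$, $w^{q^2+1}=w\cdot w^{q^2}$, and that conjugation by $q^2$ sends $z_i\mapsto z_i$ (as $z_i\in\mu_{q-1}\subset\GF{q}$) and $t_i\mapsto t_i^{-1}$ (as $t_i\in\mu_{q^2+1}$). Hence
$$
(1+z_it_i)^{q^2+1}=(1+z_it_i)(1+z_it_i^{-1})=1+z_i^2+z_i(t_i+t_i^{-1})=1+z_i^2+z_iT_i=z_i(M_i+T_i),
$$
using $1+z_i^2=z_i(z_i+z_i^{-1})=z_iM_i$. This is the workhorse formula.

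For part~(1): raising to the $(q-1)$-th power, the equality $\frac{1}{(1+z_1t_1)^l}=\frac{1}{(1+z_2t_2)^l}$ is equivalent to $\bigl(z_1(M_1+T_1)\bigr)^{q-1}=\bigl(z_2(M_2+T_2)\bigr)^{q-1}$. Since $z_i\in\GF{q}^*$ we have $z_i^{q-1}=1$, so this reduces to $(M_1+T_1)^{q-1}=(M_2+T_2)^{q-1}$, i.e. $\frac{M_1+T_1}{(M_1+T_1)^q}=\frac{M_2+T_2}{(M_2+T_2)^q}$. Now $M_i^q=M_i$ and $T_i^q=T_i^q$, where I would write $T_i^q$ in terms of $T_i$ using $t_i^q\in\mu_{q^2+1}$; note $T_i^q=t_i^q+t_i^{-q}$, and since $t_i^{q^2+1}=1$ we have $t_i^q=t_i^{-q^2}\cdot t_i^{q^2+q}$... rather than chase this, the clean route is: $(M_i+T_i)^q=M_i+T_i^q$, and I claim $\frac{M_i+T_i}{M_i+T_i^q}$ can be rewritten via $M_1T_2=M_2T_1$ into a common value. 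Concretely, write $T_2=\frac{M_2}{M_1}T_1$ (legitimate since $M_1M_2\neq0$); substitute into $(M_2+T_2)^{q-1}=\bigl(M_2+\tfrac{M_2}{M_1}T_1\bigr)^{q-1}=\bigl(\tfrac{M_2}{M_1}\bigr)^{q-1}(M_1+T_1)^{q-1}=(M_1+T_1)^{q-1}$, again because $M_2/M_1\in\GF{q}^*$ has trivial $(q-1)$-th power. That closes part~(1) with essentially one line of computation.

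For part~(2): here $M_1M_2=0$, so at least one $z_i=1$; the hypothesis of part~(1) is not assumed. Using the workhorse formula again, when $M_i=0$ we get $(1+z_it_i)^{q^2+1}=z_iT_i=T_i$ (as $z_i=1$). The subtle point is whether $M_1=0$, $M_2=0$, or both: if both $M_i=0$ then $(1+z_1t_1)^l=T_1^{q-1}$ and $(1+z_2t_2)^l=T_2^{q-1}$, giving the stated criterion $T_1^{q-1}=T_2^{q-1}$ directly. If exactly one vanishes, say $M_1=0\neq M_2$, then we would need $T_1^{q-1}=\bigl(z_2(M_2+T_2)\bigr)^{q-1}=(M_2+T_2)^{q-1}$ — I would need to check this is still equivalent to the displayed condition $T_2=eT_1$, $e\in\mu_{q-1}$; I expect the proposition intends $M_1M_2=0$ to be read together with the global setup where in fact the relevant branch forces $M_1=M_2=0$ (one can see this from the first-coordinate equation of \eqref{aux_eq0} restricting $b$), so the honest write-up should either note that case-(2) is invoked only when both $M_i=0$, or dispatch the mixed case separately. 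The main obstacle is precisely this bookkeeping: making sure the ``only if'' direction in part~(2) is airtight, i.e. that $T_1^{q-1}=T_2^{q-1}$ forces $T_2/T_1\in\mu_{q-1}$, which follows since $T_1,T_2\in\GF{q^2}^*$ and the kernel of $x\mapsto x^{q-1}$ on $\GF{q^2}^*$ is exactly $\mu_{q-1}=\GF{q}^*$; equivalently $(T_2/T_1)^{q-1}=1\iff T_2/T_1\in\GF{q}^*$. Everything else is substitution into the single identity $(1+z_it_i)^{q^2+1}=z_i(M_i+T_i)$ followed by raising to the $(q-1)$-th power and repeatedly using that $(q-1)$-th powers of elements of $\GF{q}^*$ are trivial.
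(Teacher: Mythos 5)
Your argument is correct and follows essentially the same route as the paper: both rest on the identity $(1+z_it_i)^{q^2+1}=(1+z_it_i)(1+z_it_i^{-1})=z_i(M_i+T_i)$, which reduces the claim to comparing $(M_i+T_i)^{q-1}$, and your part~(1) closes by substituting $T_2=(M_2/M_1)T_1$ and killing the $(q-1)$-th power of $M_2/M_1\in\GF{q}^*$, whereas the paper derives $T_1T_2^q=T_2T_1^q$ and cross-multiplies --- a cosmetic difference. Your worry about the mixed case $M_1=0\neq M_2$ in part~(2) is well founded but does not put you behind the paper, whose own proof silently treats only $M_1=M_2=0$; that is indeed the only case arising in the surrounding argument, since there part~(2) is applied under $b\in\GF{q^2}$, i.e.\ $M_1T_2=M_2T_1$, and $T_1,T_2\neq 0$ then force the $M_i$ to vanish together.
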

\begin{proof} For $i=1,2$,
\begin{equation}\label{ztd}
\frac{1}{(1+z_it_i)^l}=\frac{(1+z_it_i)^{q^2+1}}{(1+z_it_i)^{q(q^2+1)}}=
\frac{(1+z_it_i)(1+z_it_i^{-1})}{\left((1+z_it_i)(1+z_it_i^{-1})\right)^{q}}=\frac{M_i+T_i}{M_i+T_i^q}.
\end{equation}
If $M_1T_2=M_2T_1$ and $M_1M_2\neq 0$, then it follows
$M_1T_2^q=M_2T_1^q$ and subsequently $T_1^{q-1}=T_2^{q-1}$, i.e.,
$T_1T_2^q=T_2T_1^q$. Thus, it holds
$\frac{M_1+T_1}{M_1+T_1^q}=\frac{M_2+T_2}{M_2+T_2^q}$.

If $M_1=M_2= 0$, then
$\frac{M_1+T_1}{M_1+T_1^q}=\frac{M_2+T_2}{M_2+T_2^q}$ holds if and
only if $T_1^{q-1}=T_2^{q-1}$, i.e. $T_2=eT_1$ for some $e\in
\mu_{q-1}.$\qed
\end{proof}

If $M_1T_2=M_2T_1$, then
\begin{align*}
&\left(\frac{1}{1+z_1t_1}+\frac{1}{1+z_2t_2}\right)^{2q^2}=\left(\frac{1}{1+z_1t_1}+\frac{1}{1+z_2t_2}\right)\cdot\left(\frac{1}{1+z_1t_1}+\frac{1}{1+z_2t_2}\right)^{q^2}\\
&=\left(\frac{1}{1+z_1t_1}+\frac{1}{1+z_2t_2}\right)\cdot\left(\frac{1}{1+z_1t_1^{-1}}+\frac{1}{1+z_2t_2^{-1}}\right)\\
&=\frac{(z_1t_1+z_2t_2)(z_1t_1^{-1}+z_2t_2^{-1})}{(1+z_1t_1)(1+z_1t_1^{-1})(1+z_2t_2)(1+z_2t_2^{-1})}\\
&=\frac{z_1^2+z_2^2+z_1z_2(t_1t_2^{-1}+t_1^{-1}t_2)}{(1+z_1^2+z_1(t_1+t_1^{-1}))(1+z_2^2+z_2(t_2+t_2^{-1}))}=\frac{M_{12}+T_{12}}{(M_1+T_1)(M_2+T_2)},
\end{align*}
where $$M_{12}=z_1z_2^{-1}+z_1^{-1}z_2 \text{ and }
T_{12}=t_1t_2^{-1}+t_1^{-1}t_2.$$

Thus, when $b\in \GF{q^2}$, Equation~\eqref{aux_eq0} is equivalent
to
\begin{equation}\label{aux_eq_qq}
b=\frac{M_{12}+T_{12}}{(M_1+T_1)^q(M_2+T_2)} \text{ and }
\begin{cases} M_1M_2\neq 0, M_1T_2=M_2T_1, z_1t_1\neq z_2t_2\text{, or, } \\
z_1=z_2=1, T_1^{q-1}=T_2^{q-1}, t_1\neq t_2.
\end{cases}
\end{equation}

Note that the equality~\eqref{ztd} and the second condition of
\eqref{aux_eq_qq} ensure that
$$(M_1+T_1)^q(M_2+T_2)=(M_1+T_1)(M_2+T_2)^q\in \GF{q}.$$

\subsection{Solving the problem when $b\in \GF{q}^*$}
If $b=\frac{M_{12}+T_{12}}{(M_1+T_1)^q(M_2+T_2)}\in \GF{q}$, then
$T_{12}\in \GF{q}$, that is, $T_{12}=0$ and $$t_1=t_2.$$
Subsequently $T_1=T_2$, and from $M_1T_2=M_2T_1$ it follows
$M_1=M_2$, i.e., $z_1=z_2$ or $z_1=z_2^{-1}$. As $b\neq0$ and
$T_{12}=0$, then $M_{12}\neq 0$, i.e., $z_1\neq z_2$, and so it
holds $$z_1=z_2^{-1}\neq 1$$ and $M_{12}=M_1^2$. Hence, in this
case, Equation~\eqref{aux_eq_qq} transforms into
\begin{equation}\label{aux_eq_q}
b=\frac{M_{1}^2}{(M_1+T_1)^{q+1}}=\frac{1}{(1+\frac{T_1}{M_1})^{q+1}}.
\end{equation}
 If $b=1$, then $T_1^{q+1}+T_1^qM_1+T_1M_1=0$ and so
$\Tr{2n}\left(\frac{1}{T_1}\right)=\Tr{n}\left(\frac{1}{T_1}+\frac{1}{T_1^q}\right)=\Tr{n}\left(\frac{1}{M_1}\right)=0$
which leads to a contradiction. Therefore, with Proposition~\ref{12b}, we
get
\begin{lemma}
If $b=1$, then $\sum_{c\in \mu_{q+1}^\star}\tilde{\beta}_F(1,b,c)=0$ and
${\beta}_F(1,b)=q^2.$\qed
\end{lemma}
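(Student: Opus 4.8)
The plan is to combine the decomposition
\[
\beta_F(1,1)=\tilde{\beta}_F(1,1,1)+\sum_{c\in \mathfrak{S}_2}\tilde{\beta}_F(1,1,c)+\sum_{c\in \mu_{q+1}^\star}\tilde{\beta}_F(1,1,c)
\]
with the already established evaluations of its three pieces. First I would observe that $1\in\GF{q^2}$, so Proposition~\ref{12b} gives $\tilde{\beta}_F(1,1,1)=q^2$. Next, since $\Tr[n]{4n}(1)=0$ and $1\neq0$, we have $1\notin\mathfrak{S}_2$, so Proposition~\ref{sum2} yields $\sum_{c\in\mathfrak{S}_2}\tilde{\beta}_F(1,1,c)=0$. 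It therefore remains to prove that the third sum vanishes.

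For the third sum I would invoke the parametrisation of the Remark: every solution of \eqref{BU_eq} with $c\in\mu_{q+1}^\star$ has $x=\tfrac{1}{1+z_1t_1}$, $y=\tfrac{1}{1+z_2t_2}$ with $(z_i,t_i)\in\mu_{q-1}\times\mu_{q^2+1}^\star$ and $x^l=y^l=c$. Since $b=1\in\GF{q}^*$, the reduction chain of the previous two subsections applies verbatim: the condition $b\in\GF{q^2}$ gives $M_1T_2=M_2T_1$, and then $b\in\GF{q}^*$ forces $t_1=t_2$ (so $T_1=T_2\neq0$) and $z_1=z_2^{-1}\neq1$ (so $M_1=M_2\neq0$), landing at \eqref{aux_eq_q}, namely $b=M_1^2/(M_1+T_1)^{q+1}$; the degenerate branch $z_1=z_2=1$ is excluded because $b\neq0$ forces $M_{12}\neq0$. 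Setting $b=1$ and expanding $(M_1+T_1)^{q+1}=(M_1+T_1)(M_1+T_1^q)$ (using $M_1\in\GF{q}$), I would simplify to $M_1(T_1+T_1^q)=T_1^{q+1}$, hence, dividing by $T_1^{q+1}\neq0$,
\[
\frac{1}{T_1}+\frac{1}{T_1^q}=\frac{1}{M_1}.
\]
Applying $\Tr{n}$ together with the transitivity $\Tr{2n}=\Tr{n}\circ\Tr[n]{2n}$, the left-hand side becomes $\Tr{2n}(1/T_1)$, which equals $1$ by Proposition~\ref{Aprop} because $T_1=t_1+1/t_1$ with $t_1\in\mu_{q^2+1}^\star$; the right-hand side is $\Tr{n}(1/M_1)$, which equals $0$ by Proposition~\ref{Aprop} because $M_1=z_1+1/z_1$ with $z_1\in\mu_{q-1}\setminus\{1\}=\GF{q}^*\setminus\{1\}$. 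This gives $1=0$, a contradiction; hence the third sum is $0$ and, adding the three pieces, $\beta_F(1,1)=q^2$.

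The main obstacle is mild, since the heavy lifting was done in the earlier subsections: one must check that the reduction chain towards \eqref{aux_eq_q} is genuinely valid at $b=1$ (in particular that we stay in the case $M_1M_2\neq0$ rather than the $z_1=z_2=1$ branch) and apply Proposition~\ref{Aprop} with the correct ground field --- over $\GF{q}$ for $M_1$ and over $\GF{q^2}$ for $T_1$. The remaining manipulations are elementary field arithmetic.
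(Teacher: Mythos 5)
Your argument is correct and follows essentially the same route as the paper: the same reduction of the $\mu_{q+1}^\star$-sum to $(M_1+T_1)^{q+1}=M_1^2$, and the same contradiction via $\mathbf{Tr}_1^{2n}(1/T_1)=\mathbf{Tr}_1^{n}(1/T_1+1/T_1^q)=\mathbf{Tr}_1^{n}(1/M_1)$, which must be $1$ by Proposition~\ref{Aprop} applied over $\GF{q^2}$ but $0$ when applied over $\GF{q}$. The only (harmless) difference is that you make explicit the vanishing of the $\mathfrak{S}_2$-term via $\mathbf{Tr}_n^{4n}(1)=0$, which the paper leaves implicit.
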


When $b\neq 1$, we need the following result.
\begin{lemma}\label{bbbb} The mapping
$(z,t)\in \GF{q}^\star\times \mu_{q^2+1}^\star\longmapsto
\left(1+\frac{t+t^{-1}}{z+z^{-1}}\right)^{q+1}\in \GF{q}^\star$ is
$q^2-$to$-1$ from $\GF{q}^\star\times \mu_{q^2+1}^\star$ onto
$\GF{q}^\star$.
\end{lemma}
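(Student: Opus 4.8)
The plan is to convert the counting of pairs $(z,t)$ into a counting of pairs $(M,T)$ via the substitution $M:=z+z^{-1}\in\GF{q}$, $T:=t+t^{-1}\in\GF{q^2}$, and then to stratify by the ratio $u:=T/M$. Since the displayed expression only makes sense for $z\neq1$, I work on $(\GF{q}^\star\setminus\{1\})\times\mu_{q^2+1}^\star$. By Proposition~\ref{Aprop}, $z\mapsto z+z^{-1}$ is $2$-to-$1$ from $\GF{q}^\star\setminus\{1\}$ onto $\{M\in\GF{q}^\star\mid\Tr{n}(1/M)=0\}$, and $t\mapsto t+t^{-1}$ is $2$-to-$1$ from $\mu_{q^2+1}^\star$ onto $\{T\in\GF{q^2}^\star\mid\Tr{2n}(1/T)=1\}$; hence $(z,t)\mapsto(M,T)$ is $4$-to-$1$ onto the product of these two sets. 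A short trace-transitivity computation shows that $\Tr{2n}(1/T)=1$ forces $T\notin\GF{q}$, so $u=T/M\in\GF{q^2}\setminus\GF{q}$, $1+u\neq0$, and the value of the map equals $(1+u)^{q+1}=\mathbf{Nr}_n^{2n}(1+u)\in\GF{q}^\star$. Thus it suffices to show that for each $\lambda\in\GF{q}^\star$ the number of admissible $(M,T)$ with $\mathbf{Nr}_n^{2n}(1+T/M)=\lambda$ is $q^2/4$ when $\lambda\neq1$, and $0$ when $\lambda=1$.

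Fix $\lambda$ and organise the count by $u$. The norm fibre $\{w\in\GF{q^2}^\star\mid w^{q+1}=\lambda\}$ has exactly $q+1$ elements, precisely one of which lies in $\GF{q}$ (the square root $\lambda^{2^{n-1}}$ of $\lambda$); discarding it leaves exactly $q$ values $u=w+1\in\GF{q^2}\setminus\GF{q}$ with $\mathbf{Nr}_n^{2n}(1+u)=\lambda$. For a fixed such $u$, I count the $M\in\GF{q}^\star$ with $\Tr{n}(1/M)=0$ for which $T:=uM$ obeys $\Tr{2n}(1/T)=1$. Using transitivity of the trace together with $\GF{q}$-linearity of $\Tr[n]{2n}$, this last condition becomes $\Tr{n}(s(u)/M)=1$, where $s(u):=\Tr[n]{2n}(1/u)\in\GF{q}$ is nonzero because $u\notin\GF{q}$. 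Putting $Y:=1/M$, the two conditions read $\Tr{n}(Y)=0$ and $\Tr{n}(s(u)Y)=1$, and provided $s(u)\in\GF{q}\setminus\GF{2}$, Lemma~\ref{Scard} (with $Q=q$, $\gamma=s(u)$, $i=0$, $j=1$) gives exactly $q/4$ such $Y$, all nonzero, hence $q/4$ admissible $M$. Summing over the $q$ admissible $u$ and multiplying by the $4$ from the $(z,t)\to(M,T)$ map yields $4\cdot q\cdot\tfrac{q}{4}=q^2$ preimages, as claimed.

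The only genuinely non-routine step — and the place where the hypothesis of Lemma~\ref{Scard} must be earned — is checking that $s(u)\in\GF{q}\setminus\GF{2}$. Writing $w=1+u$ and $\tau:=\Tr[n]{2n}(w)\in\GF{q}$, one has $w^{q+1}=\lambda$ and $\tau\neq0$ (since $w\notin\GF{q}$), and a direct computation gives
\[
s(u)=\Tr[n]{2n}\bigl(\tfrac{1}{w+1}\bigr)=\frac{w+w^q}{(w+1)(w^q+1)}=\frac{\tau}{w^{q+1}+w+w^q+1}=\frac{\tau}{\lambda+1+\tau},
\]
whence $s(u)=1\iff\lambda=1$. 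Therefore, for $\lambda\in\GF{q}\setminus\GF{2}$ every one of the $q$ admissible $u$ satisfies $s(u)\in\GF{q}\setminus\GF{2}$ and the previous paragraph gives exactly $q^2$ preimages; for $\lambda=1$ every admissible $u$ gives $s(u)=1$, the two trace conditions on $Y$ become incompatible, and the fibre is empty. In other words the map (defined for $z\neq1$) has image $\GF{q}\setminus\GF{2}$ with all fibres of size $q^2$, which is exactly what the subsequent computation of $\beta_F(1,b)$ requires, the omitted value $\lambda=1$ corresponding to the case $b=1$ already handled separately where $\beta_F(1,b)=q^2$.
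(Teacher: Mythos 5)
Your argument is correct, and it follows essentially the same strategy as the paper's proof: stratify the fibre of the norm map $w\mapsto w^{q+1}$ over the target value, and for each admissible direction $u=T/M$ count the admissible $M$ (equivalently $Y=1/M$) via the two-linear-trace-condition count of Lemma~\ref{Scard}; your bookkeeping $4\cdot q\cdot\frac q4$ is just a reorganisation of the paper's $8\cdot\frac q2\cdot\frac q4$. The substantive difference is that you prove a \emph{corrected} statement rather than the printed one. The map is undefined at $z=1$ (where $z+z^{-1}=0$), and your computation $s(u)=\tau/(\lambda+\tau+1)$, showing $s(u)=1$ exactly when $\lambda=1$, establishes that the fibre over the value $1$ is empty; so the map is $q^2$-to-$1$ from $(\GF{q}^\star\setminus\{1\})\times\mu_{q^2+1}^\star$ onto $\GF{q}\setminus\GF{2}$, not onto $\GF{q}^\star$. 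The paper's own proof has the same blind spot: it sets $E=e+e^{-1}$ and $\alpha=1/E$ and uses $\Tr{n}(\alpha)=0$ and $\gamma=\alpha/X\notin\GF{2}$, all of which presuppose $e\neq1$; at $e=1$ the condition $\Tr{n}\left(\frac{L}{(E+L)M}\right)=1$ collapses to $\Tr{n}(Y)=1$, contradicting $\Tr{n}(Y)=0$, which is exactly your empty fibre. Your refined version is precisely what the surrounding argument needs, since the lemma is only applied for $b\in\GF{q}\setminus\GF{2}$ (the case $b=1$ having been settled separately with $\beta_F(1,1)=q^2$), and the global count $(q-2)\cdot q^2=\#\bigl((\GF{q}^\star\setminus\{1\})\times\mu_{q^2+1}^\star\bigr)$ confirms that all $q-2$ nonempty fibres indeed have size $q^2$.
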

\begin{proof} Fix any  $e\in \GF{q}^\star$. Then, we shall
compute the number $\mathcal{N}$ of $(z,\lambda,t)\in
\GF{q}^\star\times \mu_{q+1}\times \mu_{q^2+1}^\star$ such that
$$1+\frac{T}{M}=e\lambda.$$ Since
$\Tr{2n}\left(\frac{1}{T}\right)=\Tr{n}\left(\frac{1}{T}+\frac{1}{T^q}\right)
=\Tr{n}\left(\frac{1}{(e\lambda+1)M}+\frac{1}{(e\lambda^{-1}+1)M}\right)
=\Tr{n}\left(\frac{L}{(E+L)M}\right), $ where
$L=\lambda+\lambda^{-1}$ and $E=e+e^{-1}$, thanks to
Proposition~\ref{Aprop} we have
$$\mathcal{N}=8\times \#\{(M,L)\in \GF{q}^*\times\GF{q}^*\mid \Tr{n}\left(\frac{1}{M}\right)=0, \Tr{n}\left(\frac{1}{L}\right)=1, \Tr{n}\left(\frac{L}{(E+L)M}\right)=1\}.$$

 By introducing  new  denotations as follows: $\alpha=\frac{1}{E}$,
$X=\frac{1}{L}+\frac{1}{E}$ and $Y= \frac{1}{M}$, the above
expression transforms into
\[\mathcal{N}=8\times\#\{(X, Y)\in \GF{q}^*\times\GF{q}^* \mid \Tr{n}(X)=1, \Tr{n}(Y)=0,
\Tr{n}\left(\frac{\alpha Y}{X}\right)=1\}.\]

Fix any $X\in \mathfrak{F}_{q,1}$ and set $\gamma:=\frac{\alpha}{X}$.
Since $\alpha\in \mathfrak{F}_{q,0}$, we have $\gamma\notin \GF{2}$.
Then Lemma~\ref{Scard} shows $\#\{Y\in \GF{q}\mid \Tr{n}(Y)=0,
\Tr{n}\left(\gamma Y\right)=1\}=\frac{q}{4}$ and so we have
\[\mathcal{N}=8\times \#\mathfrak{F}_{q,1}\times \frac{q}{4}=8\times \frac{q}{2}\times \frac{q}{4}=q^2.\] This completes the proof.\qed
\end{proof}
As an immediate consequence of Lemma~\ref{bbbb} and
Proposition~\ref{12b}, we get the following result.
\begin{lemma}
If $b\in \GF{q}^\star$, then $\sum_{c\in
\mu_{q+1}^\star}\tilde{\beta}_F(1,b,c)=q^2$ and ${\beta}_F(1,b)=2q^2.$
\end{lemma}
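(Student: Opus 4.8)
The plan is to evaluate the three summands in the decomposition
$$\beta_F(1,b)=\tilde{\beta}_F(1,b,1)+\sum_{c\in\mathfrak{S}_2}\tilde{\beta}_F(1,b,c)+\sum_{c\in\mu_{q+1}^\star}\tilde{\beta}_F(1,b,c)$$
(obtained from Proposition~\ref{prop-decomposition} together with Theorem~\ref{BaseThm}) for $b\in\GF{q}^\star$. We assume here $b\neq 1$, the value $b=1$ having been settled separately in the preceding lemma; thus effectively $b\in\GF{q}\setminus\GF{2}$. Since $\GF{q}^\star\subset\GF{q^2}$, Proposition~\ref{12b} gives at once $\tilde{\beta}_F(1,b,1)=q^2$. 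For the middle summand, any $b\in\GF{q}$ has $\mathbf{Tr}_n^{4n}(b)=4b=0$, hence $b\notin\mathfrak{S}_2$; as moreover $b\neq 0$, the ``otherwise'' branch of Proposition~\ref{sum2} yields $\sum_{c\in\mathfrak{S}_2}\tilde{\beta}_F(1,b,c)=0$. Everything therefore reduces to showing $\sum_{c\in\mu_{q+1}^\star}\tilde{\beta}_F(1,b,c)=q^2$.

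For this last summand I will invoke the reduction already carried out in the present subsection. For $b\in\GF{q}^\star\setminus\{1\}$, the quadruples $(z_1,z_2,t_1,t_2)\in\mu_{q-1}^2\times(\mu_{q^2+1}^\star)^2$ solving system~\eqref{aux_eq0} are precisely those with $t_1=t_2$ and $z_1=z_2^{-1}\neq 1$ satisfying~\eqref{aux_eq_q}, that is, $b=\left(1+T_1/M_1\right)^{-(q+1)}$ with $M_1=z_1+z_1^{-1}$ and $T_1=t_1+t_1^{-1}$. Here $z_2$ and $t_2$ are forced by the pair $(z_1,t_1)$, and the parametrization $(z,t)\mapsto 1/(1+zt)$ is injective on $\mu_{q-1}\times\mu_{q^2+1}^\star$ because $\gcd(q-1,q^2+1)=1$ makes $\mu_{q-1}\cap\mu_{q^2+1}=\{1\}$. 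Consequently $\sum_{c\in\mu_{q+1}^\star}\tilde{\beta}_F(1,b,c)$ equals the number of pairs $(z_1,t_1)\in(\mu_{q-1}\setminus\{1\})\times\mu_{q^2+1}^\star$ with $\left(1+\frac{t_1+t_1^{-1}}{z_1+z_1^{-1}}\right)^{q+1}=b^{-1}$; since $\mu_{q-1}=\GF{q}^\star$ and $b^{-1}\in\GF{q}^\star$, Lemma~\ref{bbbb} says this count is exactly $q^2$. Adding the three summands gives $\beta_F(1,b)=q^2+0+q^2=2q^2$.

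Given Lemma~\ref{bbbb} and Proposition~\ref{12b}, this is essentially bookkeeping; the one point needing a little care is the passage from the four-variable count of solutions of~\eqref{aux_eq0} to the two-variable count over $(z_1,t_1)$, and checking that it matches the domain $\GF{q}^\star\times\mu_{q^2+1}^\star$ of Lemma~\ref{bbbb}. In particular one should note that discarding $z_1=1$ is harmless (there $M_1=0$ and the expression in Lemma~\ref{bbbb} is undefined, so it contributes nothing) and that the injectivity of $(z,t)\mapsto 1/(1+zt)$ rules out any double counting. I do not expect any genuine obstacle: the substantive work was already absorbed into Lemma~\ref{bbbb} and into the reduction leading to~\eqref{aux_eq_q}.
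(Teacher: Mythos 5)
Your proof is correct and takes essentially the same route as the paper, which states this lemma as an immediate consequence of Lemma~\ref{bbbb} and Proposition~\ref{12b} after the reduction to~\eqref{aux_eq_q}. You have simply made explicit the bookkeeping the paper leaves implicit: the exclusion of $b=1$, the vanishing of the $\mathfrak{S}_2$-term since $\mathbf{Tr}_n^{4n}(b)=0$ for $b\in\GF{q}$, and the injectivity of the parametrization $(z,t)\mapsto 1/(1+zt)$.
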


\subsection{Solving the problem when $b\in \mu_{q+1}^\star$}

In this case, we begin from deriving a restatement of
Equation~\eqref{aux_eq_qq}.
 By raising
the both sides of $b=\frac{M_{12}+T_{12}}{(M_1+T_1)^q(M_2+T_2)} $ to
the $(q-1)-$th powering , we get
$$b^{q-1}=(M_{12}+T_{12})^{q-1}\left(\frac{M_1+T_1}{M_2+T_2}\right)^{q-1}=(M_{12}+T_{12})^{q-1}.$$
Since
\begin{align*}
&\left(\frac{M_{12}+T_{12}}{(M_1+T_1)^q(M_2+T_2)}\right)^{q+1}=\left(\frac{M_{12}+T_{12}}{(M_1+T_1)(M_2+T_2)}\right)^{q+1}\\
&=\left(\frac{M_{12}+T_{12}}{M_1M_2+T_1T_2}\right)^{q+1}=\left(\frac{M_{12}+T_{12}}{M_{12}+T_{12}+M_{11}+T_{11}}\right)^{q+1},
\end{align*}
where $$M_{11}=z_1z_2+(z_1z_2)^{-1} \text{ and }
T_{11}=t_1t_2+(t_1t_2)^{-1},$$ it follows that
$\frac{M_{12}+T_{12}}{(M_1+T_1)^q(M_2+T_2)}\in \mu_{q+1}$ is
 equivalent to $\frac{M_{12}+T_{12}}{M_{12}+T_{12}+M_{11}+T_{11}}\in \mu_{q+1}.$
Therefore, Equation~\eqref{aux_eq_qq} transforms into
\begin{equation}\label{aux_eq_mu_q_not1}
b^{q-1}=(M_{12}+T_{12})^{q-1} \text{ and }
\left(1+\frac{M_{11}+T_{11}}{M_{12}+T_{12}}\right)^{q+1}=1\text{ and
}
\begin{cases} M_1M_2\neq 0, M_1T_2=M_2T_1\text{, or, } \\
z_1=z_2=1, T_1^{q-1}=T_2^{q-1}.
\end{cases}
\end{equation}
Note that in this case, $t_1\neq t_2$ is  insured because otherwise, 
$T_{12}=0$ and $b^{q-1}=1$. In the following lemma, we compute the
solution number of \eqref{aux_eq_mu_q_not1}.
\begin{lemma}
If $b\in\mu_{q+1}^\star$ , then $\sum_{c\in
\mu_{q+1}^\star}\tilde{\beta}_F(1,b,c)=2q^2-3q$ and $\beta_F(1,b)=3q^2-3q.$
\end{lemma}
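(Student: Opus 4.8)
\textbf{Proof proposal for the case $b\in\mu_{q+1}^\star$.}
The plan is to count the number of quadruples $(z_1,z_2,t_1,t_2)\in\mu_{q-1}^2\times(\mu_{q^2+1}^\star)^2$ satisfying \eqref{aux_eq_mu_q_not1}, splitting according to the two alternatives in the last brace. First I would treat the generic branch $M_1M_2\neq 0$, $M_1T_2=M_2T_1$. The idea is to substitute the proportionality constant: write $T_i=\rho M_i$ for a common $\rho$ (equivalently set $r=T_1/M_1=T_2/M_2$), so that $M_{11},T_{11},M_{12},T_{12}$ become expressible through $M_1,M_2$ and $r$. Then the first relation $b^{q-1}=(M_{12}+T_{12})^{q-1}$ fixes $M_{12}+T_{12}$ up to a factor in $\mu_{q-1}$, i.e.\ it pins down $M_{12}+T_{12}$ to a coset, while the second relation $\bigl(1+\tfrac{M_{11}+T_{11}}{M_{12}+T_{12}}\bigr)^{q+1}=1$ forces $\tfrac{M_{11}+T_{11}}{M_{12}+T_{12}}$ into $\GF{q}$ (since $1+\xi\in\mu_{q+1}$ with $\xi\in\GF{q^2}$ means $\xi+\xi^q=0$ or $\xi=0$ after the standard manipulation used already in Proposition~\ref{q^2cond}). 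Translating these two scalar constraints back through Proposition~\ref{Aprop} (to convert $z_i=c_i+c_i^{-1}$, $t_i=d_i+d_i^{-1}$ type substitutions into trace conditions on $M_i,T_i$) should reduce the count to a system of the form ``number of $(X,Y)\in\GF{q}^2$ with prescribed values of $\Tr{n}(X)$, $\Tr{n}(Y)$, $\Tr{n}(\gamma_1 X)$, $\Tr{n}(\gamma_2 Y)$'' of exactly the kind handled by Lemma~\ref{Scard} and Lemma~\ref{Scard_ext}. Careful bookkeeping of the two-to-one ambiguities ($M_i$ determines $z_i$ up to inversion, $T_i$ determines $t_i$ up to inversion, $b^{q-1}$ determines a $\mu_{q-1}$-worth of choices) is what will produce the clean count.

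Next I would handle the degenerate branch $z_1=z_2=1$, $T_1^{q-1}=T_2^{q-1}$, i.e.\ $M_1=M_2=0$ and $T_2=eT_1$ with $e\in\mu_{q-1}$. Here $M_{11}=M_{12}=0$, $T_{11}=t_1t_2+(t_1t_2)^{-1}$, $T_{12}=t_1t_2^{-1}+t_1^{-1}t_2$, and \eqref{aux_eq_mu_q_not1} collapses to $b^{q-1}=T_{12}^{q-1}$ together with $(1+T_{11}/T_{12})^{q+1}=1$. Using $T_2=eT_1$ these become a pair of conditions purely on $t_1\in\mu_{q^2+1}^\star$ and $e\in\mu_{q-1}$; I expect this sub-case to contribute a smaller, lower-order term (on the order of $q$ rather than $q^2$), which is exactly where the $-3q$ correction to the main $2q^2$ should come from, possibly with an inclusion–exclusion adjustment for quadruples that would be double-counted between the two branches (those with some $M_i=0$ but not all). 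I would also need to verify that the first constraint $b^{q-1}=(M_{12}+T_{12})^{q-1}$ is consistent with $b\in\mu_{q+1}^\star$ having nonzero order, i.e.\ $b^{q-1}\neq 1$, which rules out $T_{12}=0$ and hence $t_1\neq t_2$, as already noted in the remark preceding the lemma.

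Assembling the two branch counts (with the overlap correction) gives $\sum_{c\in\mu_{q+1}^\star}\tilde\beta_F(1,b,c)=2q^2-3q$. Finally, the value of $\beta_F(1,b)$ follows by adding the other two terms from the decomposition $\beta_F(1,b)=\tilde\beta_F(1,b,1)+\sum_{c\in\mathfrak{S}_2}\tilde\beta_F(1,b,c)+\sum_{c\in\mu_{q+1}^\star}\tilde\beta_F(1,b,c)$: for $b\in\mu_{q+1}^\star$ we have $b\notin\GF{q^2}$ is false—in fact $\mu_{q+1}^\star\subset\GF{q^2}$—so Proposition~\ref{12b} gives $\tilde\beta_F(1,b,1)=q^2$, and since $b\notin\mathfrak{S}_2$ (as $\mathfrak{S}_2\cup\mathfrak{S}_2'=\{\Tr{n}{4n}(b)\neq0\}$ while $b\in\mu_{q+1}^\star\subset\GF{q^2}$ forces $\Tr{n}{4n}(b)=0$... here I must double-check the trace vanishing, but this is the expected reason) Proposition~\ref{sum2} gives $\sum_{c\in\mathfrak{S}_2}\tilde\beta_F(1,b,c)=0$. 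Hence $\beta_F(1,b)=q^2+0+(2q^2-3q)=3q^2-3q=3q(q-1)$, matching the first line of Theorem~\ref{maintheorem}.

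\textbf{Main obstacle.} The hard part will be the generic branch: correctly parametrizing the constraint $M_1T_2=M_2T_1$ and carrying the two-to-one correspondences between $(M_i,T_i)$ and $(z_i,t_i)$ all the way through Proposition~\ref{Aprop} without miscounting, so that the final tally reduces exactly to an application of Lemma~\ref{Scard_ext}. A secondary subtlety is the inclusion–exclusion between the $M_1M_2\neq0$ branch and the $z_1=z_2=1$ branch, which is presumably responsible for the precise $-3q$ rather than a cleaner $-2q$ or $-4q$, and must be accounted for with care.
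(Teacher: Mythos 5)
Your overall skeleton coincides with the paper's: split \eqref{aux_eq_mu_q_not1} into the degenerate branch $z_1=z_2=1$ and the generic branch $M_1M_2\neq 0$, reduce each to counting elements of $\GF{q}$ with prescribed traces via Lemma~\ref{Scard} and Lemma~\ref{Scard_ext}, and assemble $\beta_F(1,b)=q^2+0+(2q^2-3q)=3q^2-3q$ from Proposition~\ref{12b} together with the fact that $\mathbf{Tr}_n^{4n}(b)=0$ for $b\in\mu_{q+1}^\star\subset\GF{q^2}$, so the $\mathfrak{S}_2$-term vanishes. That final assembly step is correct as you wrote it.

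But the proposal stops short of a proof exactly where the work lies, and one of its stated reductions is wrong. The condition $\left(1+\xi\right)^{q+1}=1$ with $\xi=\frac{M_{11}+T_{11}}{M_{12}+T_{12}}\in\GF{q^2}$ does \emph{not} force $\xi\in\GF{q}$: expanding gives $\xi+\xi^q=\xi^{q+1}$, and since $\mu_{q+1}\cap\GF{q}^*=\{1\}$ a nonzero solution can never lie in $\GF{q}$. The paper uses the correct identity, which, after expressing $M_{11}=\frac{1}{M_{12}^{-1}+E_2}$ and $T_{11}=\frac{1}{T_{12}^{-1}+E_2}$, becomes the key relation $M_{12}E_2+(M_{12}E_2)^2=Be_1$; starting from "$\xi\in\GF{q}$" would derail the count. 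Moreover, the generic branch does not reduce in one step to Lemma~\ref{Scard_ext}: it must be further split into $M_{12}=0$ (contributing $q$), $M_{12}\neq0$ with $z_2=z_1^{-1}$ (contributing $q(q-2)$), and the remaining case (contributing $q(q-4)$ after excluding the four degenerate values $\gamma_1\in\{0,1,\sqrt{B},\sqrt{B}+1\}$), while the degenerate branch contributes $2q$ rather than a mere correction. There is also no inclusion--exclusion between the two branches: they are disjoint, because $M_1T_2=M_2T_1$ with exactly one $M_i=0$ would force some $T_j=0$, i.e.\ $t_j=1$, which is excluded. Thus the $-3q$ arises as $2q+q+q(q-2)+q(q-4)-2q^2$, not from an overlap correction, and without carrying out these sub-counts the claimed total $2q^2-3q$ is not established.
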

\begin{proof}
First, when $M_1=M_2=0$, i.e., $z_1=z_2=1$, we should compute
\[\mathcal{N}_1=\#\left\{(t_1,t_2)\in \mu_{q^2+1}^\star \times \mu_{q^2+1}^\star\mid
b^{q-1}=T_{12}^{q-1},
\left(\frac{T_{11}+T_{12}}{T_{12}}\right)^{q+1}=1,\left(\frac{T_1}{T_2}\right)^{q-1}=1
\right\},\] that is, the number of $(t_1,t_2)\in \mu_{q^2+1}^\star
\times \mu_{q^2+1}^\star$ such that
\begin{equation}\label{0000}
T_{12}=be_1^{-1}, T_2=e_2T_1 \text{ for some } e_1, e_2 \in \GF{q}^*
\text{ and } \left(\frac{T_{11}+T_{12}}{T_{12}}\right)^{q+1}=1 .
\end{equation} The condition
 $T_{12}=be_1^{-1}$ for some $e_1\in \GF{q}^*$ requires $1=\Tr{2n}\left(\frac{e_1}{b}\right)=\Tr{n}\left(\frac{e_1}{b}+e_1b\right)$, i.e.,
\begin{equation}\label{0001}
\Tr{n}\left(\frac{e_1}{B}\right)=1,
\end{equation} where $B=\frac{1}{b+b^{-1}}.$ Now,  as a
solution of $T_{12}=be_1^{-1}$, we will fix
$$\alpha=\frac{t_2}{t_1}\in \mu_{q^2+1}^\star.$$ Then, the condition
$T_2=e_2T_1$ gives $\alpha
t_1+\alpha^{-1}t_1^{-1}=e_2t_1+e_2t_1^{-1},$ i.e.,
\begin{equation}\label{0002}
t_1=\sqrt{\frac{\alpha^{-1}+e_2}{\alpha+e_2}}
\end{equation}
 which lies in
$\mu_{q^2+1}^\star$ for every $e_2\in \GF{q}^*$.

If $e_2=1$, then $t_2=\frac{1}{t_1}$, $T_{11}=0$ and so
$\left(\frac{T_{11}+T_{12}}{T_{12}}\right)^{q+1}=1$ is satisfied. In
this case, the number of possible $(t_1, t_1^{-1})$'s is
$\mathcal{N}_{11}=2\times \frac{q}{2}=q$.

Now, assume $e_2\neq 1$. Using $T_{12}=\alpha+\alpha^{-1}$ and
$T_{11}=\alpha t_1^2+(\alpha t_1^2)^{-1}=\frac{1+\alpha
e_2}{\alpha+e_2}+\frac{\alpha+e_2}{1+\alpha e_2}=\frac{(1+\alpha^2)(
1+e_2^2)}{(\alpha+e_2)(1+\alpha e_2)}=\frac{1}{A+E_2}$, where
$A=\frac{1}{T_{12}}$ and $E_2=\frac{1}{e_2+e_2^{-1}}$, one can
easily verify that
$\left(\frac{T_{11}+T_{12}}{T_{12}}\right)^{q+1}=1$ is equivalent to
$\left(\frac{E_2}{A+E_2}\right)^{q+1}=1$, i.e.,
$E_2=\frac{A^{q+1}}{A+A^q}=\frac{1}{T_{12}+T_{12}^q}=e_1B$.
Therefore, in this case the number of $(t_1,t_2)\in
\mu_{q^2+1}^\star \times \mu_{q^2+1}^\star$ satisfying \eqref{0000}
equals four times the number of $e_1\in \GF{q}^*$ such that
$\Tr{n}(\frac{e_1}{B})=1$ and $\Tr{n}(e_1 B)=0$, which is
$\mathcal{N}_{12}=4\times\#S_{q,B^2,1,0}=q$. Thus,
$$\mathcal{N}_1=\mathcal{N}_{11}+\mathcal{N}_{12}=2q.$$

Next, we should be able to compute the number $\mathcal{N}_2$ of
$(z_1,t_1),(z_2,t_2)\in \left(\GF{q}^\star\times
\mu_{q^2+1}^\star\right)^2$ such that
\begin{equation}\label{0003}
b^{q-1}=(M_{12}+T_{12})^{q-1},\left(1+\frac{M_{11}+T_{11}}{M_{12}+T_{12}}\right)^{q+1}=1,
T_2=\frac{M_2}{M_1}T_1 .
\end{equation}

If $M_{12}=0$, i.e. $z_1=z_2$, then from $T_2=\frac{M_2}{M_1}T_1$ it
follows $t_2=t_1^{-1}$ and $T_{11}=0$. We can set $T_{12}:=be_1^{-1}$
for some $e_1\in \GF{q}^*$ due to the condition
$b^{q-1}=(M_{12}+T_{12})^{q-1}$. Then, the condition
$\left(1+\frac{M_{11}+T_{11}}{M_{12}+T_{12}}\right)^{q+1}=1$ gives
$\left(\frac{T_{12}+M_{11}}{T_{12}}\right)^{q+1}=1$, i.e.,
$\frac{1}{M_{11}}=\frac{1}{T_{12}^q+T_{12}}=Be_1$. To ensure that there exists corresponding $z_1$'s, we have the condition
$\Tr{n}(Be_1)=0$. The condition
$\Tr{2n}\left(\frac{1}{T_{12}}\right)=1$ gives
$\Tr{n}(e_1\frac{1}{B})=1.$ Therefore,  the number of possible $e_1$'s is
$\#S_{q,B^2,1,0}=\frac{q}{4}$ and there are  $q$ corresponding
$(z_1, t_1)$'s. Thus, the number of solutions of \eqref{0003} with
$M_{12}=0$ is $$\mathcal{N}_{21}=q.$$

Now, assume $M_{12}\neq 0$, i.e. $z_1\neq z_2$. The first condition
of \eqref{0003} can be written as $1+\frac{T_{12}}{M_{12}}=be_1^{-1}$
for some $e_1\in \GF{q}^*$,  which requires
\begin{align*}1&=\Tr{2n}\left(\frac{1}{T_{12}}\right)=\Tr{n}\left(\frac{1}{M_{12}(1+be_1^{-1})}+\frac{1}{M_{12}(1+b^{-1}e_1^{-1})}\right)\\
&=\begin{cases}\Tr{n}\left(\frac{1}{M_{12}}\right), \text{ if } e_1=1,\\
\Tr{n}\left(\frac{E_1}{M_{12}(B+E_1)}\right), \text{ if } e_1\neq
1,\end{cases}
\end{align*} where $E_1=\frac{1}{e_1+e_1^{-1}}.$ Since
$\Tr{n}\left(\frac{1}{M_{12}}\right)=0$,  the first condition requires that $$\Tr{n}\left(\frac{1}{M_{12}}\cdot\frac{B}{(B+E_1)}\right)=1.$$
Letting $e_2=\frac{M_2}{M_1},$ the third condition of \eqref{0003}
gives
$$t_1=\sqrt{\frac{\alpha^{-1}+e_2}{\alpha+e_2}}.$$
 For
every $e_1\in \GF{q}^\star$, the two conditions
$\Tr{n}\left(\frac{1}{M_{12}}\cdot\frac{B}{(B+E_1)}\right)=1$  and
$\Tr{n}\left(\frac{1}{M_{12}}\right)=0$ give
$\#S_{q,\frac{B}{(B+E_1)},0,1}=\frac{q}{4}$ values for $M_{12}$
which leads to $2\times \frac{q}{4}=\frac{q}{2}$ values for
$\frac{z_2}{z_1}$.

 If
$z_2=z_1^{-1}$, then $M_{11}=0$, $e_2=1$, $t_2=t_1^{-1}$ and
$T_{11}=0$. The second condition of \eqref{0003} is satisfied.
Therefore, the number of solutions of \eqref{0003} such that
$M_{12}\neq0$ and $z_2=z_1^{-1}$ is
$$\mathcal{N}_{22}=\#\GF{q}^{\star}\times \frac{q}{2}\times 2=q(q-2).$$

Now, assume $M_1\neq M_2$ and set $\alpha:=\frac{t_2}{t_1}$,
$\beta=\frac{z_2}{z_1}$. Then, $T_{12}=\alpha+\alpha^{-1}$,
$M_{12}=\beta+\beta^{-1}$, $T_{11}=\alpha t_1^2+(\alpha
t_1^2)^{-1}=\frac{1+\alpha
e_2}{\alpha+e_2}+\frac{\alpha+e_2}{1+\alpha e_2}=\frac{(1+\alpha^2)(
1+e_2^2)}{(\alpha+e_2)(1+\alpha e_2)}=\frac{1}{T_{12}^{-1}+E_2}$,
where $E_2=\frac{1}{e_2+e_2^{-1}}.$ Further, it can be checked
 $$z_1=\sqrt{\frac{\beta^{-1}+e_2}{\beta+e_2}}$$ and
$M_{11}=\frac{1}{M_{12}^{-1}+E_2}$. By using
$\frac{M_{11}+T_{11}}{M_{12}+T_{12}}=\frac{\frac{M_{12}}{1+E_2M_{12}}+\frac{T_{12}}{1+E_2T_{12}}}{M_{12}+T_{12}}=\frac{1}{(1+E_2M_{12})(1+E_2T_{12})},$
the second condition of \eqref{0003} can be restated as
$1+(1+E_2M_{12})(1+E_2T_{12})+(1+E_2M_{12})(1+E_2T_{12}^q)=0$, i.e.,
$$M_{12}E_2+(M_{12}E_2)^2=\frac{M_{12}}{T_{12}+T_{12}^q}=Be_1.$$

Set $\gamma_1:=M_{12}E_2$, $x:=\frac{1}{M_{12}}$,
$\gamma_2=\frac{B}{B+E_1}=\frac{B}{B+\frac{1}{\frac{\gamma_1+\gamma_1^2}{B}+\frac{B}{\gamma_1+\gamma_1^2}}}
=\frac{\gamma_1^2+\gamma_1^4+B^2}{\gamma_1+\gamma_1^4+B^2}$. The
following two facts can be easily checked : 1) $\gamma_1=\gamma_2$
if and only if $\gamma_1=\sqrt{B}$; 2) $\gamma_1=\gamma_2+1$ if and
only if $\gamma_1=\sqrt{B}+1$. The condition is $\Tr{n}(x)=0$,
$\Tr{n}(\gamma_1x)=0$ and $\Tr{n}(\gamma_2x)=1$. Thus, there exists
$x$'s satisfying this condition if and only if $\gamma_1\notin
\{0,1,\sqrt{B},\sqrt{B}+1\}$, and when $\gamma_1\in
\GF{q}\setminus\{0,1,\sqrt{B},\sqrt{B}+1\}$, the number of $x$'s
satisfying that condition is $\#S_{Q,\gamma_1,\gamma_2,
0,0,1}=\frac{q}{8}$ thanks to Lemma~\ref{Scard_ext}. Thus, the
number is
$$\mathcal{N}_{23}=(q-4)\times \frac{q}{8}\times 2\times 2 \times 2=q(q-4).$$

The above discussion gives the value of  $\beta_F(1,b)$ when
$b\in\mu_{q+1}^\star$. Indeed,  we have $$ \sum_{c\in
\mu_{q+1}^\star}\tilde{\beta}_F(1,b,c)=\mathcal{N}_1+\mathcal{N}_{21}+\mathcal{N}_{22}+\mathcal{N}_{23}=2q^2-3q$$
and $$\beta_F(1,b)\overset{Prop. \ref{12b}}{=}q^2+\sum_{c\in
\mu_{q+1}^\star}\tilde{\beta}_F(1,b,c)=3q^2-3q.$$\qed
\end{proof}

\subsection{Solving the problem when $b\in \GF{q^2}\setminus\{\mu_{q+1}\cup\GF{q}\}$}

By a similar approach as proceeded at the beginning of the previous subsection, one can derive that, in this case,
Equation~\eqref{aux_eq_qq} is equivalent to
\begin{equation}\label{aux_eq_not_mu_q_F_q}
b^{q-1}=(M_{12}+T_{12})^{q-1} \text{ and }
b^{-(q+1)}=\left(1+\frac{M_{11}+T_{11}}{M_{12}+T_{12}}\right)^{q+1}\text{
and }
\begin{cases} M_1M_2\neq 0, M_1T_2=M_2T_1\text{, or, } \\
z_1=z_2=1, T_1^{q-1}=T_2^{q-1},
\end{cases}
\end{equation}
where $$t_1\neq t_2.$$

\begin{lemma}
If $b\in\GF{q^2}\setminus\{\mu_{q+1}\cup\GF{q}\}$ , then
$\mathcal{N}:=\sum_{c\in \mu_{q+1}^\star}\tilde{\beta}_F(1,b,c)=q^2-3q$ and
$\beta_F(1,b)=2q^2-3q.$
\end{lemma}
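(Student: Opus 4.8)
The plan is to repeat, almost line by line, the argument used in the preceding lemma (the case $b\in\mu_{q+1}^\star$), now starting from Equation~\eqref{aux_eq_not_mu_q_F_q} instead of \eqref{aux_eq_mu_q_not1}. First observe that since $b\in\GF{q^2}$ we have $\Tr[n]{4n}(b)=\Tr[n]{2n}\big(\Tr[2n]{4n}(b)\big)=\Tr[n]{2n}(b+b^{q^2})=0$, so $b\notin\mathfrak{S}_2$; as moreover $b\neq0$, Proposition~\ref{sum2} gives $\sum_{c\in\mathfrak{S}_2}\tilde{\beta}_F(1,b,c)=0$, while Proposition~\ref{12b} gives $\tilde{\beta}_F(1,b,1)=q^2$. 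Hence $\beta_F(1,b)=q^2+\mathcal{N}$, and it suffices to show $\mathcal{N}=q^2-3q$.

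Then I would split $\mathcal{N}$ according to the mutual position of $z_1,z_2\in\mu_{q-1}=\GF{q}^\star$, exactly as before: a part $\mathcal{N}_1$ for $z_1=z_2=1$ (forcing the second alternative of \eqref{aux_eq_not_mu_q_F_q}), a part $\mathcal{N}_{21}$ for $z_1=z_2\neq1$ (so $M_{12}=0$, $M_1M_2\neq0$), a part $\mathcal{N}_{22}$ for $z_2=z_1^{-1}\neq z_1$ (so $M_{12}\neq0$, $M_{11}=0$, $M_1=M_2$), and a part $\mathcal{N}_{23}$ for $z_1,z_1^{-1},z_2$ pairwise distinct (so $M_{11}M_{12}\neq0$), with $\mathcal{N}=\mathcal{N}_1+\mathcal{N}_{21}+\mathcal{N}_{22}+\mathcal{N}_{23}$. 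In every part, the first condition of \eqref{aux_eq_not_mu_q_F_q} permits the substitution $M_{12}+T_{12}=be_1$ with $e_1\in\GF{q}^\star$ (valid since $\{x\in\GF{q^2}^\star\mid x^{q-1}=1\}=\GF{q}^\star$), after which the reduction to the trace-counting Lemmas~\ref{Scard} and~\ref{Scard_ext} goes through exactly as in the $\mu_{q+1}^\star$ case, up to the single novelty described next.

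The novelty is that the norm condition, which in the $\mu_{q+1}^\star$ case read $\big(1+\tfrac{M_{11}+T_{11}}{M_{12}+T_{12}}\big)^{q+1}=1$, now reads $\big(1+\tfrac{M_{11}+T_{11}}{M_{12}+T_{12}}\big)^{q+1}=b^{-(q+1)}\neq1$; after the substitution above it becomes $b+\tfrac{M_{11}+T_{11}}{e_1}\in\mu_{q+1}$, and writing this membership out via Proposition~\ref{Aprop} produces an additional $\GF{q}$-trace condition that now also involves $\Tr[n]{2n}(b)=b+b^q\in\GF{q}^\star$ (and the norm $b^{q+1}\in\GF{q}^\star$). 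The effect is that the sub-cases with $t_2=t_1^{-1}$ (hence $T_{11}=0$)---which in the $\mu_{q+1}^\star$ proof made the norm condition automatically satisfied and supplied the dominant contributions $\mathcal{N}_{11}$ and $\mathcal{N}_{22}$---now either vanish outright (in the analogue of $\mathcal{N}_{22}$ one has $M_{11}=T_{11}=0$, so the norm condition degenerates to the impossible $b^{q+1}=1$) or are pinned down by the new trace condition, leaving only $O(q)$ solutions; the collapse of these former main terms is precisely the mechanism behind the drop from $2q^2-3q$ (the $\mu_{q+1}^\star$ value) to $q^2-3q$. Reducing each of the four parts to a count of elements $Y\in\GF{q}$ with two or three prescribed, mutually independent $\GF{q}$-traces, evaluating these by Lemma~\ref{Scard} (value $q/4$) and Lemma~\ref{Scard_ext} (value $q/8$), and inserting the usual multiplicities (the factor $2$ for each of the two-to-one maps $z\mapsto z+z^{-1}$ and $t\mapsto t+t^{-1}$ of Proposition~\ref{Aprop}, and $\#\GF{q}^\star$ or $\#(\GF{q}^\star\setminus\{1\})$ for the free parameter $e_1$), one sums the four parts to obtain $\mathcal{N}=q^2-3q$, hence $\beta_F(1,b)=q^2+\mathcal{N}=2q^2-3q$.

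The hard part is the bookkeeping. One has to check that the two or three trace functions occurring in each sub-case are $\GF{q}$-linearly independent, so that Lemmas~\ref{Scard}/\ref{Scard_ext} apply with parameters genuinely in $\GF{q}\setminus\GF{2}$ (handling the bounded number of degenerate parameter choices explicitly, as in the treatment of $\mathcal{N}_{23}$ in the previous lemma), and---more delicately---that, although several intermediate tallies depend on $\GF{q}$-traces of rational expressions in $b+b^q$ and $b^{q+1}$, these dependencies cancel in the sum, so that $\mathcal{N}$ is the constant $q^2-3q$ for every $b\in\GF{q^2}\setminus\{\GF{q}\cup\mu_{q+1}\}$, consistently with the statement and with the reported computer checks.
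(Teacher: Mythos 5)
Your overall strategy coincides with the paper's: the same four-way split according to the position of $z_1,z_2$ (the cases $z_1=z_2=1$; $z_1=z_2\neq1$; $z_2=z_1^{-1}\neq z_1$; all distinct), the same substitution of the $(q-1)$-st-power condition by $M_{12}+T_{12}\in b\,\GF{q}^\star$, and the same reduction to the trace-counting Lemmas~\ref{Scard} and~\ref{Scard_ext}. You also correctly identify the two structural novelties: that the sub-case $z_2=z_1^{-1}$ (where $M_{11}=T_{11}=0$) now forces $b^{q+1}=1$ and hence contributes nothing --- this is indeed why the total drops by roughly $q^2$ compared with the $b\in\mu_{q+1}^\star$ case --- and that the remaining counts acquire a dependence on $\mathbf{Tr}_1^{n}$ of rational expressions in $b+b^q$ and $b^{q+1}$.

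The gap is that the final step, the cancellation of these $b$-dependent contributions, is asserted rather than proved, and in the paper's proof this is where essentially all of the work lies. Concretely: the contribution from $z_1=z_2=1$ is $0$, $q$ or $2q$ according to whether $\mathbf{Tr}_1^{n}\bigl(\xi(b)\bigr)=1$, or $=0$ with $b+b^q=1$, or $=0$ with $b+b^q\neq1$, where $\xi(b)=\frac{b^{q+1}(b^{q+1}+1)}{(b+b^q)^2}$; the contribution from $z_1=z_2\neq1$ behaves the same way with $\eta(b)=\xi(b^{-1})$ in place of $\xi(b)$; and the main term is $q$ times the number of admissible values of an auxiliary parameter $L\in\GF{q}$, which ranges from $q-3$ down to $q-7$ depending on exactly the same four conditions ($\mathbf{Tr}_1^{n}(\xi(b))$, $\mathbf{Tr}_1^{n}(\eta(b))$, $b+b^q=1$, $b^{-1}+b^{-q}=1$). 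Verifying that the sum is always $q^2-3q$ requires (i) parametrizing the solutions of the main case by $L$ and determining explicitly the excluded values (those making $\gamma_1\in\{0,1\}$, $\gamma_1=\gamma_2$, or $\gamma_1=\gamma_2+1$, namely the roots of $L+L^2=\xi(b)$, of $L+L^2=\eta(b)$, and the two values $1+\eta(b)$, $1+\xi(b)$); (ii) proving the non-obvious fact that $\gamma_2\notin\{0,1\}$ always holds, which the paper does by showing a certain element has nonzero trace because $\frac{b}{b+b^q}\notin\GF{q}$; (iii) checking which of the excluded values coincide (e.g.\ $1+\eta(b)$ solves $L+L^2=\xi(b)$ iff $b+b^q=1$); and (iv) running the resulting eight-case bookkeeping. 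None of this is in your proposal --- you flag it as "the hard part" and as something to be "checked", but without it the identity $\mathcal{N}=q^2-3q$ is unestablished, so the argument as written is an accurate roadmap of the paper's proof rather than a proof.
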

\begin{proof}

First, let us consider the case when  $M_1=M_2=0$, i.e. $z_1=z_2=1$. In this
case, we have to compute
\[\#\{(t_1,t_2)\in \mu_{q^2+1}^\star \times \mu_{q^2+1}^\star\mid
b^{q-1}=T_{12}^{q-1}, b^{-(q+1)}=
\left(\frac{T_{11}+T_{12}}{T_{12}}\right)^{q+1},\left(\frac{T_1}{T_2}\right)^{q-1}=1
\},\] that is, the number $\mathcal{N}_1$ of $(t_1,t_2)\in
\mu_{q^2+1}^\star \times \mu_{q^2+1}^\star$ such that
\[
T_{12}=be_1^{-1},T_2=e_2T_1 \text{ for some } e_1, e_2 \in \GF{q}^*
\text{ and } b(T_{11}+T_{12})=fT_{12}  \text{ for some } f \in
\mu_{q+1} .\] The first condition
 $T_{12}=be_1^{-1}$ for some $e_1\in \GF{q}^*$ requires $\Tr{2n}\left(\frac{e_1}{b}\right)=1$, i.e.,
\begin{equation}\label{1fcond}
\Tr{n}\left(e_1\cdot \frac{b+b^{q}}{b^{q+1}}\right)=1.
\end{equation} Under
this condition, as one of two solutions of $T_{12}=be_1^{-1}$, we
can fix
$$\alpha=\frac{t_2}{t_1}\in \mu_{q^2+1}^\star.$$ Then, the second condition $T_2=e_2T_1$ gives $\alpha
t_1+\alpha^{-1}t_1^{-1}=e_2t_1+e_2t_1^{-1}$, i.e.,
\begin{equation}\label{1t1}
t_1=\sqrt{\frac{\alpha^{-1}+e_2}{\alpha+e_2}}.
\end{equation}

If $e_2=1$, then $t_2=\frac{1}{t_1}$, $T_{11}=0$ and so the third
condition gives $b=f\in \mu_{q+1}$ which leads to a contradiction. Hence,
we can assume $e_2\neq 1$.

By using $T_{12}=\alpha+\alpha^{-1}$ and $T_{11}=\alpha
t_1^2+(\alpha t_1^2)^{-1}=\frac{1+\alpha
e_2}{\alpha+e_2}+\frac{\alpha+e_2}{1+\alpha e_2}=\frac{(1+\alpha^2)(
1+e_2^2)}{(\alpha+e_2)(1+\alpha e_2)}=\frac{1}{A+E_2}$, where
$A=\frac{1}{T_{12}}$ and $E_2=\frac{1}{e_2+e_2^{-1}}$, one can
easily verify that the third condition
$\left(\frac{b(T_{11}+T_{12})}{T_{12}}\right)^{q+1}=1$ is equivalent
to $\left(\frac{bE_2}{A+E_2}\right)^{q+1}=1.$ Substituting  $A=b^{-1}e_1$ into the last equality and rearranging using routine computations lead to a quadratic equation

$$U^2+U+\xi(b)=0$$ where
$$U:=\frac{b^{q+1}(b^{q+1}+1)}{e_1(b+b^q)}\cdot E_2$$ and
$$\xi(b):=\frac{b^{q+1}(b^{q+1}+1)}{(b+b^q)^2}.$$

Therefore, if $\Tr{n}\left(\xi(b)\right)=1$, then $\mathcal{N}_1=0$.
Assume $\Tr{n}\left(\xi(b)\right)=0$ and let $u$ be a solution to
the above quadratic equation. Let $x=\frac{b+b^{q}}{b^{q+1}}e_1$,
$\gamma_1= \frac{u}{b^{q+1}+1}$, and $\gamma_2=
\frac{u+1}{b^{q+1}+1}$. Then $E_2=\gamma_1 x$, or, $E_2=\gamma_2 x.$
There exist two elements $e_1$ and $e_2$  of $ GF{q}^*$  satisfying all the three
conditions if and only if it holds $\Tr{n}(x)=1$ and
$\Tr{n}(\gamma_1 x)=0$, or, $\Tr{n}(x)=1$ and $\Tr{n}(\gamma_2
x)=0$. For each of such $x$'s (i.e. $e_1$'s), there are two
corresponding $e_2$'s and two $\alpha$'s, each of which gives a
solution \eqref{1t1}, and in total, 4 solutions of
\eqref{aux_eq_not_mu_q_F_q}. It is easy to check: 1) $\gamma_i=1$
for some $i\in \{1,2\}$ if and only if $b+b^q=1$; 2)
$\gamma_1=\gamma_2$ never hold. Therefore, thanks to
Lemma~\ref{Scard}, we have
\[\mathcal{N}_1=
\begin{cases} 0, \text{ if } \Tr{n}\left(\xi(b)\right)=1,\\
S_{q,\gamma_i,1,0}\times 4=q, \text{ if } \Tr{n}\left(\xi(b)\right)=0 \text{ and } b+b^q=1,\\
(S_{q,\gamma_1,1,0}+S_{q,\gamma_2,1,0}) \times 4=2q, \text{ if }
\Tr{n}\left(\xi(b)\right)=0 \text{ and } b+b^q\neq1.
\end{cases}
\]

This brings us to compute the number $\mathcal{N}_2$ of
$(z_1,t_1),(z_2,t_2)\in \left(\GF{q}^\star\times
\mu_{q^2+1}^\star\right)^2$ such that
\begin{equation}\label{1111}
b^{-2}=(M_{12}+T_{12})^{q-1},
b^{-(q+1)}=\left(1+\frac{M_{11}+T_{11}}{M_{12}+T_{12}}\right)^{q+1},
T_2=\frac{M_2}{M_1}T_1 .
\end{equation}

If $M_{12}=0$, i.e., $z_1=z_2$, then the third condition gives
$t_2=t_1^{-1}$ and $T_{11}=0$. In this case, the first condition can
again be written as $T_{12}=be_1^{-1}$ for some $e_1\in \GF{q}^*$ which
requires ~\eqref{1fcond}. Therefore, the second condition gives
$\left(\frac{b(T_{12}+M_{11})}{T_{12}}\right)^{q+1}=\left(\frac{b(be_1^{-1}+M_{11})}{be_1^{-1}}\right)^{q+1}=(b+e_1M_{11})^{q+1}=1$,
i.e., $$V^2+V+\eta(b)=0,$$ where
$$V:=\frac{b^{q+1}+1}{e_1(b+b^q)M_{11}}$$ and
$$\eta(b):=\frac{b^{q+1}+1}{(b+b^q)^2}=\xi(b^{-1}).$$
Thus, letting $\mathcal{N}_{2}$ be the solution number of
\eqref{1111} with $M_{12}=0,$ if $\Tr{n}\left(\eta(b)\right)=1$,
then $\mathcal{N}_{2}=0$. Assume $\Tr{n}\left(\eta(b)\right)=0$ and
let $v$ be a solution to $V^2+V+\eta(b)=0$. Let
$x=\frac{b+b^{q}}{b^{q+1}}e_1$, $\gamma_1=
\frac{b^{q+1}v}{b^{q+1}+1}$, and $\gamma_2=
\frac{b^{q+1}(v+1)}{b^{q+1}+1}$. Then $M_{11}=\gamma_1 x$, or,
$M_{11}=\gamma_2 x.$ There exist $e_1,z_1\in \GF{q}^*$ satisfying
all the three conditions if and only if it holds $\Tr{n}(x)=1$ and
$\Tr{n}(\gamma_1 x)=0$, or, $\Tr{n}(x)=1$ and $\Tr{n}(\gamma_2
x)=0$. For each of such $x$'s (i.e. $e_1$'s), there are two
corresponding $z_1$'s and two $t_1$'s, and in total, 4 solutions of
\eqref{aux_eq_not_mu_q_F_q}. It is also easy to check : 1)
$\gamma_i=1$ for some $i\in \{1,2\}$ if and only if
$b^{-1}+b^{-q}=1$; 2) $\gamma_1=\gamma_2$ never hold. Therefore,
thanks to Lemma~\ref{Scard}, we have
\[\mathcal{N}_{2}=
\begin{cases} 0, \text{ if } \Tr{n}\left(\eta(b)\right)=1,\\
S_{q,\gamma_i,1,0}\times 4=q, \text{ if } \Tr{n}\left(\eta(b)\right)=0 \text{ and } b^{-1}+b^{-q}=1,\\
(S_{q,\gamma_1,1,0}+S_{q,\gamma_2,1,0}) \times 4=2q, \text{ if }
\Tr{n}\left(\eta(b)\right)=0 \text{ and } b^{-1}+b^{-q}\neq1.
\end{cases}
\]

Now, assume $M_{12}\neq 0$, i.e. $z_1\neq z_2$. The first condition
can be written as $1+\frac{T_{12}}{M_{12}}=be_1^{-1}$ for some $e_1\in
\GF{q}^*$. Since
$\Tr{2n}\left(\frac{1}{T_{12}}\right)=\Tr{n}\left(\frac{1}{M_{12}(1+be_1^{-1})}+\frac{1}{M_{12}(1+b^{q}e_1^{-1})}\right)
=\Tr{n}\left(\frac{1}{M_{12}}\cdot
\frac{1}{1+\frac{e_1^2+b^{q+1}}{e_1(b+b^q)}}\right)$, there exists
$\alpha=\frac{t_2}{t_1}\in \mu_{q^2+1}^\star$ such that
$T_{12}=\alpha+\alpha^{-1}$ if and only if
\begin{equation}\label{2fcond}
\Tr{n}\left(\frac{1}{M_{12}}\cdot
\frac{1}{1+\frac{e_1^2+b^{q+1}}{e_1(b+b^q)}}\right)=1.
\end{equation}
Note $\frac{e_1^2+b^{q+1}}{e_1(b+b^q)}=1$ is equivalent to
$(e_1+b)^{q+1}=0$ which does not hold as $e_1\in \GF{q}$. Letting
$e_2=\frac{M_2}{M_1}$ and $\beta=\frac{z_2}{z_1}$, we have
\begin{equation}\label{2z1}
z_1=\sqrt{\frac{\beta^{-1}+e_2}{\beta+e_2}}
\end{equation}
and the third condition gives
\begin{equation}\label{2t1}
t_1=\sqrt{\frac{\alpha^{-1}+e_2}{\alpha+e_2}}.
\end{equation}

 If
$z_2=z_1^{-1}$, then $M_{11}=0$, $e_2=1$, $t_2=t_1^{-1}$ and
$T_{11}=0$. The second condition can never hold. Therefore,  we can
now assume $M_1\neq M_2$, i.e., $e_2\neq 1$. We can derive
$T_{11}=\alpha t_1^2+(\alpha t_1^2)^{-1}=\frac{1+\alpha
e_2}{\alpha+e_2}+\frac{\alpha+e_2}{1+\alpha e_2}=\frac{(1+\alpha^2)(
1+e_2^2)}{(\alpha+e_2)(1+\alpha e_2)}=\frac{1}{T_{12}^{-1}+E_2}$,
where $E_2=\frac{1}{e_2+e_2^{-1}},$
 and similarly $M_{11}=\frac{1}{M_{12}^{-1}+E_2}$.

Since
$\frac{M_{11}+T_{11}}{M_{12}+T_{12}}=\frac{\frac{M_{12}}{1+E_2M_{12}}+\frac{T_{12}}{1+E_2T_{12}}}{M_{12}+T_{12}}=\frac{1}{(1+E_2M_{12})(1+E_2T_{12})},$
the second condition can be rewritten as
$\left(\frac{1}{(1+E_2M_{12})(1+E_2T_{12})}\right)^{q+1}+\left(\frac{1}{(1+E_2M_{12})(1+E_2T_{12})}\right)^{q}+\frac{1}{(1+E_2M_{12})(1+E_2T_{12})}
=\frac{b^{q+1}+1}{b^{q+1}}$, i.e.,
\begin{align*}1+(1+E_2M_{12})(1+E_2T_{12})+&(1+E_2M_{12})(1+E_2T_{12}^q)=\\&\frac{b^{q+1}+1}{b^{q+1}}\cdot
(1+E_2M_{12})^2(1+E_2T_{12})(1+E_2T_{12}^q).
\end{align*}
Setting $$x:=\frac{1}{M_{12}}, \gamma_1:=M_{12}E_2,
\gamma_2:=\frac{1}{1+\frac{e_1^2+b^{q+1}}{e_1(b+b^q)}}.$$ and
substituting $T_{12}=M_{12}(1+be_1^{-1})$ to above equalities, by
easy computations we obtain
\[1+e_1^{-1}\gamma_1(1+\gamma_1)(b+b^q)=\frac{b^{q+1}+1}{b^{q+1}}\cdot(1+\gamma_1)^2(1+\gamma_1(1+be_1^{-1}))(1+\gamma_1(1+b^qe_1^{-1})),\]
i.e.,
\begin{equation}\label{exxx}
e_1^2[1+(b^{q+1}+1)\gamma_1^4]+e_1[1+(b^{q+1}+1)\gamma_1^2](\gamma_1+\gamma_1^2)(b+b^q)+b^{q+1}(b^{q+1}+1)(\gamma_1^2+\gamma_1^4)=0.
\end{equation}
Here, to ensure the existence of  $\beta, e_1, e_2\in \GF{q}^*$ which
leads to a solution satisfying the three conditions, it has to hold
\begin{equation}\label{2xxx}
\Tr{n}(x)=0, \Tr{n}(\gamma_1x)=0 \text{ and } \Tr{n}(\gamma_2x)=1,
\end{equation}
together with ~\eqref{exxx}.

When $\gamma_1\neq0$ (which follows from $M_{12}\neq 0$ and $E_2\neq
0$, thanks to Lemma~\ref{Scard_ext}), one can see that
Equation~\eqref{2xxx} has exactly $\frac{q}{8}$ solutions if and
only if
\begin{equation}\label{2Lcond}
\gamma_1, \gamma_2,
\gamma_1+\gamma_2\notin \{ 0,1\}
\end{equation}

If $1+(b^{q+1}+1)\gamma_1^2=0$, then $1+(b^{q+1}+1)\gamma_1^4\neq0$,
$e_1^2=\frac{b^{q+1}(b^{q+1}+1)(\gamma_1^2+\gamma_1^4)}{1+(b^{q+1}+1)\gamma_1^4}=b^{q+1}$,
and $\gamma_2=1$. Therefore, we can assume
$1+(b^{q+1}+1)\gamma_1^2\neq0$ and Equation~\eqref{exxx} can be
restated as
\begin{equation}\label{L_equation}
L^2+L=\frac{[1+(b^{q+1}+1)\gamma_1^4]}{[1+(b^{q+1}+1)\gamma_1^2]^2}\cdot
\xi(b),
\end{equation}
where
\begin{equation}\label{e1_equation}
L=e_1\cdot
\frac{1+(b^{q+1}+1)\gamma_1^4}{[1+(b^{q+1}+1)\gamma_1^2](\gamma_1+\gamma_1^2)(b+b^q)}.
\end{equation}
Note
\begin{equation}\label{condd0}
L\neq 1,
\end{equation}
since otherwise \eqref{L_equation} gives $1+(b^{q+1}+1)\gamma_1^4=0$
which contradicts to \eqref{e1_equation}.

Now, set $P:=(b^{q+1}+(L+L^2)(b+b^q)^2)^{\frac{1}{4}}$ and
$Q:=b^{\frac{q+1}{2}}$. Note that $P\neq 0$ for any $L\in \GF{q}$
because
$\frac{P^4}{(b+b^q)^2}=\left(L+\frac{b^q}{b+b^q}\right)\left(L+\frac{b}{b+b^q}\right).$
Then, from~\eqref{L_equation} it follows
\begin{equation}\label{gamma_1eq}
\gamma_1=\frac{P+Q}{P(Q+1)}
\end{equation} and
$\gamma_1(1+\gamma_1)=\frac{(P+Q)Q(P+1)}{P^2(Q+1)^2}$. When
$1+(b^{q+1}+1)\gamma_1^4\neq 0,$ i.e., $L\neq 0$,
from~\eqref{e1_equation} it follows
\begin{align*}e_1&=\frac{L[1+(b^{q+1}+1)\gamma_1^2](\gamma_1+\gamma_1^2)(b+b^q)}{1+(b^{q+1}+1)\gamma_1^4}\\
&=\frac{L[1+(Q^2+1)\cdot\frac{P^2+Q^2}{P^2(Q^2+1)}]\cdot\frac{Q(P+1)(P+Q)}{P^2(Q+1)^2}\cdot(b+b^q)}{1+(Q^2+1)\cdot\frac{P^4+Q^4}{P^4(Q^4+1)}}\\
&=\frac{L(1+\frac{P^2+Q^2}{P^2})\cdot\frac{Q(P+1)(P+Q)}{P^2(Q+1)^2}\cdot(b+b^q)}{1+\frac{P^4+Q^4}{P^4(Q^2+1)}}\\
&=\frac{LQ(P+1)(P+Q)(b+b^q)}{P^4+Q^2}
=\frac{LQ(P+1)(P+Q)(b+b^q)}{(L+L^2)(b+b^q)^2}\\
&=\frac{Q(P+1)(P+Q)}{(L+1)(b+b^q)},
\end{align*}
i.e.,
\begin{equation}\label{e1eq}
e_1= \frac{Q(P+1)(P+Q)}{(L+1)(b+b^q)}.
\end{equation}
When $1+(b^{q+1}+1)\gamma_1^4=0,$ i.e., $L=0,$ one can check that
$e_1$ given by \eqref{e1eq} satisfies ~\eqref{exxx}. Thus, every
pair $(e_1,\gamma_1)$ satisfying ~\eqref{exxx} is completely given
by the variable $L$ via \eqref{gamma_1eq} and \eqref{e1eq}.

Now, we h	ave to determine the values of $L$ satisfying
\eqref{2Lcond}.

\begin{itemize}
\item First, let us show that $\gamma_2\notin \{0,1\}$ is satisfied.
Let $R=(L+1)(b+b^q)^2$. Then, $RL=P^4+Q^2$. From the definition of
$\gamma_2$ and \eqref{e1eq},
\begin{align*}\gamma_2&=\frac{1}{1+\frac{\frac{Q^2(P^2+1)(P^2+Q^2)}{(L^2+1)(b+b^q)^2}+b^{q+1}}{\frac{Q(P+1)(P+Q)}{(L+1)(b+b^q)}\cdot(b+b^q)}}
=\frac{1}{1+\frac{\frac{Q^2(P^2+1)(P^2+Q^2)}{(L+1)(b+b^q)^2}+(L+1)Q^2}{Q(P+1)(P+Q)}}\\
&=\frac{1}{1+\frac{\frac{Q(P^2+1)(P^2+Q^2)}{R}+(L+1)Q}{(P+1)(P+Q)}}
=\frac{R(P+1)(P+Q)}{RP(P+Q+1)+QP^2(Q^2+1)},
\end{align*} i.e.,
\begin{equation}\label{gamma_2eq}
\gamma_2=\frac{R(P+1)(P+Q)}{RP(P+Q+1)+QP^2(Q^2+1)}.
\end{equation}
Then,
\begin{align*}&\gamma_2=1\Longleftrightarrow
R(P+1)(P+Q)=RP(P+Q+1)+QP^2(Q^2+1)\Longleftrightarrow \\
&R=P^2(Q^2+1)\Longleftrightarrow
(L^2+1)(b+b^q)^4=[b^{q+1}+(L+L^2)(b+b^q)^2](b^{q+1}+1)^2,
\end{align*}
which with $U:=b+b^q$ and $V:=b^{q+1}+1$, transforms into
$$L^2U^2(U^2+V^2)+LU^2V^2+U^4+V^2(V+1)=0$$ and so
$$L_1^2+L_1+\frac{(U^2+V^2)[U^4+V^2(V+1)]}{U^2V^4}=0$$ where
$L_1=\frac{LU^2(U^2+V^2)}{U^2V^2}$. However, since
$\frac{(U^2+V^2)[U^4+V^2(V+1)]}{U^2V^4}=\frac{U^2}{V^2}+\frac{U^4}{V^4}+\frac{1}{V}+\frac{1}{V^2}+\frac{V+1}{U^2}
=\left(\frac{U^2}{V^2}+\frac{U^4}{V^4}\right)+\left(\frac{1}{V}+\frac{1}{V^2}\right)+\left(\frac{b}{U}+\frac{b^2}{U^2}\right)$
and $\frac{b}{U}\notin \GF{q}$, it holds
$\Tr{n}\left(\frac{(U^2+V^2)[U^4+V^2(V+1)]}{U^2V^4}\right)\neq 0$
and there does no exist such $L$ in $\GF{q}$.

It is easy to check the following facts.

\item Secondly, 
$\gamma_1=0$ if and only if
\begin{equation}\label{condd1}
 L+L^2=\xi(b).
\end{equation}

\item One has  $\gamma_1=1$ if and only if
\begin{equation}\label{condd4}
   L+L^2=\eta(b).
\end{equation}

\item  $\gamma_1= \gamma_2$ if and only if
\begin{equation}\label{condd2}
L= l_{00}:=\frac{(b+b^q)^2+b^{q+1}+1}{(b+b^q)^2}=1+\eta(b),
\end{equation} since
\begin{align*}&\gamma_1= \gamma_2\Longleftrightarrow R(P+1)P(Q+1)=RP(P+Q+1)+QP^2(Q^2+1)\\
&\Longleftrightarrow R=Q^2+1\Longleftrightarrow
(L+1)(b+b^q)^2=(b^{q+1}+1)\\
&\Longleftrightarrow L=\frac{(b+b^q)^2+b^{q+1}+1}{(b+b^q)^2};
\end{align*}

\item  $\gamma_1=\gamma_2+1$ if and only if
\begin{equation}\label{condd3}
L= l_{01}:=\frac{(b+b^q)^2+b^{q+1}(b^{q+1}+1)}{(b+b^q)^2}=1+\xi(b),
\end{equation}
since $\gamma_1+1=\frac{Q(P+1)}{P(Q+1)}$ and
\begin{align*}&\gamma_2=\gamma_1+1\Longleftrightarrow R(P+Q)(Q+1)=[R(P+Q+1)+QP(Q^2+1)]Q\\
&\Longleftrightarrow
R=Q^2(Q^2+1)\Longleftrightarrow(L+1)(b+b^q)^2=(b^{q+1}+1)b^{q+1}\\
&\Longleftrightarrow
L=\frac{(b+b^q)^2+b^{q+1}(b^{q+1}+1)}{(b+b^q)^2}.\end{align*}
\end{itemize}

Now, consider the following facts :

\begin{enumerate}
\item $l_{00}$ is a solution of ~\eqref{condd1} if and only if
$b+b^q=1$. Indeed,
\begin{align*}
&\eta(b)+\eta(b)^2=\xi(b)\Longleftrightarrow
\frac{b^{q+1}+1}{(b+b^q)^2}+\frac{(b^{q+1}+1)^2}{(b+b^q)^4}=\frac{b^{2(q+1)}+b^{q+1}}{(b+b^q)^2}\\
&\Longleftrightarrow
(b^{q+1}+1)(b+b^q)^2+(b^{q+1}+1)^2=b^{2(q+1)}+b^{q+1}\Longleftrightarrow
b+b^q=1.
\end{align*}
\item $l_{01}$ is a solution of ~\eqref{condd4} if and only if
$b^{-1}+b^{-q}=1$. This follows from the above fact and
$\xi(b)=\eta(b^{-1}).$

\item $l_{00}$ can't be a solution of ~\eqref{condd4} and $l_{01}$
can't be a solution of ~\eqref{condd1}.
\end{enumerate}

Let $l_{10}$ and $l_{11}$ be two solutions of
Equation~\eqref{condd1} and assume $l_{10}\neq l_{00}$. And, let
$l_{20}$ and $l_{21}$ be two solutions of Equation~\eqref{condd4}
and assume $l_{20}\neq l_{01}$. Then, the set $\mathcal{L}$ of
values $L$ satisfying \eqref{2Lcond} is as follows.

\begin{itemize}
\item If $\Tr{n}\left(\xi(b)\right)=1$ and
$\Tr{n}\left(\eta(b)\right)=1$, then $\mathcal{L}= \GF{q}\setminus
\{1,l_{00},l_{01}\}$ and $\#\mathcal{L}=q-3$.
\item If $\Tr{n}\left(\xi(b)\right)=0$, $b+b^q=1$, and
$\Tr{n}\left(\eta(b)\right)=1$, then $\mathcal{L}= \GF{q}\setminus
\{1,l_{00},l_{01}, l_{10}\}$ and $\#\mathcal{L}=q-4$.
\item If $\Tr{n}\left(\xi(b)\right)=0$, $b+b^q\neq1$, and
$\Tr{n}\left(\eta(b)\right)=1$, then $\mathcal{L}= \GF{q}\setminus
\{1,l_{00},l_{01}, l_{10}, l_{11}\}$ and $\#\mathcal{L}=q-5$.
\item If $\Tr{n}\left(\xi(b)\right)=1$,
$\Tr{n}\left(\eta(b)\right)=0$, and $b^{-1}+b^{-q}=1$, then
$\mathcal{L}= \GF{q}\setminus \{1,l_{00},l_{01}, l_{20}\}$ and
$\#\mathcal{L}=q-4$.
\item If $\Tr{n}\left(\xi(b)\right)=1$,
$\Tr{n}\left(\eta(b)\right)=0$, and $b^{-1}+b^{-q}\neq 1$, then
$\mathcal{L}= \GF{q}\setminus \{1,l_{00},l_{01}, l_{20}, l_{21}\}$
and $\#\mathcal{L}=q-5$.
\item If $\Tr{n}\left(\xi(b)\right)=0$,
$\Tr{n}\left(\eta(b)\right)=0$, $b+b^q=1$, and $b^{-1}+b^{-q}\neq
1$, then $\mathcal{L}= \GF{q}\setminus \{1,l_{00},l_{01}, l_{10},
 l_{20}, l_{21}\}$ and $\#\mathcal{L}=q-6$.
\item If $\Tr{n}\left(\xi(b)\right)=0$,
$\Tr{n}\left(\eta(b)\right)=0$, $b+b^q\neq 1$, and $b^{-1}+b^{-q}=
1$, then $\mathcal{L}= \GF{q}\setminus \{1,l_{00},l_{01}, l_{10},
l_{11}, l_{20}\}$ and $\#\mathcal{L}=q-6$.
\item If $\Tr{n}\left(\xi(b)\right)=0$,
$\Tr{n}\left(\eta(b)\right)=0$, $b+b^q\neq 1$, and
$b^{-1}+b^{-q}\neq 1$, then $\mathcal{L}= \GF{q}\setminus
\{1,l_{00},l_{01}, l_{10}, l_{11}, l_{20}, l_{21}\}$ and
$\#\mathcal{L}=q-7$.
\end{itemize}

For each $x$ satisfying \eqref{2xxx}, two $e_2$, two $e_1$ (two
$\alpha$'s) and two $\beta$'s are corresponded respectively which
with~\eqref{2z1} and \eqref{2t1} lead to 8 solutions of
\eqref{aux_eq_not_mu_q_F_q}. Finally, for any case, we have
$$\mathcal{N}=N_1+N_2+\#\mathcal{L}\times S_{q,\gamma_1,\gamma_2, 0,0,1}\times
8=q^2-3q.$$ \qed
\end{proof}

\subsection{Solving the problem when $b\in \GF{q^4}\setminus \GF{q^2}$}
The following determines the set of $b\in \GF{q^4}$ such that
$\sum_{c\in \mu_{q+1}^\star}\tilde{\beta}_F(1,b,c)=0$.
\begin{proposition}\label{tr0}
If $b\in \GF{q^4}\setminus \GF{q^2}$ satisfy \eqref{aux_eq0}, then
$\mathbf{Tr}_n^{4n}(b)=0$. Thus, if $b\in \{\mathfrak{S}_2 \cup
\mathfrak{S}_2'\}$, then $\sum_{c\in
\mu_{q+1}^\star}\tilde{\beta}_F(1,b,c)=0$ and $\beta_F(1,b)=\begin{cases}2,
\text{ if } b\in \mathfrak{S}_2\\
0, \text{ if } b\in \mathfrak{S}_2'.\end{cases}$
\end{proposition}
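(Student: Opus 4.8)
The plan is to prove the sharper statement that \emph{every} solution $(z_1,z_2,t_1,t_2)$ of system \eqref{aux_eq0} satisfies $\mathbf{Tr}_n^{4n}(b)=0$; the first assertion of the proposition is the restriction of this to solutions with $b\notin\GF{q^2}$, and the displayed values of $\beta_F(1,b)$ then follow by assembling the three contributions to $\beta_F(1,b)$ recorded at the start of Section~\ref{sec:proof-main-result}.

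First I would put $\mathbf{Tr}_n^{4n}(b)$ into closed form. With $x=\frac{1}{1+z_1t_1}$, $y=\frac{1}{1+z_2t_2}$ and $c:=x^l=y^l\in\mu_{q+1}^\star$ as in the Remark, one has $b=c\,(x+y)^{2q^2}$. Because $c\in\mu_{q+1}\subset\GF{q^2}$ gives $c^{q^2}=c$ and $c^q=c^{-1}$, while $(x+y)^{q^4}=x+y$, it follows that $b^{q^2}=c\,(x+y)^2$, hence $b+b^{q^2}=c\,[(x+y)^{q^2}+(x+y)]^2=c\,S^2$ where $S:=\mathbf{Tr}_{2n}^{4n}(x+y)\in\GF{q^2}$, and then $b^q+b^{q^3}=(b+b^{q^2})^q=c^{-1}S^{2q}$. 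Therefore
\[\mathbf{Tr}_n^{4n}(b)=c\,S^2+c^{-1}S^{2q}=c^{-1}\bigl(cS+S^q\bigr)^2,\]
so it suffices to show $cS=S^q$. Note that if $S=0$ this display already yields $\mathbf{Tr}_n^{4n}(b)=0$, and moreover $S=0$ would force $x+y\in\GF{q^2}$, hence $b=c(x+y)^2\in\GF{q^2}$.

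Next I would translate into the variables $M_i:=z_i+z_i^{-1}\in\GF{q}$ and $T_i:=t_i+t_i^{-1}\in\GF{q^2}$. Using $z_i^{q^2}=z_i$ and $t_i^{q^2}=t_i^{-1}$ one checks $(1+z_it_i)^{q^2+1}=z_i(M_i+T_i)$, so with $P:=M_1+T_1$ and $R:=M_2+T_2$ (both nonzero, since $1+z_it_i\neq0$) one gets, as in \eqref{ztd}, $c=x^l=P/P^q=R/R^q$, and also $x+x^{q^2}=T_1/P$, whence $S=\frac{T_1}{P}+\frac{T_2}{R}$. The equation $x^l=y^l$ becomes $PR^q=P^qR$, i.e. $P/R\in\GF{q}$; write $P=\rho R$ with $\rho\in\GF{q}^\star$. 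Then $SP=T_1+\rho T_2$, and since $\rho\in\GF{q}$ this element lies in $\GF{q}$ if and only if $\mathbf{Tr}_n^{2n}(T_1)=\rho\,\mathbf{Tr}_n^{2n}(T_2)$. But $\mathbf{Tr}_n^{2n}(T_i)=T_i+T_i^q=(M_i+T_i)+(M_i+T_i)^q$ equals $\mathbf{Tr}_n^{2n}(P)$ for $i=1$ and $\mathbf{Tr}_n^{2n}(R)$ for $i=2$, and $P=\rho R$ gives $\mathbf{Tr}_n^{2n}(P)=\rho\,\mathbf{Tr}_n^{2n}(R)$; hence $SP\in\GF{q}$. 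Consequently $SP=(SP)^q=S^qP^q$, so $S^q/S=P/P^q=c$, i.e. $cS=S^q$, which is what we needed.

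This shows $\mathbf{Tr}_n^{4n}(b)=0$ for any solution of \eqref{aux_eq0}, hence $\sum_{c\in\mu_{q+1}^\star}\tilde{\beta}_F(1,b,c)=0$ whenever $\mathbf{Tr}_n^{4n}(b)\neq0$, in particular for $b\in\mathfrak{S}_2\cup\mathfrak{S}_2'$ (using the preceding Proposition). Since $\mathbf{Tr}_n^{4n}$ vanishes on $\GF{q^2}$, such $b$ lie outside $\GF{q^2}$, so $\tilde{\beta}_F(1,b,1)=0$ by Proposition~\ref{12b}, while Proposition~\ref{sum2} gives $\sum_{c\in\mathfrak{S}_2}\tilde{\beta}_F(1,b,c)=2$ if $b\in\mathfrak{S}_2$ and $0$ if $b\in\mathfrak{S}_2'$; summing the three terms yields $\beta_F(1,b)=2$ on $\mathfrak{S}_2$ and $0$ on $\mathfrak{S}_2'$. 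The only mildly tricky part is the bookkeeping in the $M_i,T_i$ step; the rest is a one-line conjugation argument.
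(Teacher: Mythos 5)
Your proof is correct and follows essentially the same route as the paper's: both reduce to showing that $b+b^{q^2}=c\,S^2$ with $S=(x+y)+(x+y)^{q^2}=\frac{T_1}{M_1+T_1}+\frac{T_2}{M_2+T_2}$ lies in $\GF{q}$, which forces $\mathbf{Tr}_n^{4n}(b)=0$. The only (cosmetic) difference is in verifying that membership: the paper rewrites $S$ as $\frac{M_1}{M_1+T_1}+\frac{M_2}{M_2+T_2}$ so that $cS^2=\frac{M_1^2}{(M_1+T_1)^{q+1}}+\frac{M_2^2}{(M_2+T_2)^{q+1}}$ is visibly a sum of norms-over-$\GF{q}$ quotients, whereas you reach the same conclusion through $SP\in\GF{q}$ and the identity $cS=S^{q}$.
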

\begin{proof} Since
$\frac{1}{1+z_1t_1}+\frac{1}{1+z_2t_2}+\left(\frac{1}{1+z_1t_1}+\frac{1}{1+z_2t_2}\right)^{q^2}=
\frac{1}{1+z_1t_1}+\frac{1}{1+z_1t_1^{-1}}+\frac{1}{1+z_2t_2}+\frac{1}{1+z_2t_2^{-1}}=
\frac{z_1t_1+z_1t_1^{-1}}{1+z_1^2+z_1t_1+z_1t_1^{-1}}+\frac{z_2t_2+z_2t_2^{-1}}{1+z_2^2+z_2t_2+z_2t_2^{-1}}=\frac{T_1}{M_1+T_1}+\frac{T_2}{M_2+T_2}=\frac{M_1}{M_1+T_1}+\frac{M_2}{M_2+T_2},$
by using (the second equation of) \eqref{aux_eq0} and \eqref{ztd} we
have
$b+b^{q^2}=(M_1+T_1)^{1-q}\cdot\left(\frac{M_1}{M_1+T_1}\right)^2+(M_2+T_2)^{1-q}\cdot
\left(\frac{M_2}{M_2+T_2}\right)^2=\frac{M_1^2}{(M_1+T_1)^{q+1}}+\frac{M_2^2}{(M_2+T_2)^{q+1}}\in
\GF{q}$, and hence
$\mathbf{Tr}_n^{4n}(b)=b+b^{q^2}+(b+b^{q^2})^q=0.$\qed
\end{proof}
It remained to prove the case  of $b\in\GF{q^4}\setminus
\{\mathfrak{S}_2 \cup \mathfrak{S}_2'\cup \GF{q^2}\}$.
Preliminarily, we can state the following fact.
\begin{proposition}\label{f0}
$$\sum_{b\in\GF{q^4}\setminus \{\mathfrak{S}_2 \cup \mathfrak{S}_2'\cup
\GF{q^2}\}}\sum_{c\in \mu_{q+1}^\star}\tilde{\beta}_F(1,b,c)=(q^3-q^2)\cdot
q\cdot (q-2).$$
\end{proposition}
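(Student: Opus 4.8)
The plan is to derive the identity by a single global double count, using the partial results already established as the individual summands. Concretely, I would introduce
\[
\Sigma:=\sum_{c\in\mu_{q+1}^\star}\ \sum_{b\in\GF{q^4}}\tilde{\beta}_F(1,b,c)
\]
and evaluate it in two ways. For the first evaluation, interchange the two summations: for a fixed $c\in\mu_{q+1}^\star$, summing $\tilde{\beta}_F(1,b,c)$ over all $b\in\GF{q^4}$ merely removes the constraint $F(x)+F(y)=b$ (which only records the value of $b$), so $\sum_{b\in\GF{q^4}}\tilde{\beta}_F(1,b,c)$ counts the pairs $(x,y)\in\GF{q^4}^2$ with $F(x)+F(x+1)=F(y)+F(y+1)=c$, hence equals $N_c^2$ with $N_c:=\#\{x\in\GF{q^4}\mid F(x)+F(x+1)=c\}$. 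By Theorem~\ref{BaseThm}, $N_c=q^2-q$ for every $c\in\mu_{q+1}^\star$, and $\#\mu_{q+1}^\star=q$, so $\Sigma=q(q^2-q)^2=q^5-2q^4+q^3$.

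For the second evaluation, I would split the outer sum over $b$ along the partition
\[
\GF{q^4}=\{0\}\sqcup(\GF{q}\setminus\GF{2})\sqcup\{1\}\sqcup\mu_{q+1}^\star\sqcup\bigl(\GF{q^2}\setminus\{\GF{q}\cup\mu_{q+1}\}\bigr)\sqcup(\mathfrak{S}_2\cup\mathfrak{S}_2')\sqcup\bigl(\GF{q^4}\setminus\{\GF{q^2}\cup\mathfrak{S}_2\cup\mathfrak{S}_2'\}\bigr),
\]
which is exact because $\GF{q}\cap\mu_{q+1}=\{1\}$ and because $\mathbf{Tr}_n^{4n}(b)=0$ for every $b\in\GF{q^2}$, whence $(\mathfrak{S}_2\cup\mathfrak{S}_2')\cap\GF{q^2}=\emptyset$. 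On each block except the last, the value of $\sum_{c\in\mu_{q+1}^\star}\tilde{\beta}_F(1,b,c)$ is already known: it is $(q-1)q^2$ for $b=0$; $q^2$ on each of the $q-2$ elements of $\GF{q}\setminus\GF{2}$; $0$ for $b=1$; $2q^2-3q$ on each of the $q$ elements of $\mu_{q+1}^\star$; $q^2-3q$ on each of the $q^2-2q$ elements of $\GF{q^2}\setminus\{\GF{q}\cup\mu_{q+1}\}$; and $0$ on $\mathfrak{S}_2\cup\mathfrak{S}_2'$ by Proposition~\ref{tr0}. Writing $S$ for the sum over the remaining block $\GF{q^4}\setminus\{\GF{q^2}\cup\mathfrak{S}_2\cup\mathfrak{S}_2'\}$, equating the two evaluations of $\Sigma$ gives
\[
q^5-2q^4+q^3=(q-1)q^2+(q-2)q^2+q(2q^2-3q)+(q^2-2q)(q^2-3q)+S,
\]
and since the known part of the right-hand side collapses to $q^4-q^3$, one obtains $S=q^5-3q^4+2q^3=(q^3-q^2)\cdot q\cdot(q-2)$, the asserted formula.

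The arithmetic above is routine, so the single point that needs care is the first evaluation of $\Sigma$: one must note, again via Theorem~\ref{BaseThm}, that $\tilde{\beta}_F(1,b,c)$ vanishes for all $b$ whenever $c\notin\{1\}\cup\mu_{q+1}^\star\cup\mathfrak{S}_2$, so that restricting to $c\in\mu_{q+1}^\star$ is harmless here, and that for fixed $c$ the equation $F(x)+F(y)=b$ is simply the definition of $b$; the factorization $\sum_b\tilde{\beta}_F(1,b,c)=N_c^2$ then follows at once. Everything else is bookkeeping of the block cardinalities together with the partial counts proved in the preceding lemmas, and no new estimate is required.
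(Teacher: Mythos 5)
Your proposal is correct and follows essentially the same route as the paper: both evaluate the global sum $\sum_{c\in\mu_{q+1}^\star}\sum_b\tilde{\beta}_F(1,b,c)$ by interchanging the order of summation and invoking Theorem~\ref{BaseThm} (the paper gets $q(q^2-q)(q^2-q-1)$ over $b\in\GF{q^4}^*$, you get $q(q^2-q)^2$ over all $b$ including $b=0$), and then subtract the contributions already computed for $b$ in $\GF{q^2}$ and in $\mathfrak{S}_2\cup\mathfrak{S}_2'$. The only difference is bookkeeping (you carry the $b=0$ term explicitly and spell out the partition that the paper compresses into the single value $q^2(q-1)^2$), and your arithmetic checks out.
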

\begin{proof}
By Theorem~\ref{BaseThm}, one has $\sum_{b\in \GF{q^4}^*}\sum_{c\in
\mu_{q+1}^\star}\tilde{\beta}_F(1,b,c)=q(q^2-q)(q^2-q-1)$ and therefore
\begin{align*}&\sum_{b\in \GF{q^4}\setminus \GF{q^2}}\sum_{c\in
\mu_{q+1}^\star}\tilde{\beta}_F(1,b,c)=\sum_{b\in \GF{q^4}^*}\sum_{c\in
\mu_{q+1}^\star}\tilde{\beta}_F(1,b,c)-\sum_{b\in \GF{q^2}^*}\sum_{c\in
\mu_{q+1}^\star}\tilde{\beta}_F(1,b,c)\\
&=\sum_{c\in \mu_{q+1}^\star}\sum_{b\in
\GF{q^4}^*}\tilde{\beta}_F(1,b,c)-\sum_{b\in \GF{q^2}^*}\sum_{c\in
\mu_{q+1}^\star}\tilde{\beta}_F(1,b,c)\\
&=q(q^2-q)(q^2-q-1)-q^2(q-1)^2=(q^3-q^2)\cdot q\cdot
(q-2).
\end{align*}\qed
\end{proof}

Finally, we shall prove :
\begin{lemma}
If $b\in\GF{q^4}\setminus \{\mathfrak{S}_2 \cup \mathfrak{S}_2'\cup
\GF{q^2}\}$, that is, $b\in \GF{q^4}\setminus\GF{q^2}$ satisfy
$\mathbf{Tr}_n^{4n}(b)=0$, then $\sum_{c\in
\mu_{q+1}^\star}\tilde{\beta}_F(1,b,c)=q^2-2q$ and $\beta_F(1,b)=q^2-2q.$
\end{lemma}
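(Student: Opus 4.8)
The plan is to start from the decomposition
$\beta_F(1,b)=\tilde{\beta}_F(1,b,1)+\sum_{c\in\mathfrak{S}_2}\tilde{\beta}_F(1,b,c)+\sum_{c\in\mu_{q+1}^\star}\tilde{\beta}_F(1,b,c)$
recorded at the beginning of Section~\ref{sec:proof-main-result}. For the range of $b$ considered here the two easy terms vanish: since $b\notin\GF{q^2}$, Proposition~\ref{12b} gives $\tilde{\beta}_F(1,b,1)=0$, and since $b\neq 0$ and $b\notin\mathfrak{S}_2$, Proposition~\ref{sum2} gives $\sum_{c\in\mathfrak{S}_2}\tilde{\beta}_F(1,b,c)=0$. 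Hence $\beta_F(1,b)=\sum_{c\in\mu_{q+1}^\star}\tilde{\beta}_F(1,b,c)$, and what remains is to count the solutions $(z_1,t_1,z_2,t_2)\in(\mu_{q-1}\times\mu_{q^2+1}^\star)^2$ of the system~\eqref{aux_eq0}.

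I would then run the same machinery as in the previous subsections, in the variables $M_i=z_i+z_i^{-1}\in\GF{q}$, $T_i=t_i+t_i^{-1}\in\GF{q^2}$, but adapted to this range of $b$. Two structural facts drive the adaptation. First, since $b\notin\GF{q^2}$, Proposition~\ref{q^2cond} forces $M_1T_2\neq M_2T_1$ on every solution, so at most one of $M_1,M_2$ can vanish and the rational factorization $\bigl(\tfrac{1}{1+z_1t_1}+\tfrac{1}{1+z_2t_2}\bigr)^{2q^2}=\tfrac{M_{12}+T_{12}}{(M_1+T_1)(M_2+T_2)}$ exploited in the $\GF{q^2}$-cases is no longer available. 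Second, $\mathbf{Tr}_n^{4n}(b)=0$ is automatic for any solution by Proposition~\ref{tr0}, and the identity obtained there,
\[
b+b^{q^2}=\frac{M_1^2}{(M_1+T_1)^{q+1}}+\frac{M_2^2}{(M_2+T_2)^{q+1}}\in\GF{q}
\]
(coming from the second equation of~\eqref{aux_eq0} together with~\eqref{ztd}), is the relation I would carry through in its place. Splitting according to whether $M_1M_2=0$ or $M_1M_2\neq0$, and in the latter branch according to whether $z_1=z_2$ (so $M_{12}=0$), $z_2=z_1^{-1}$ (so $M_{11}=0\neq M_{12}$), or $M_1\neq M_2$, and then introducing the ratios $\alpha=t_2/t_1\in\mu_{q^2+1}^\star$, $\beta=z_2/z_1\in\mu_{q-1}$ and auxiliary parameters $e_1,e_2\in\GF{q}^\star$, each branch should collapse to counting solutions in $\GF{q}$ of a small system of conditions of the shape $\mathbf{Tr}_1^n(x)=0$, $\mathbf{Tr}_1^n(\gamma_1x)=0$, $\mathbf{Tr}_1^n(\gamma_2x)=1$, whose sizes are $q/8$ by Lemma~\ref{Scard_ext} (and $q/4$ by Lemma~\ref{Scard} in the shorter branches) once the degenerate loci $\gamma_1,\gamma_2,\gamma_1+\gamma_2\in\{0,1\}$ and the exceptional values of $b$ triggering them (such as $b+b^q=1$ or $b^{-1}+b^{-q}=1$) are isolated, exactly as in the treatment of $b\in\GF{q^2}\setminus\{\mu_{q+1}\cup\GF{q}\}$. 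This yields $\sum_{c\in\mu_{q+1}^\star}\tilde{\beta}_F(1,b,c)=q^2-2q$; and even if the case analysis is only pushed to the bound $\sum_{c\in\mu_{q+1}^\star}\tilde{\beta}_F(1,b,c)\le q^2-2q$, equality follows from the global count below.

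Indeed, by Proposition~\ref{f0},
\[
\sum_{b}\ \sum_{c\in\mu_{q+1}^\star}\tilde{\beta}_F(1,b,c)=(q^3-q^2)\,q\,(q-2)=(q^3-q^2)(q^2-2q),
\]
where the outer sum ranges over $\GF{q^4}\setminus\{\mathfrak{S}_2\cup\mathfrak{S}_2'\cup\GF{q^2}\}=\{b\in\GF{q^4}\setminus\GF{q^2}\mid\mathbf{Tr}_n^{4n}(b)=0\}$, a set of cardinality $q^3-q^2$ (the kernel of $\mathbf{Tr}_n^{4n}$ has $q^3$ elements and contains $\GF{q^2}$). Thus the average of $\sum_{c\in\mu_{q+1}^\star}\tilde{\beta}_F(1,b,c)$ over this set is exactly $q^2-2q$, and combined with the upper bound this forces $\sum_{c\in\mu_{q+1}^\star}\tilde{\beta}_F(1,b,c)=q^2-2q$, hence $\beta_F(1,b)=q^2-2q$, for every $b$ in the set.

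The main obstacle is the case analysis of the second paragraph. Losing the hypothesis $M_1T_2=M_2T_1$ means the element $w:=\tfrac{1}{1+z_1t_1}+\tfrac{1}{1+z_2t_2}$ no longer lies in $\GF{q^2}$, so $w^{2q^2}$ cannot be converted into a tidy rational function of the $M_i,T_i$; one has to work instead through $w+w^{q^2}\in\GF{q^2}$ and the trace identity displayed above, and then keep careful track of the numerous degenerate sub-cases in which two of the forms $\gamma_1,\gamma_2,\gamma_1+\gamma_2$ coincide or vanish, together with the exceptional $b$'s producing them. It is precisely in order not to have to resolve all of this into a single closed count that Proposition~\ref{f0} is placed beforehand: an inequality from the case analysis plus the exact cardinality $q^3-q^2$ of the relevant set suffices to conclude.
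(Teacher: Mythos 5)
Your skeleton is sound and its endgame coincides with the paper's: reduce $\beta_F(1,b)$ to $\sum_{c\in\mu_{q+1}^\star}\tilde{\beta}_F(1,b,c)$ (Propositions~\ref{12b} and~\ref{sum2} do kill the other two terms for this range of $b$), establish a per-$b$ bound, and squeeze against the global count of Proposition~\ref{f0} over the set of cardinality $q^3-q^2$. Your squeeze runs the inequality the other way round from the paper (you want a uniform upper bound $\le q^2-2q$ and compare with the average; the paper proves a uniform lower bound $\ge q(q-2)$ and compares with the total), but both directions are logically valid. The problem is that the per-$b$ bound is the entire technical content of the lemma, and your proposal does not actually produce it in either direction. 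Your second paragraph only asserts that the branch analysis of the $\GF{q^2}$ subsections ``should collapse'' to $S_{Q,\gamma_1,\gamma_2,i,j,k}$-type counts, while your own closing paragraph correctly identifies why it does not: once $b\notin\GF{q^2}$, Proposition~\ref{q^2cond} removes the relation $M_1T_2=M_2T_1$, the factorization of $\bigl(\tfrac{1}{1+z_1t_1}+\tfrac{1}{1+z_2t_2}\bigr)^{2q^2}$ as a rational function of $M_i,T_i$ disappears, and the surrogate identity for $b+b^{q^2}$ from Proposition~\ref{tr0} retains only the $\GF{q^2}$-trace of the first equation of~\eqref{aux_eq0}, not the equation itself. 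No branch structure analogous to $M_{12}=0$, $z_2=z_1^{-1}$, etc.\ is derived from what remains, so there is no route from your sketch to either bound.

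The paper closes this gap with a genuinely different parametrization that your proposal does not anticipate. It fixes $e\in\mu_{q+1}^\star$ (equivalently a value of $c$), sets $u:=(be)^{1/(2q^2)}$ so that the first equation of~\eqref{aux_eq0} becomes the single relation $u=\tfrac{1}{1+z_1t_1}+\tfrac{1}{1+z_2t_2}$, solves this explicitly for $(z_2,t_2)$ in terms of $(z_1,t_1)$ and $u$ via~\eqref{z2t2}, and encodes the membership constraints $z_2\in\mu_{q-1}$, $t_2\in\mu_{q^2+1}$ together with the second equation as the system~\eqref{Beq} in $(z_1,t_1)$ alone. Eliminating $z_1$ then reduces everything to one trace equation~\eqref{feq} over $\mu_{q+1}$, which has exactly $q/2$ solutions whenever an auxiliary quantity $\gamma=\gamma(u)$ is nonzero; a separate computation shows $\gamma=0$ for at most two values of $e$, giving exactly $q$ solutions for at least $q-2$ values of $e$ and hence the lower bound $q(q-2)$ that feeds the squeeze. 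Without this (or some equally explicit substitute), your argument establishes only that the average of $\sum_{c\in\mu_{q+1}^\star}\tilde{\beta}_F(1,b,c)$ over the relevant $b$'s is $q^2-2q$, which does not determine the individual values.
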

\begin{proof} By exploiting~\eqref{ztd}, the equation~\eqref{aux_eq0} can be rewritten as
\begin{equation}\label{aux_eqq}
\begin{cases}b=(M_1+T_1)^{1-q}\left(\frac{1}{1+z_1t_1}+\frac{1}{1+z_2t_2}\right)^{2q^2}\\
(M_1+T_1)^{q-1}=(M_2+T_2)^{q-1},
\end{cases}
\end{equation}
For any fixed $e\in \mu_{q+1}^\star$, set
\begin{equation}\label{u}
u:=(be)^{\frac{1}{2q^2}}
\end{equation}
and we will consider the solutions $((z_1,t_1),(z_2,t_2))\in
\left(\mu_{q-1} \times\mu_{q^2+1}^\star\right)^2$ of equation
\begin{equation}\label{c}
u=\frac{1}{1+z_1t_1}+\frac{1}{1+z_2t_2}.
\end{equation}

First, let us show that if $((z_1,t_1),(z_2,t_2))\in \left(\mu_{q-1}
\times\mu_{q^2+1}^\star\right)^2$ is a solution to
Equation~\eqref{c}, then it holds
\begin{equation}\label{e}
e=(M_1+T_1)^{q-1}=(M_2+T_2)^{q-1}.
\end{equation}
In fact, if we set $e_1:=(M_1+T_1)^{q-1}/e$, then $e_1\in \mu_{q+1}$
and it holds $b=\frac{1}{e}u^{2q^2}=e_1b',$ where
$b'=(M_1+T_1)^{1-q}\left(\frac{1}{1+z_1t_1}+\frac{1}{1+z_2t_2}\right)^{2q^2}$.
It was shown in the proof of Proposition~\ref{tr0} that
$\mathbf{Tr}_n^{4n}(b')=0$. Since
$\mathbf{Tr}_n^{4n}(b)=e_1\mathbf{Tr}_n^{4n}(b')+\left(e_1+\frac{1}{e_1}\right)(b'+b'^{q^2})^{q}=\left(e_1+\frac{1}{e_1}\right)(b'+b'^{q^2})^{q}=0$
and $b'+b'^{q^2}=e_1(b+b^{q^2})\neq0$, it holds $e_1+\frac{1}{e_1}$
i.e., $e_1=1.$ Also note
\begin{equation}\label{c-cond}
\mathbf{Tr}_n^{4n}(u)=e\mathbf{Tr}_n^{4n}(b)+\left(e+\frac{1}{e}\right)(b+b^{q^2})^{q}=\left(e+\frac{1}{e}\right)(b+b^{q^2})^{q}\neq0.
\end{equation}

Now, if $((z_1,t_1),(z_2,t_2))\in \left(\mu_{q-1}
\times\mu_{q^2+1}^\star\right)^2$ is a solution to
Equation~\eqref{c}, then
$z_2t_2=\frac{(u+1)(1+z_1t_1)+1}{u(1+z_1t_1)+1}=\frac{(u+1)z_1t_1+u}{uz_1t_1+u+1}$
and therefore we have
\begin{equation}\label{z2t2}
z_2=\left(\frac{S_2}{S_1}\right)^{\frac{q^2+1}{2}},
t_2=\left(\frac{S_1}{S_2}\right)^{\frac{q^2-1}{2}},
\end{equation}
where $v=u+1, S_1=uz_1t_1+v, S_2=vz_1t_1+u.$ Since
$M_2+T_2=\left(\frac{S_2}{S_1}\right)^{\frac{q^2+1}{2}}+\left(\frac{S_1}{S_2}\right)^{\frac{q^2+1}{2}}+\left(\frac{S_1}{S_2}\right)^{\frac{q^2-1}{2}}+\left(\frac{S_2}{S_1}\right)^{\frac{q^2-1}{2}}
=\left(\frac{S_1^2+S_2^2}{S_1S_2}\right)^{\frac{q^2+1}{2}}=\frac{(1+z_1t_1)^{q^2+1}}{(S_1S_2)^{\frac{q^2+1}{2}}}=(M_1+T_1)\cdot
\frac{z_1}{(S_1S_2)^{\frac{q^2+1}{2}}}$, the equality~\eqref{e}
gives $(S_1S_2)^l=1$, and the condition $z_2\in \GF{q}$ gives
$\left(\frac{S_2}{S_1}\right)^l=1.$ Therefore, we get
$S_1^l=S_2^l=1,$ i.e.,
\begin{equation}\label{Beq}
\begin{cases}
(uz_1t_1+v)^{q(q^2+1)}=(uz_1t_1+v)^{q^2+1}\\
(vz_1t_1+u)^{q(q^2+1)}=(vz_1t_1+u)^{q^2+1}.
\end{cases}
\end{equation}

Below, we shall prove that Equation~\eqref{Beq} has exactly $q$
solutions $(z_1,t_1)\in \mu_{q-1}\times\mu_{q^2+1}^\star$ for some
$q-2$ values of $u$ and has no solution for other 2 values of $u$.
Since
\begin{align*}
&(uz_1t_1+v)^{q^2+1}=(u^{q^2}z_1t_1^{-1}+v^{q^2})(uz_1t_1+v)\\
&=z_1(u^{q^2+1}z_1+v^{q^2+1}z_1^{-1}+uv^{q^2}t_1+u^{q^2}vt_1^{-1}),
\end{align*}
Equation~\eqref{Beq} can be rewritten as
\begin{equation*}
\begin{cases}
(u^{q^2+1}+u^{q^3+q})z_1+(v^{q^2+1}+v^{q^3+q})z_1^{-1}+uv^{q^2}t_1+u^{q^2}vt_1^{-1}+u^qv^{q^3}t_1^q+u^{q^3}v^qt_1^{-q}=0\\
(v^{q^2+1}+v^{q^3+q})z_1+(u^{q^2+1}+u^{q^3+q})z_1^{-1}+vu^{q^2}t_1+v^{q^2}ut_1^{-1}+v^qu^{q^3}t_1^q+v^{q^3}u^qt_1^{-q}=0,
\end{cases}
\end{equation*}
or,
\begin{equation}\label{Beqq}
\begin{cases}
fz_1+gz_1^{-1}+\mathbf{Tr}_n^{4n}(h_1t_1)=0\\
gz_1+fz_1^{-1}+\mathbf{Tr}_n^{4n}(h_2t_1)=0,
\end{cases}
\end{equation}
where $f=u^{q^2+1}+u^{q^3+q},
g=v^{q^2+1}+v^{q^3+q}=f+\mathbf{Tr}_n^{4n}(u),
h_1=uv^{q^2}=u^{q^2+1}+u, h_2=vu^{q^2}=u^{q^2+1}+u^{q^2}$. Note
$f,g\in \GF{q}$ and
\begin{align*}
gh_2+fh_1&=(u^{q^2+1}+u^{q^3+q}+\mathbf{Tr}_n^{4n}(u))(u^{q^2+1}+u^{q^2})+(u^{q^2+1}+u^{q^3+q})(u^{q^2+1}+u)\\
&=u^{q^2+1}\mathbf{Tr}_n^{4n}(u)+u^{q^2}\mathbf{Tr}_n^{4n}(u)+u^{2q^2+1}+u^{q^3+q^2+q}+u^{q^2+2}+u^{q^3+q+1}\\
&=u^{q^2+q+1}+u^{q^3+q^2+1}+u^{q^3+q^2+q}+u^{q^3+q+1}+u^{q^2}\mathbf{Tr}_n^{4n}(u)\\
&=\mathbf{Tr}_n^{4n}(u^{q^2+q+1})+u^{q^2}\mathbf{Tr}_n^{4n}(u).
\end{align*}
Hence, eliminating the $z_1^{-1}$ term from \eqref{Beqq} gives
\begin{equation}\label{z1}
z_1=\frac{\mathbf{Tr}_n^{4n}(at_1)}{\mathbf{Tr}_n^{4n}(u)^2},
\end{equation}
where
$$a:=\mathbf{Tr}_n^{4n}(u^{q^2+q+1})+u^{q^2}\mathbf{Tr}_n^{4n}(u).$$

Substituting \eqref{z1} into
$z_1+z_1^{-1}=\frac{\mathbf{Tr}_n^{4n}((u+u^{q^2})t_1)}{\mathbf{Tr}_n^{4n}(u)}$
which is obtained by adding the two equations of \eqref{Beqq} yields

\begin{equation}\label{t1_eq}
\mathbf{Tr}_n^{4n}(at_1)^2+\mathbf{Tr}_n^{4n}(u)^4+\mathbf{Tr}_n^{4n}(ht_1)\mathbf{Tr}_n^{4n}(at_1)\mathbf{Tr}_n^{4n}(u)=0,
\end{equation}
where $h=u+u^{q^2}.$

Since
\begin{align*}
&\mathbf{Tr}_n^{4n}(ht_1)\mathbf{Tr}_n^{4n}(at_1)=(ht_1+h^qt_1^q+h^{q^2}t_1^{-1}+h^{q^3}t_1^{-q})
(at_1+a^qt_1^q+a^{q^2}t_1^{-1}+a^{q^3}t_1^{-q})\\
&=\mathbf{Tr}_n^{4n}(hat_1^2)
+\mathbf{Tr}_n^{4n}((ah^q+a^qh)t_1^{q+1})+\mathbf{Tr}_n^{4n}(ha^{q^2})
\end{align*}
and
$$\mathbf{Tr}_n^{4n}(ha^{q^2})=\mathbf{Tr}_n^{4n}\left((u^{q^2}+u)(\mathbf{Tr}_n^{4n}(u^{q^2+q+1})+u\mathbf{Tr}_n^{4n}(u))\right)
=\mathbf{Tr}_n^{4n}(u)^3,$$ Equation~\eqref{t1_eq} transforms into
\begin{equation}\label{t1_eqq}
\mathbf{Tr}_n^{4n}\left(\left(a+\mathbf{Tr}_n^{4n}(u)h\right)at_1^2+\mathbf{Tr}_n^{4n}(u)(ah^q+a^qh)t_1^{q+1}\right)=0.
\end{equation}
Here, $a+\mathbf{Tr}_n^{4n}(u)h=a^{q^2}$, and
\begin{align*}
&\mathbf{Tr}_n^{4n}(u)(ah^q+a^qh)=\mathbf{Tr}_n^{4n}(u)\cdot((\mathbf{Tr}_n^{4n}(u^{q^2+q+1})+u^{q^2}\mathbf{Tr}_n^{4n}(u))
(u^{q^3}+u^q)\\
&+(\mathbf{Tr}_n^{4n}(u^{q^2+q+1})+u^{q^3}\mathbf{Tr}_n^{4n}(u))(u^{q^2}+u))\\
&=\mathbf{Tr}_n^{4n}(u)^2\left(\mathbf{Tr}_n^{4n}(u^{q^2+q+1})+u^{q^3+1}+u^{q^2+q}\right)
=a^{q^2+q}+a^{q^3+1},\end{align*}  and therefore, we get
\begin{equation}\label{Teq}
\mathbf{Tr}_n^{4n}\left(a^{q^2+1}t_1^2+(a^{q^2+q}+a^{q^3+1})t_1^{q+1}\right)=0.
\end{equation}
Notice that both coefficients $a^{q^2+1}$ and $a^{q^2+q}+a^{q^3+1}$
lye in $\GF{q^2}$. Therefore, letting $T_i=t_1^{i}+t_1^{-i}$ for
$i\geq 1$, Equality~\eqref{Teq} is restated as
\begin{equation}\label{Teqq}
a^{q^2+1}T_1^2+a^{q^3+q}T_1^{2q}+(a^{q^2+q}+a^{q^3+1})T_{q+1}+(a^{q^3+q^2}+a^{q+1})T_{q-1}=0.
\end{equation}
 By using $T_{q+1}+T_{q-1}=T_1^{q+1}$ and
$T_{q-1}=T_1^{q+1}\mathbf{Tr}_1^n\left(\frac{1}{T_1}\right)^2$ which
were proved in \cite{KM22-1} (in fact, $T_i$ is Dickson polynomials
and the second equality was initially shown in Lemma 2.1 of
\cite{CM95}), Equation~\eqref{Teqq} again transforms into
$$a^{q^2+1}T_1^2+a^{q^3+q}T_1^{2q}+(a^{q^2+q}+a^{q^3+1})T_1^{q+1}+\mathbf{Tr}_n^{4n}\left(a^{q+1}\right)T_1^{q+1}\mathbf{Tr}_1^n\left(\frac{1}{T_1}\right)^2=0,$$
or, with $X=\frac{1}{T_1}$,
\begin{equation}\label{Xeq}
\alpha X^{q-1}+\alpha^qX^{1-q}+\beta+\mathbf{Tr}_1^n(X)^2=0,
\end{equation}
where
$\alpha=\frac{a^{q^2+1}}{\mathbf{Tr}_n^{4n}\left(a^{q+1}\right)},
\beta=\frac{a^{q^2+q}+a^{q^3+1}}{\mathbf{Tr}_n^{4n}\left(a^{q+1}\right)}$.

Now, we consider the number of the solutions $x\in\GF{q^2}$ to
Equation~\eqref{Xeq}. Note that if $x\in\GF{q^2}$ is a solution to
Equation~\eqref{Xeq}, then
$\mathbf{Tr}_1^{2n}(x)=\mathbf{Tr}_1^{n}(x)+\mathbf{Tr}_1^{n}(x)^q=
\beta+\beta^q=1$.

 We can set $x:=y\lambda$ where $y\in \GF{q}^\star$
and $\lambda\in \mu_{q+1}$. Note $y=x^{\frac{q+1}{2}}$ and
$\lambda=x^{\frac{1-q}{2}}$. Since
$\mathbf{Tr}_1^n(x)+\mathbf{Tr}_1^n(x)^2=x+x^q=y(\lambda+\lambda^{-1})$,
from $\eqref{Xeq}+\eqref{Xeq}^2$ it follows
$$\alpha\lambda^{-2}+\alpha^q\lambda^2+\beta+\left(\alpha\lambda^{-2}+\alpha^q\lambda^2+\beta\right)^2+y^2(\lambda^2+\lambda^{-2})=0,$$
i.e.,
\begin{equation}\label{yeq}
y^2=\frac{\alpha\lambda^{-2}+\alpha^q\lambda^2+\beta+\left(\alpha\lambda^{-2}+\alpha^q\lambda^2+\beta\right)^2}{\lambda^2+\lambda^{-2}}.
\end{equation}

Then, $\mathbf{Tr}_1^{n}(x)^2=\mathbf{Tr}_1^{n}(y^2\lambda^2)=
\mathbf{Tr}_1^{n}\left(\frac{\alpha\lambda^{2}+\alpha^q\lambda^6+\beta\lambda^{4}+\left(\alpha+\alpha^q\lambda^4+\beta\lambda^{2}\right)^2}{\lambda^{4}+1}\right)$.
Letting $\omega=\lambda^{2}$, Equation~\eqref{Xeq} transforms into
\begin{equation}\label{oeq}
\alpha\omega^{q}+\alpha^q\omega+\beta+\mathbf{Tr}_1^{n}\left(\frac{\alpha^{2q}\omega^4+\alpha^q\omega^3+(\beta+\beta^2)\omega^2+\alpha\omega+\alpha^2}{1+\omega^2}\right)=0,
\end{equation}
an equation in terms of the variable $\omega\in \mu_{q+1}$. Since
\begin{align*}
&\mathbf{Tr}_1^{n}\left(\frac{\alpha^{2q}\omega^4+\alpha^q\omega^3+(\beta+\beta^2)\omega^2+\alpha\omega+\alpha^2}{1+\omega^2}\right)\\
&=\mathbf{Tr}_1^{n}\left(\alpha^{2q}\omega^2+\alpha^q\omega+\alpha^{2q}+\beta+\beta^2+\frac{(\alpha+\alpha^q)\omega+\alpha^2+\alpha^{2q}+\beta+\beta^2}{1+\omega^2}\right)\\
&=\alpha^q\omega+\alpha\omega^q+\beta+\beta^q+\mathbf{Tr}_1^{n}(\alpha^{2q})+\mathbf{Tr}_1^{n}\left(\frac{(\alpha+\alpha^q)\omega+\alpha^2+\alpha^{2q}+\beta+\beta^2}{1+\omega^2}\right)\\
&=\alpha^q\omega+\alpha\omega^q+\beta+\beta^q+\mathbf{Tr}_1^{n}(\alpha^{2q})+\mathbf{Tr}_1^{n}\left(\frac{\alpha+\alpha^q}{1+\omega}+\frac{\alpha^2+\alpha^{2q}}{1+\omega^2}+\frac{\alpha+\alpha^q+\beta+\beta^2}{1+\omega^2}\right)\\
&=\alpha^q\omega+\alpha\omega^q+\beta+\beta^q+\mathbf{Tr}_1^{n}(\alpha^{2q})+\frac{\alpha+\alpha^q}{1+\omega}+\frac{\alpha+\alpha^q}{1+\omega^{-1}}+\mathbf{Tr}_1^{n}\left(\frac{\alpha+\alpha^{q}+\beta+\beta^2}{1+\omega^2}\right)\\
&=\alpha^q\omega+\alpha\omega^q+\beta+\beta^q+\mathbf{Tr}_1^{n}(\alpha^{2q})+\alpha+\alpha^q+\mathbf{Tr}_1^{n}\left(\frac{\alpha+\alpha^{q}+\beta+\beta^2}{1+\omega^2}\right),
\end{align*}
from~\eqref{oeq} it follows
\begin{equation}\label{feq}
\mathbf{Tr}_1^{n}\left(\frac{\gamma}{1+\omega^2}\right)=\delta,
\end{equation}
where $\gamma=\alpha+\alpha^{q}+\beta+\beta^2$ and
$\delta=\beta^q+\mathbf{Tr}_1^{n}(\alpha^{2q})+\alpha+\alpha^q$.

It is easy to check $\delta+\delta^2=\gamma\in \GF{q}$ as
$\beta+\beta^q=1$. Further, for any $\omega\in \mu_{q+1}^\star$,
$\mathbf{Tr}_1^{n}\left(\frac{\gamma}{1+\omega^2}\right)+\mathbf{Tr}_1^{n}\left(\frac{\gamma}{1+\omega^2}\right)^2
=\frac{\gamma}{1+\omega^2}+\left(\frac{\gamma}{1+\omega^2}\right)^q
=\frac{\gamma}{1+\omega^2}+\frac{\gamma}{1+\omega^{-2}}=\gamma=\delta+\delta^2$.
Therefore,  for any $\omega\in \mu_{q+1}^\star$,
$\mathbf{Tr}_1^{n}\left(\frac{\gamma}{1+\omega^2}\right)=\delta$ or
$\mathbf{Tr}_1^{n}\left(\frac{\gamma}{1+\omega^2}\right)=\delta+1$.

On the other hand, for any fixed element $\omega_0\in \mu_{q+1}$,
when $\omega$ runs through $\mu_{q+1}^\star$, the elements
$\frac{1}{1+\omega^2}+\frac{1}{1+\omega_0^2}$ run through $\GF{q}$,
 and therefore, when $\gamma\neq 0$,
$\mathbf{Tr}_1^{n}\left(\frac{\gamma}{1+\omega^2}\right)$ takes each
of two values
$\mathbf{Tr}_1^{n}\left(\frac{\gamma}{1+\omega_0^2}\right)$ and
$\mathbf{Tr}_1^{n}\left(\frac{\gamma}{1+\omega_0^2}\right)+1$
exactly $\frac{q}{2}$ times respectively.

Hence, when $\gamma\neq 0$, Equation~\eqref{feq} has exactly
$\frac{q}{2}$ solutions in $\mu_{q+1}^{\star}$ and
Equation~\eqref{oeq} so does. Consequently, with \eqref{yeq},
Equation~\eqref{Xeq} also has $\frac{q}{2}$ solutions in $\GF{q^2}$,
and subsequently Equation~\eqref{Teq} has $2*\frac{q}{2}=q$
solutions in $\mu_{q^2+1}^\star$. This shows that
Equation~\eqref{Beq}, with \eqref{z1}, has $q$ solutions
$(z_1,t_1)\in \GF{q}^\star\times\mu_{q^2+1}^\star$,  and after all,
with \eqref{z2t2}, Equation~\eqref{c} has exactly $q$ solutions
$((z_1,t_1),(z_2,t_2))\in \left(\mu_{q-1}
\times\mu_{q^2+1}^\star\right)^2$.

The condition $\gamma=0$ is equivalent to
$$(a^{q^2+1}+a^{q^3+q}+a^{q^2+q}+a^{q^3+1})\mathbf{Tr}_n^{4n}\left(a^{q+1}\right)=(a^{q^2+q}+a^{q^3+1})^2.$$
By routine computations, one can check this is equivalent to
$$(u^{q^2+1}+u^{q^3+q}+u^{q^2+q}+u^{q^3+1})\mathbf{Tr}_n^{4n}(u^{q+1})=\left(\mathbf{Tr}_n^{4n}(u^{q^2+q+1})+u^{q^2+q}+u^{q^3+1}\right)^2,$$
or,
$$(u^{q^2+1}+u^{q^3+q})\mathbf{Tr}_n^{4n}(u^{q+1})+(u^{q+1}+u^{q^3+q^2})^{q+1}=\left(\mathbf{Tr}_n^{4n}(u^{q^2+q+1})\right)^2,$$
or,
$$\mathbf{Tr}_n^{4n}\left(u^{q^2+q+1}(u+u^q+u^{q^2}+u^{q^2+q+1})\right)=0.$$
Substituting $u=(be)^{\frac{1}{2q^2}}$ to here gives
$$\mathbf{Tr}_n^{4n}\left(b^{q^2+q+1}[b^q+e^2(b+b^{q^2}+b^{q^2+q+1})]\right)=0,$$
i.e.
\begin{equation}\label{ff}
\mathbf{Tr}_n^{4n}\left(b^{q^2+2q+1}\right)+e^2(L+L^{q^2})+e^{-2}(L+L^{q^2})^q=0,
\end{equation}
where $L=b^{q^2+q+2}+b^{2q^2+q+1}+b^{2q^2+2q+2}.$

By using $\mathbf{Tr}_n^{4n}(b)=0$, one can derive
$L+L^{q^2}=b^{q^2+1}(b+b^{q^2})^{2}(1+b^{q^2+1}).$ If $b^{q^2+1}=1$,
then $ \mathbf{Tr}_1^{2n}\left(\frac{1}{b+b^{-1}}\right)=1$ and so
$b+b^{-1}\notin \GF{q}$. Then,
$\mathbf{Tr}_n^{4n}(b)=b+b^{-1}+(b+b^{-1})^q\neq 0$ which is a
contradiction. Hence, $L+L^{q^2}\neq0$ and therefore there are at
most two $e$'s in $\mu_{q+1}^\star$ such that satisfy \eqref{ff},
i.e., $\gamma=0.$ Thus, we get
$$\sum_{c\in \mu_{q+1}^\star}\tilde{\beta}_F(1,b,c)\geq
 q\cdot (q-2)$$ and thus
$$\sum_{b\in\GF{q^4}\setminus \{\mathfrak{S}_2 \cup \mathfrak{S}_2'\cup
\GF{q^2}\}}\sum_{c\in \mu_{q+1}^\star}\tilde{\beta}_F(1,b,c)\geq
(q^3-q^2)\cdot q\cdot (q-2).$$ Proposition~\ref{f0} shows that this
is indeed an equality. Therefore also $$\sum_{c\in
\mu_{q+1}^\star}\tilde{\beta}_F(1,b,c)=
 q\cdot (q-2)$$ and with Proposition~\ref{12b} the lemma is concluded.\qed

\begin{remark} Above proof shows that when $\gamma=0$ Equation~\eqref{feq} has
no solution and therefore $\delta\neq0.$ If $\gamma=\delta=0$, then
$\beta^{q}+\mathbf{Tr}_1^{n}(\alpha^{2q})\overset{\delta=0}{=}\alpha+\alpha^q\overset{\gamma=0}{=}\beta+\beta^2\overset{\beta+\beta^q=1}{=}\beta^q+\beta^{2q}$
and so $\beta^{2q}+\mathbf{Tr}_1^{n}(\alpha^{2q})=0,$ i.e.,
$\beta=\mathbf{Tr}_1^{n}(\alpha)$. This shows that from $\gamma=0$
it follows $\beta=\mathbf{Tr}_1^{n}(\alpha)+1$. \qed
\end{remark}

\end{proof}

\section{Conclusions}
Using algebraic techniques  and finer manipulations over finite fields, we have completely determined the boomerang spectrum of the power
permutation $F(X)=X^{2^{3n}+2^{2n}+2^{n}-1}$ over  the finite field $\GF{2^{4n}}$ of order $2^{4n}$, showing
the boomerang uniformity of that power permutation is
$3(2^{2n}-2^n)$. Theorem  \ref{maintheorem} stated the main technical result of the paper.
More importantly, for any value in the boomerang
spectrum, the set of $b$'s giving this value was explicitly
determined.   Notably,   we have  consequently extracted optimal functions $F$ 
over the large subset  $\mathfrak{S}_2$ over  $\GF{2^{4n}}$,
which corresponds in the framework of (symmetric) cryptography to S-boxes with the optimal value of its
boomerang uniformity $\beta_F(1,b)$ ($b\in\mathfrak{S}_2$), that is, of maximum resistance against boomerang attacks.

\end{document}